\DeclareMathOperator{\E}{E}
\begin{document}

\title{Competition among Parallel Contests}

\author{
Xiaotie Deng  \inst{1} \and
Ningyuan Li \inst{1} \and
Weian Li \inst{1} \and
Qi Qi \inst{2}
}

\institute{Center on Frontiers of Computing Studies, School of Computer Science, Peking University, Beijing, China\\
\email{\{xiaotie, liningyuan, weian\_li\}@pku.edu.cn} \and
Gaoling School of Artificial Intelligence, Renmin University of China, Beijing, China\\
\email{qi.qi@ruc.edu.cn}
}

\maketitle

\begin{abstract}
We investigate the model of multiple contests held in parallel, where each contestant selects one contest to join and each contest designer decides the prize structure to compete for the participation of contestants.
We first analyze the strategic behaviors of contestants and completely characterize the symmetric Bayesian Nash equilibrium. 
As for the strategies of contest designers, when other designers' strategies are known, we show that computing the best response is NP-hard and propose a fully polynomial time approximation scheme (FPTAS) to output the $\epsilon$-approximate best response. When other designers' strategies are unknown, we provide a worst case analysis on one designer's strategy. We give an upper bound on the utility of any strategy and propose a method to construct a strategy whose utility can guarantee a constant ratio of this upper bound in the worst case.
\end{abstract}

\keywords{Competition, Parallel Contests, Equilibrium Behavior, Best Response, Safety Level.}

\setcounter{footnote}{0}
\section{Introduction}
\label{section:intro}

Contest plays an irreplaceable role in today's life. For example, the innovation competitions on the online platforms, the reward mechanism within the company, and even the auctions can be viewed as contests in a sense. Because of this, the contest depicts a scene in which many players compete for several designed prizes, capturing many realistic game-theoretical settings involving competition, and is an important part of mechanism design theory, which has attracted the attention of many researchers from the past to the present. So far, most of the research literature in contest theory has focused on the setting of a single contest and aimed to design the rewarding policy to achieve some specific goals. For example, the initial paper \cite{MS01} aims to maximize the total output (or total effort) of the contestants. After that, other variants of single contests are also well studied.

With the emergence of crowdsourcing competitions, contests are becoming increasingly popular. More and more contests are run in parallel nowadays. For example, Amazon’s Mechanical Turk, an online crowdsourcing platform, has released thousands of outsourced tasks for individuals (in fact, each is a contest). Similarly, TopCoder holds about 6000 software contests every year, and practitioners compete for about \$10000 in awards. In addition, take another situation that our researchers are familiar with as an example. When we finish a paper, we will choose one conference from several conferences to submit.  Each conference can be considered as a contest and decides which papers can be accepted. From the author's perspective, they need to balance the chance of acceptance and the reputation of the target conferences. From the perspective of conference organizers, they think about how to set the acceptance criteria to attract high-quality papers. 

Compared with the mature results of single contest, the research on multiple contests held at the same time has received less attention. A recent survey on contest theory \cite{S20} mentioned that examining the economy of competitions among contests faces a series of challenges, especially in exploring the behavior of contest designers. In a single contest, we only need to consider the strategic behaviors of the contestants. While in the parallel contests setting, how the contestants choose the most suitable contest, and how the contest designer competes with other designers, including setting the reward structure and ranking policy should be both taken into consideration. 

In this paper, we consider the competition among multiple contests. Each contest designer sets its own prize structure within its budget. As in many practical applications, in order to prevent participants from being distracted, or unfairness, the number of contests they can participate in is limited. For instance, the Gates Foundation allows submission only to a single contest, the conferences allow submission only to a single conference, etc. Indeed, most of the literature on multiple contests assumes that each contestant can only participant in one contest (e.g., Azmat and M\"oller \cite{AM09}). In our model,  we also allow each contestant choose to participate in one contest only. Each designer's goal is to maximize the total value of the participants. Every contestant wants to maximize the prize she wins. There are two games in our model: One is among contestants who decide which contest to participate in; The other is among contest designers, who design prize structures to attract contestants.

\subsection{Main Contributions}
Our model offers some new features that contribute to the contest theory. First, while most previous studies have focused only on a single contest, we consider multiple contests. Second, in our model designers have different budgets and can design any reward policies. Third, the strategic behaviors of the contestants are also take into consideration.  These novelties increase the complexity of our analysis and require special technical attention. 

For the game of contestants, we concentrate on the contestants' behavior choosing the contest to join, and precisely characterize the symmetric Bayesian Nash equilibrium (BNE). 
For the game of contest designers, we analyze the strategies of contest designers (i.e., how to design the prize structure within the given budget) in two cases: the other designers' strategies are given or not given. When the other designers' strategies are given, we show that it is NP-hard to compute a best response by a reduction from a weighted knapsack problem, even in the case of two contests. Thus, we propose a fully polynomial time approximation scheme (FPTAS) to output an $\epsilon$-approximate best response. When the other designers' strategies are not known, we introduce the concept of ``safety level'' \cite{T02} to evaluate a prize policy, which is defined as the minimum utility this prize policy gets. %when other designer's policies are the worst case. 
An upper bound on the safety level of any policy is derived and taken as a benchmark, and we propose a construction of prize policy whose safety level can guarantee a constant ratio of this benchmark. 

Our technical contributions follow a framework of relaxation and rounding,  and can be summarized as follows:
\begin{itemize}
    \item \textbf{Relaxation.} We introduce two concepts: \emph{interim allocation function} and \emph{cumulative behavior}. Interim allocation function of one contest represents the expected prize one contestant with a certain skill can obtain in this contest. Cumulative behavior %is the integration of a contestant's mixed strategy, 
    is the probability that one contestant with a skill higher than a certain level joins one particular contest. We show that under a symmetric equilibrium of contestants, the utility of all contestants and all contest designers can be calculated from these two concepts. And we further characterize a symmetric equilibrium of contestants by deriving the condition for the cumulative behavior, which is essentially different from the traditional methodology which focuses on the first-order conditions and analyzes the mixed strategy directly.
    
    Our following techniques are mainly built on a relaxation step with the help of these two concepts. We relax the strategy space of a designer from the rank-by-skill allocation of $n$ prizes to an arbitrary interim allocation function on $[0,1]$. We call the former the original model and the latter the generalized model. We will show that this relaxation greatly eases the analysis of the game between the designers, for that we are allowed to freely adjust or construct the strategy of a designer as a function. \\
    %In this paper, we view cumulative behavior as the strategy of a contestant. Essentially different from the traditional methodology analyzing the mixed strategy directly, we study the characterization of cumulative behavior in equilibrium. Informally, in a symmetric BNE, cumulative equilibrium behavior acts as a bridge to link the mixed strategies and the interim allocation functions, such that the contestant selects the contests that bring the maximum expected prize with a positive probability. Finally, we propose a method deriving the symmetric BNE by cumulative equilibrium behavior.
    \item \textbf{Technical simplifications and tools.} Under the generalized model, we derive some useful technical results. First, we can merge several contests into one by properly combining their interim allocation functions, without affecting the equilibrium behavior of contestants. Therefore, from a certain designer's view, all other designers can be merged as one bundle, so we can focus on the situation of $m=2$ without loss of generality. We remark that this also implies that there is no need for a contest designer to consider partitioning her budget to hold multiple contests.
    %\lny{remove?, which answers some open questions in previous papers \cite{MS06,M12}. This shows that our relaxation step may be of independent interest}. 
    Second, we come up with a useful tool, \emph{Horizontal Stretching}, which adjusts the budget usage of an interim allocation function by stretching its graph horizontally.
    This allows a designer to imitate the strategy of her opponent, or to fit an existing strategy into her budget, guaranteeing a good utility. Those results play an important role in our analysis of the strategy of a designer.\\
    %Intuitively, horizontal stretching is spurred by the idea that compressing the prize mount and raising the prize level aims to compete for contestants with higher value, referring to an advantageous strategy. With this tool, we can find the strategies with elegant properties, no matter how we know others' strategies or not.
    \item \textbf{Rounding.} Finally we propose a ``rounding" step to convert a strategy in the generalized model back into the original model. We show that an interim allocation function can be approximated by a rank-by-skill prize structure which can be implemented in the original model, and the utility loss tends to 0 when $n$ is sufficiently large. Therefore, a contest designer can freely design her strategy in the generalized model without hesitation.
\end{itemize}

\subsection{Related Works}
Our paper contributes to the literature in the scope of economics and computer science, particularly in topics of multiple contests competition and single contest design. Ghosh and Kleinberg \cite{GK16} consider a model, in which the strategy of contestant is whether or not to participate in a contest, rather than the effort level of contestant. Similarly, our paper investigates a general setting of multiple contests where the decision of contestants is to select which contest to join. Besides, the concept of ``safety level'' used to analyze the strategy of a designer in our paper, is proposed by Tennenholtz \cite{T02}. Recently, Lavi and Shiran-Shvarzbard \cite{LS21} introduce ``safety level'' into the field of contest theory and analyze the safety level of a particular kind of contest policy. However, in their model, the proposed policy restricts contestants' behaviors. In our paper, we consider a totally different model and characterize the contestants' behavior equilibrium under any policy. %In addition, recent survey of contest theory \cite{S20} helps a lot be familiar with this field. 

\subsubsection{Multiple Parallel Contests}
The current research in this field mainly focus on the equilibrium of contestants, when the mechanisms of prize allocation in the contests are fixed. Initially, Azmat and M\"{o}ller \cite{AM09} consider the two identical contests setting and uses Tullock model\cite{T01} as the ranking policy. They investigate the choice equilibrium of players and the prize structure equilibrium in different goals. DiPalantino and Vojnovic \cite{DV09} study the case of $m$ auction-based crowdsourcing contests. They give the equilibrium results of players in symmetric setting and extend the results to several asymmetric settings. B{\"u}y{\"u}kboyac{\i} \cite{B16} compares the environment of two contests with a single contest and show that the former one performs well when the abilities of contestants differ greatly. Azmat and M\"{o}ller \cite{AM18} investigate environment of two types of contest, high and low, but the contests use the similar prize structure. At the same period, the contestant's behavior is characterized in Morgan et al. \cite{MSV18}, when contestant faces two contests with different prize structures: one has high return accompanied by high competition; the other has low return and low competition. Hafalir et al. \cite{HHKK18} model the enrollment of university as a contest and demonstrates that students with lower ability prefer to join multiple contests, but students with high ability would prefer to join only one. Recently, Juang et al. \cite{JSY20} consider the setting of two contests with different prize amount. They show that the contests with higher prize can attract more contestants, but the effort of one contestant in a contest is irrelevant to the number of participants in that contest.  Deng et al. \cite{DGLLL21} study competition among contests with contestants having a public symmetric skill and show that the optimal contest in the monopolistic setting is also the equilibrium strategy in the competitive setting. In contrast, our contestants have private skills, which makes the problem significantly different. Their results no longer hold here. Our results are much different from theirs. K{\"o}rpeo{\u{g}}lu et al. \cite{KK22} focus on that one contestant can join several contests, but the output in each contest is affected by an uncertainty variable. Under this model, they prove that the more contests one contestant participates in, the better the utility of contest organizer becomes. Birmpas et al. \cite{BKL22} study the multiple crowdsourcing reviews, and they design a reward scheme to keep existence of equilibrium and consider the efficiency of equilibria. Elkind et al. \cite{EGG22} perform the equilibrium analysis of multiple equal sharing contests.  %In addition, there are several less relative papers. \cite{MS06} thinks about how to split one contest into multiple contests to improve performance. \cite{M12} discusses the prize structures among a series of contests held by one organizer. 
% Previous papers usually care about the equilibria of contestants under certain prize structures, but our paper can depict the equilibrium behaviors of choice on contests under any prize structures.
The settings and problems studied in these works are generally different from ours, leading to fundamental differences in techniques and results.

\subsubsection{Single Contest} In the past decades, the research on single contest has become mature, which provides us with an abundance of techniques on analyzing the strategies of contest designers while facing the competition. A comprehensive survey can be found in \cite{C07}. Most papers in this field concentrate on the prize design under rank-by-output policy and the output is mainly decided by the ability and effort of contestants. In the initial period, Glazer and Hassin \cite{GH88} show the optimal contests for identical case and non-identical case, with respect to the ability of contestants, when the designer aims to maximize the total outputs. Moldovanu and Sela \cite{MS01} extend the above result and consider the case that the abilities of contestants are drawn from a publicly known distribution. They design the optimal contests for linear, concave, convex cost functions, respectively. Che and Gale \cite{KG03} investigate a procurement contests in symmetric and asymmetric cases and give the corresponding optimal contests. Archak and Sundararajan \cite{AS09} consider a crowdsourcing contest with a large number of participants and characterize the asymptotically optimal prize structure. Ghosh an Hummel \cite{GH12} aim to maximize a class of functions of the number of submissions and the qualities of outputs. %\cite{GK16} proposes the optimal prize structure for a general class of functions which are only increasing on the quality of submissions.  
Chawla et al. \cite{CHS19} propose the optimal crowdsourcing contest with the methodology of auction theory. Elkind et al. \cite{EGG21} study the threshold objective of designer, and come up with the optimal contests. A series of papers \cite{MS06,TX08,KC18} take the productivity of contestants into account and run the rank-by-output policy. %There are a lot of researchers concentrating on the Tullock contests. \cite{CR98} examines the contest performance affected by the different coefficient of Tullock model. Multiple equilibria in Tullock contest is studied in \cite{CS11}. \cite{LKTW15} introduces the Tullock model into crowdsourcing contests. In additon, \cite{CHS19,EGG21} consider the general prize policy which is only a function of output. 
Furthermore, many papers \cite{KG03,LSR17,LS18,LSA19,LSH20} discuss about other variants of single contest. 
In our paper, we do not consider the exertion of effort in existing papers and assume that designers can judge the skill of contestants and maximize the values from attracted contestants. 

As for our paper, we consider not only the equilibrium behaviors of contestants, under any prize structures of several contests, but also the design of strategies for contest designers under the competition among multiple contests.

\subsection{RoadMap} 
In Section \ref{section:pre}, we introduce our model and provide the definition of cumulative behavior and interim allocation function, and other necessary notions. In Section \ref{section:players_equilibrium}, the equilibrium behavior of contestants is characterized completely. In Section \ref{sec:relax and simp}, we relax the feasible strategy space of contest designers and propose some technical results. Based on the results in Section \ref{section:players_equilibrium} and \ref{sec:relax and simp}, in Section \ref{section:best_response}, we investigate the best response of one contest designer and design an FPTAS to output an approximate best response. In Section \ref{section:safetylevel}, we focus on the safety level of a first-mover strategy, and propose the construction of a strategy with a constant-competitive safety level. Furthermore, in Section \ref{section:implement}, we show a rounding technique to find a discrete prize structure in original feasible region with a little loss on designer's utility. In Section \ref{section: conclusion}, we give a summary of the whole paper and propose the directions for future work. Due to space limitations, we put most of the proofs in appendix.

\section{Model and Preliminary}
\label{section:pre}

In this section, we introduce our model formally. Assume that there are $m$ contests and $n$ contestants. To avoid ambiguity, we let $j \in [m] = \{1,2,\cdots, m\}$ and $i \in [n] = \{1,2,\cdots, n\}$ denote a contest and a contestant, respectively. 
%Each contest designer $j\in[m]$ has a publicly known budget $T_j$ for the total amount of the prizes. The prize structure of contest $j$ is $\vec{w}_j = (w_{j1}, w_{j2}, \cdots, w_{jn})$, representing the prize amounts of $n$ prizes, where $w_{j1}\geq w_{j2}\cdots\geq w_{jn}\geq 0$ and $\sum_{i=1}^n w_{ji}\leq T_j$\footnote{For ease of calculation, we also use the notation $t_j = T_j/n$ to represent the ``relative budget'' later.}.

Each contest designer $j\in[m]$ has a publicly known (average) budget constraint $t_j$, constraining that the total amount of the prizes in contest $j$ cannot exceed $nt_j$. The prize structure of contest $j$ is $\vec{w}_j = (w_{j1}, w_{j2}, \cdots, w_{jn})$, representing the prize amounts of $n$ prizes, where $w_{j1}\geq w_{j2}\cdots\geq w_{jn}\geq 0$ and $\sum_{i=1}^n w_{ji}\leq nt_j$. This notation of average budget $t_j$ is for convenience of future discussion, and we will refer to $t_j$ as designer $j$'s budget.

We call this the original model, where a designer's strategy is the amount of the $n$ prizes in the rank-by-skill allocation, to distinguish from the generalized model introduced later, where the strategy space of a contest designer is relaxed with the help of interim allocation function.
%\lny{define $\sum_{i=1}^n w_{ji}\leq nt_j$ so that interim allocation $\int_0^1x_j(h)dh=t_j$?}
%For simplicity we normalize $t_1+t_2$ to 1, and let $t_1=T,t_2=1-T$.

Each contestant $i\in[n]$ has a private skill $s_i$, which is identically and independently drawn from a publicly known distribution $G$. For simplicity we assume $G$ is a continuous distribution, then the cumulative distribution function (CDF) $G(s)$ is continuous. For convenience, we use the notation of \emph{quantile} to represent the skill. Specifically, given $s \sim G$, we define the quantile $q=1-G(s)$. If we exploit the quantile $q$ as the random variable, we can map it to the skill by the function $s(q)=G^{-1}(1-q)$. Note that $s(q)$ is a strictly decreasing function of $q$, and whatever the distribution $G$ is, $q$ always follows the $U[0,1]$ distribution. In the following parts, we use $q_i$ to represent the contestant $i$'s competitiveness, directly.

% which is decreasing in $s$. Note that $q$ follows the $U[0,1]$ distribution. Thus we use $q\sim U[0,1]$ as the random variable, and use  to represent the distribution $G$. Now we assume that each contestant $i$ has her private quantile $q_i$ iid drawn from $U[0,1]$, and her skill is $s_i=s(q_i)$. Since $s(q)$ is a stricly decreasing function, now we can compare the skills of any pair of contestants by their quantiles directly, regardless of the skill distribution $s(\cdot)$. In future discussion, we view $q_i$ as contestant $i$'s competitiveness, and no longer use $s$ and $G$.

Our model consists of three stages:
\begin{enumerate}
    \item[$\bullet$]In the first stage (Game of Designers, or GoD), all contest designers announce their prize structures $\{\vec{w}_{1}, \vec{w}_{2}$ $, \cdots, \vec{w}_{m}\}$ to the contestants.
    \item[$\bullet$] In the second stage (Game of Contestants, or GoC), given all prize structures of the contests, each contestant $i\in[n]$ decides which contest to participate in, denoted by $J_i \in [m]$. 
    
    We allow the contestants to take mixed strategies. The mixed strategy of contestant $i$ is defined as $\vec{\pi}_{i}(q_i)=(\pi_{i,1}(q_i),\pi_{i,2}(q_i),\cdots,\pi_{i,m}(q_i))$, where $\pi_{i,j}(q_i)$ is the probability that contestant $i$ participates in contest $j$ (i.e., takes the pure strategy $J_i=j$) and $\sum_{j=1}^{m}\pi_{i,j}(q_i)=1$. Let $\mathcal{S}$ be the mixed strategy space containing all such $\vec{\pi}(q)$.
    % denoting the contest Suppose contestant $i$'s strategy is $\vec{\pi}_{i}(q_i)=(\pi_{i,1}(q_i),\pi_{i,2}(q_i),\cdots,\pi_{i,m}(q_i))$, which means that contestant $i$ chooses $J_i=j$ with probability $\pi_{i,j}(q_i)$. $\vec{\pi}_i$ must satisfy that $\forall q_i\in[0,1]$, $\sum_{j=1}^m\pi_{i,j}(q_i)=1$. 
    \item[$\bullet$] After the decision stages, each contest designer $j$ gets informed of the list of participants in her contest, denoted by $I_j=\{i:J_i=j\}$. In the third stage, each contest designer $j$ executes a rank-by-skill allocation, that is, she ranks the contestants in $I_j$ by their quantiles in ascending order and awards her prizes to the corresponding contestants\footnote{Note that if $|I_j|$ is less than $n$, only the first $|I_j|$ prizes are handed out.}: the prize $w_{j1}$ is awarded to the contestant with the lowest quantile (representing the highest skill), the prize $w_{j2}$ is awarded to the contestant with the second lowest quantile, and so on. 
\end{enumerate}

% There is no bidding action in our model, the contestants are ranked by their skills directly.
% Assume that $G$ is a continuous distribution, so that with probability $1$ no tie occurs.
Keeping the above process in mind, we first define the utility of a contestant. Let $\boldsymbol{J}=(J_1,J_2,\cdots,J_n)$ and $\boldsymbol{\pi}=\left(\vec{\pi}_{1}(q_1),\vec{\pi}_2(q_2),\cdots,\vec{\pi}_{n}(q_n)\right)$ represent the pure strategy profile and the mixed strategy profile. Under the current pure strategy profile $\boldsymbol{J}$, the utility of contestant $i$ is exactly the prize she received in the contest $J_i$, that is,
$$u_i(\boldsymbol{J})=w_{J_i,\mathrm{rank}(i,J_i)},$$
where $\mathrm{rank}(i,J_i)=|\{i'\neq i: (q_{i'}\leq q_i)\land (J_{i'}=J_i)\}|+1$.

Since the game of contestants is of incomplete information, where the quantile of each contestant is private information, we mainly focus on the symmetric Bayesian Nash equilibrium in the GoC. Assume that all contestants follow a same mixed strategy $\vec{\pi}(q) = (\pi_1(q),\pi_2(q),\cdots,\pi_m(q)) \in \mathcal{S}$. In other words, for any $i \in [n]$, we assume $\vec{\pi}_i(q_i)= \vec{\pi}(q_i)$. Before giving the definition of symmetric BNE, we first introduce the concept of \emph{cumulative behavior} which helps us formulate the expected utility gained from a contest. %which is a probability of contestant who has a lower quantile than a certain level and chooses a certain contest. 
\begin{definition}[Cumulative Behavior]
\label{definition:cumulative-behavior}
Assume that all contestants adopt the strategy $\vec{\pi}(q)=(\pi_1(q),\pi_2(q),\cdots,\pi_m(q))\in\mathcal S$ symmetrically. For each $j\in[m]$, define $H_{\pi_j}(q):[0,1]\to[0,1]$ as $$H_{\pi_j}(q):=\int_0^q\pi_j(u)du.$$
We call $\vec{H}(q)=(H_{\pi_1}(q),H_{\pi_2}(q),\cdots,H_{\pi_m}(q))$ a cumulative behavior. We also say that $\vec{H}(q)$ is the cumulation of $\vec{\pi}(q)$. 
% Particularly, if $\vec{\pi}(q)$ is a symmetric Bayesian Nash equilibrium, $\vec{H}(q)$ is called a cumulative equilibrium behavior (CEB).
\end{definition}
Note that, when a strategy $\vec{\pi}$ is adopted by all contestants symmetrically, $H_{\pi_j}(q)$ is the probability of the event that, for a contestant $i$, her quantile $q_{i}$ turns out to be less than $q$, and her pure strategy turns out to be $J_i=j$. %Formally, In addition, this event happens for each $i\in[n]$ independently, and the probability is the same since $\vec{\pi}^*(q)$ is the symmetric strategy. 
Formally, given any $j\in[m]$ and $q\in[0,1]$, for any $i\in[n]$,
\begin{align}
    \Pr[(q_i\leq q)\land (J_i=j)]=H_{\pi_j}(q).\label{eq:probability-meaning-of-H}
\end{align}
Observe that the prize that contestant $i$ receives from contest $j$ completely depends on the number of contestants ranking before $i$, or formally $\mathrm{rank}(i,j)-1$, which follows the binomial distribution $B(n-1,H_{\pi_j}(q_i))$. Therefore, the expected utility received by contestant $i$ in contest $j$ can be written as
\begin{align}
\E_{q_{-i}}[u_i|q_i=q,J_i=j]=\sum_{k=1}^n w_{j,k}\binom{n-1}{k-1}H_{\pi_j}(q)^{k-1}(1-H_{\pi_j}(q))^{n-k}.\label{eq:expected-utility-in-rank-order}
\end{align}
In addition, from the perspective of the designer, \Cref{eq:expected-utility-in-rank-order} shows that in any contest $j$, the expected prize allocated to a contestant with quantile $q$ is determined by $H_{\pi_j}(q)$. Inspired by this observation, we introduce the concept of interim allocation function $x_j(h)$ to denote the expected prize  of contest $j$, which will be used throughout the whole paper.
\begin{definition}[Interim Allocation Function]
\label{definition:interim-allocation-function}
For any non-increasing and non-negative function $x_j(h)$ on $[0,1]$, we say $x_j(h)$ is an interim allocation function representing the prize policy of contest $j$, if under any symmetric strategy $\vec{\pi}(q)=(\pi_1(q),\cdots,\pi_m(q))$, for any $i\in[n]$ and any $q\in[0,1]$, it holds that $$\E_{q_{-i}}[u_i|q_i=q,J_i=j]=x_j(H_{\pi_j}(q)).$$
%We say the interim allocation function of contest $j$ is $x_j(h)$, if $x_j(h)$ represents the prize allocation of contest $j$.
\end{definition}
Using this concept, the prize structure of each contest $j$ is now represented by an interim allocation allocation function 
$x_j(h)$\footnote{From \Cref{eq:expected-utility-in-rank-order}, it is not difficult to see that any contest $j$ with the rank-by-skill allocation has the interim allocation function $x_j(h)=\sum_{k=1}^n w_{j,k}\binom{n-1}{k-1}h^{k-1}(1-h)^{n-k}$, which is continuous, non-increasing and non-negative. In addition, since $\int_0^1x_j(h)dh= (\sum_{k=1}^n w_{j,k})/n$, the budget constraint on the interim allocation function can be written as $\int_0^1x_j(h)dh\leq t_j$.}.
%From now on, in a generalized sense, we only assume that for each $j\in[m]$, 
% the prize policy of contest $j$ is represented by a interim allocation 
%$x_j(h)$ is a non-increasing, non-negative function on $[0,1]$ and represents the prize policy of contest $j$. 

Given these two concepts of cumulative behavior and interim allocation functions, we can define the symmetric Bayesian Nash equilibrium:
\begin{definition}[Symmetric Bayesian Nash Equilibrium]
% \footnote{Note that, when all $x_j(h)$'s are continuous on $[0,1]$, it holds for any $h\in[0,1]$ that $x_j(h+0)=x_j(h)$. \Cref{definition:contestants-generalized-symmetric-equilibrium} is equivalent to \Cref{definition:contestants-symmetric-nash-equilibrium}.}
\label{definition:contestants-symmetric-equilibrium}
% Given $x_1(h), \cdots,$ $x_m(h)$, we say $\vec{\pi}^*(q)\in\mathcal S$ is a symmetric Bayesian Nash equilibrium, if for any $j\in[m]$ and any $q\in[0,1]$, it holds that
% \begin{align*}
%     \{j\in[m]:\pi^*_j(q)>0\}\subseteq\arg\max_{j'\in[m]}x_{j'}(H_{\pi^*_{j'}}(q)).
% \end{align*}
% If $\vec{\pi}^*(q)$ is a symmetric Bayesian Nash equilibrium, its cumulation $\vec{H}^*(q)$ is called a cumulative equilibrium behavior (CEB).
We say $\vec{\pi}^*(q)=(\pi^*_1(q),\cdots,\pi^*_m(q))$ $\in\mathcal S$ is a symmetric Bayesian Nash equilibrium, if when $\vec{\pi}^*$ is adopted as the symmetric strategy, i.e. $\vec{\pi}_i=\vec{\pi}^*$ for all $i\in[n]$, it holds for any $q\in[0,1]$ that
\begin{align*}
    \{j\in[m]:\pi^*_j(q)>0\}\subseteq\arg\max_{j\in[m]}\E_{q_{-i}}[u_i|q_i=q,J_i=j].
\end{align*}
Given interim allocation functions $x_1(h),\cdots,x_m(h)$, this is equivalent to
\begin{align*}
    \{j\in[m]:\pi^*_j(q)>0\}\subseteq\arg\max_{j\in[m]}x_{j}(H_{\pi^*_{j}}(q)).
\end{align*}
If $\vec{\pi}^*(q)$ is a symmetric Bayesian Nash equilibrium, its cumulation $\vec{H}^*(q)$ is called a cumulative equilibrium behavior (CEB).
\end{definition}

%In \Cref{section:players_equilibrium}, we show that equivalent..., and give a complete characterization of GCEB.
Lastly, we propose the utility of contest designers. In our model, each contest designer may have a different evaluation criterion on the value of contestants. We assume that, for each contest designer $j$, the value of any contestant is a non-decreasing function of the contestant's skill or a non-increasing function of the contestant's quantile, denoted by $v_j(q)$. We also suppose that each $v_j(q)$ is non-negative and bounded on $[0,1]$. The utility of contest designer $j$, under the pure strategy profile $\boldsymbol{J}$, is the total value of contestants participating in contest $j$, i.e., 
\begin{align*}
    R_j(\boldsymbol{J})=\sum_{i\in I_j}v_j(q_i).
\end{align*}
And when contestants use the symmetric mixed strategy $\vec{\pi}$, the designer $j$'s expected utility is
% $$R_j(\boldsymbol{\pi}) =\int_{q_1,\cdots,q_n} \sum_{\boldsymbol{J}} [R_j(\boldsymbol{J}) \cdot \mathrm{Pr}(\boldsymbol{J}|\boldsymbol{\pi})]dq_1\cdots dq_n, $$
% where $\mathrm{Pr}(\boldsymbol{J}|\boldsymbol{\pi})$ is the probability that $\boldsymbol{J}$ occurs under $\boldsymbol{\pi}$.
% Specially, each contest designer $j$ tries to maximize her expected utility under a symmetric Bayesian Nash equilibrium $\vec{\pi}^*$, i.e.,
\begin{align}
    R_j(\vec{\pi})&=\E_{q_1,\cdots,q_n}\left[\sum_{i=1}^n v_j(q_i)\pi_j(q_i)\right]
    =n\E_{q\sim U[0,1]}\left[v_j(q)\pi_j(q)\right]
    =n\int_0^1v_j(q)dH_{j}(q),\label{eq:designer-expected-utility-StieltjesIntegral}
\end{align}
where the last equation holds by writing $\int_0^1v_j(q)\pi_j^*(q)dq$ in the form of Riemann-Stieltjes integral.

%\li{definition of best response and safety level }
% For any designer $j$, recall that
% $R_j(\vec{\pi}^*)=n\int_0^1v_j(q)\pi_j^*(q)dq$. By writing writting $\int_0^1v_j(q)\pi_j^*(q)dq$ in the form of Riemann-Stieltjes integral, we get $$R_j(\vec{\pi}^*)=\int_0^1v_j(q)dH^*_{j}(q).$$

% In section \ref{section:players_equilibrium} we will characterize this symmetric Bayes-Nash equilibrium. Each contest designer $j$ tries to optimize the expectation of her objective, $\E[OBJ_j]$, under the players' symmetric Bayes-Nash equilibrium.

% In addition, we will introduce some succinct notations to help us prove the results. We define $f(x+0) = \lim_{x'\to x+0}f(x)$ and $f(x-0) = \lim_{x'\to x-0}f(x)$ as the left limit and the right limit of $f(x)$ respectively.

%In section \ref{section:best_response} and section \ref{section:safetylevel}, we focus on the case of $m=2$.

\section{The Equilibrium Behavior of Contestants}
\label{section:players_equilibrium}

In this section, we characterize the symmetric BNE of the contestants. Unlike the traditional methodology in game theory, which focuses on calculating the first order condition of the expected utility, we propose an essentially new method drawing support from the cumulative behavior. %\lny{, i.e., for some quantile level $q$ and a contest $j\in[m]$, the expected number of contestants who have quantile in $[0,q]$ and participate in $j$}i.e., for a contestant $i$, the probability another contestant ranks before $i$). %the constants show an interesting symmetric behavior, which is roughly the continuous version of the equilibrium behavior in the complete-information scene. 

To help understanding our method profoundly, we briefly discuss the equilibrium in the complete information setting. Suppose that the competitiveness $q_i$ of each contestant is publicly known, so every contestant knows her ranking among all $n$ contestants. Given all prize structures $w_{1}, w_2, \cdots, w_{m}$, the Nash equilibrium $(J_1,\cdots,J_n)$ is straightforward: sort all the $mn$ prizes in descending order, and sort all contestants in ascending order by their quantile, then, the NE is that the contestant with the lowest quantile join the contest with the highest prize, the contestant with the second lowest quantile join the contest with the second highest prize and so on. The key point of the above process is that the contestant knows her own ranking and the exact prize she can obtain in each contest, when the strategies of all contestants with lower quantiles are known. %However, this information is not clear in the incomplete-information model. In reality, our methodology is a generalized version of the above process in the setting of incomplete information. 
Roughly speaking, our methodology can be viewed as a continuous generalization of the above process in the incomplete information setting.

Coming back to our incomplete information model where the skills are drawn i.i.d from a publicly known distribution, we consider the expected prize for a contestant $i$ in some contest $j$. Observe that the behavior of a contestant with the higher quantile can never affect the ranking of a contestant with the lower quantile. Consequently, one contestant only cares about the behavior of contestants with lower quantiles than hers. Intuitively, one can imagine a continuous version of the sequential decision making process: when $q$ moves from $0$ to $1$, a contestant with quantile $q$ calculates the expected remaining prize in each contest $j$ with the help of the cumulative behavior, then chooses the contests with the highest expected remaining prize to participate in. %Therefore, the main work on finding the symmetric BNE is to figure out a proper cumulative behavior such that the choices in above decision process form the symmetric BNE. %In the rest part of this section, we discuss how to find the cumulative equilibrium behaviors. \lny{need modify?}

By Definition \ref{definition:cumulative-behavior} and \ref{definition:interim-allocation-function}, we know that cumulative behavior $\vec{H}(q)$ is the cumulation of mixed strategy $\vec{\pi}(q)$, and also decides the expected prize from each contest. Thus in our characterization of the equilibrium behavior, the cumulative behavior is a pivotal concept which serves as a bridge between the mixed strategy $\vec{\pi}(q)$ and the equilibrium condition.%expected utility. To find the cumulative equilibrium behavior, we need to balance two sides to reach the properties of symmetric BNE.

%The reason to concern about the cumulative equilibrium behavior $\vec{H}^*(q)$, rather than the mixed strategy $\vec{\pi}^*(q)$, consists of two parts. 
We first discuss the relationship between a strategy $\vec{\pi}(q)$ and its cumulation $\vec{H}(q)$, claiming that $\vec{H}(q)$ is a good proxy of $\vec{\pi}(q)$, and it will be more concise to give the condition for a cumulative behavior $\vec{H}(q)$ to be a cumulative equilibrium behavior, instead of describing a symmetric equilibrium strategy $\vec{\pi}(q)$ directly. First, we point out that different mixed strategies can have the same cumulation. Given a mixed strategy $\vec{\pi}(q)$ and its cumulation $\vec{H}(q)$, for another mixed strategy $\vec{\tau}(q)$, if $\vec{\pi}(q)$ and $\vec{\tau}(q)$ disagree only on a set of measure zero, i.e., $\{q\in[0,1]:\exists j\in[m],\pi_j(q)\neq\tau_j(q)\}$ is of measure zero, then we know the cumulation of $\vec{\tau}(q)$ is also $\vec{H}(q)$. In fact, this is a necessary and sufficient condition. Note that when $\vec{\pi}(q)$ and $\vec{\tau}(q)$ disagree only on a set of measure zero, and when $q$ is drawn from $U[0,1]$, it happens with probability $1$ that $\vec{\pi}(q)=\vec{\tau}(q)$. Thus, $\vec{\pi}$ and $\vec{\tau}$ can be naturally regarded as the same behavior when they have the same cumulation.

Next we show how to reconstruct a strategy $\vec{\pi}(q)$ when given a cumulative behavior $\vec{H}(q)$, so that $\vec{H}(q)$ is the cumulation of $\vec{\pi}(q)$. If $\vec{H}(q)=(H_1(q),H_2(q),\cdots,H_m(q))$ is the cumulation of some $\vec{\pi}(q)\in\mathcal S$, one can easily verify that $\vec{H}(q)$ satisfies that\begin{enumerate}
    \item For any $j\in[m]$, $H_j(q)$ is a continuous, non-negative and non-decreasing function on $[0,1]$.
    \item For any $j\in[m]$ and $q\in[0,1]$, $\sum_{j=1}^mH_j(q)=q$.
\end{enumerate}
Reversely, let $\mathcal H$ denote the class of all $\vec{H}(q)$'s satisfying the two conditions, then the following proposition show that, $\mathcal H$ exactly consists of all possible cumulative behaviors, and we can find a corresponding $\vec{\pi}(q)$ for any $\vec{H}(q)\in\mathcal H$.

\begin{proposition}\label{proposition:differentiating-cumulative-behavior}
For any $\vec{H}(q)=(H_1(q),H_2(q),\cdots,H_m(q))\in\mathcal H$, there is some $\vec{\pi}(q)\in\mathcal S$ such that for all $j\in[m]$ and any $q\in[0,1]$, $H_j(q)=H_{\pi_j}(q)$. Specifically, given any $x_1(h),\cdots,x_m(h)$, $\vec{\pi}(q)$ can be constructed as
\begin{align}\label{eqn:h to pi}
\pi_j(q)=\begin{cases}
H_j'(q),&\text{if }q\in[0,1]\setminus E;\\
\frac1{\left|\arg\max_{j'\in[m]}x_{j'}(H_{j'}(q))\right|},&\text{if }q\in E\land j\in\arg\max_{j'\in[m]}x_{j'}(H_{j'}(q));\\
0,&\text{if }q\in E\land j\notin\arg\max_{j'\in[m]}x_{j'}(H_{j'}(q)),
\end{cases}
\end{align}
where $E=\{0,1\}\cup\bigcup_{j=1}^mE_j$, and $E_j$ is the set of non-differentiable points of $H_j(q)$ in $(0,1)$, which has measure zero.
\end{proposition}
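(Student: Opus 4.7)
The plan is to verify the two required properties of $\vec\pi(q)$: (i) that it lies in $\mathcal S$, i.e.\ each coordinate is non-negative and the coordinates sum to one at every $q\in[0,1]$; and (ii) that its cumulation coincides with $\vec H(q)$, meaning $\int_0^q\pi_j(u)\,du=H_j(q)$ for every $j\in[m]$ and every $q\in[0,1]$. The argument naturally splits according to the exceptional set $E$.

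First I would justify that $E$ is legitimately a set of measure zero. Since each $H_j$ is non-decreasing on $[0,1]$, Lebesgue's theorem on the differentiability of monotone functions gives that the set $E_j$ of non-differentiability points has Lebesgue measure zero, and wherever $H_j'$ exists it is non-negative. Hence $E=\{0,1\}\cup\bigcup_{j=1}^m E_j$ also has measure zero, so almost every $q\in[0,1]$ lies off $E$.

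For (i), on $[0,1]\setminus E$ all the $H_j$ are differentiable at $q$; differentiating the identity $\sum_{j=1}^m H_j(q)=q$ pointwise yields $\sum_{j=1}^m H_j'(q)=1$. Combined with $H_j'(q)\geq 0$, this shows $\pi_j(q)\in[0,1]$ and $\sum_{j=1}^m\pi_j(q)=1$ off $E$. On $E$ the construction splits probability mass uniformly among $\arg\max_{j'\in[m]}x_{j'}(H_{j'}(q))$, so the sum there is also $1$. Hence $\vec\pi(q)\in\mathcal S$ at every $q$.

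The main obstacle lies in (ii), because a continuous non-decreasing function need not be absolutely continuous (for instance the Cantor function), so in general the fundamental-theorem inequality $\int_0^q H_j'(u)\,du\leq H_j(q)-H_j(0)$ can be strict for a single $j$. The simplex constraint $\sum_{j=1}^m H_j(q)=q$ is exactly what rescues the argument in aggregate. From condition (2) at $q=0$ together with non-negativity we first get $H_j(0)=0$. Because $E$ has measure zero, we have $\int_0^q\pi_j(u)\,du=\int_0^q H_j'(u)\,du$, and summing the monotone-function inequality over $j$ gives $\sum_{j=1}^m\int_0^q H_j'(u)\,du\leq\sum_{j=1}^m H_j(q)=q$. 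On the other hand, using that the sum is finite and each $H_j'\geq 0$, Tonelli's theorem allows interchanging sum and integral, yielding $\sum_{j=1}^m\int_0^q H_j'(u)\,du=\int_0^q \sum_{j=1}^m H_j'(u)\,du=\int_0^q 1\,du=q$. This sandwich forces each term-by-term inequality to be an equality, giving $\int_0^q H_j'(u)\,du=H_j(q)$ for every $j$ and every $q$, which is exactly the required identity $H_{\pi_j}(q)=H_j(q)$.
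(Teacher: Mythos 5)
Your proof is correct, but it takes a noticeably different route from the paper's. You invoke Lebesgue's differentiation theorem for monotone functions (which only gives the one-sided inequality $\int_0^q H_j'\le H_j(q)-H_j(0)$) and then close the gap with a sandwich argument: summing over $j$ and using $\sum_j H_j(q)=q$, $\sum_j H_j'(u)=1$ a.e., and $H_j(0)=0$ forces term-by-term equality. The paper instead observes immediately that each $H_j$ is $1$-Lipschitz, since $0\le H_j(q_2)-H_j(q_1)\le\sum_{j'}(H_{j'}(q_2)-H_{j'}(q_1))=q_2-q_1$, hence absolutely continuous, so the full fundamental theorem of calculus applies with no slack and the reconstruction $H_j(q)=\int_0^q H_j'(u)\,du$ is automatic. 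Your worry about Cantor-type functions is actually moot here because the simplex constraint rules them out from the start by forcing Lipschitzness; noticing this would have shortened your argument to the paper's. That said, your sandwich argument is a clean alternative that shows the conclusion follows just from monotonicity plus the aggregate constraint, without ever naming the Lipschitz property, and it is a correct and self-contained proof of the proposition.
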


\Cref{proposition:differentiating-cumulative-behavior} shows that the mapping from a mixed strategy $\vec{\pi}(q)$ to its cumulation $\vec{H}(q)$ is a many-to-one surjection from $\mathcal S$ to $\mathcal H$. Since different $\vec{\pi}(q)$ may have the same cumulation, it will be more concise to give the condition for a cumulative behavior $\vec{H}(q)\in\mathcal H$ to be a cumulative equilibrium behavior, instead of describing a symmetric equilibrium strategy $\vec{\pi}(q)$ directly.

We define some important notations, before describing the condition for a cumulative equilibrium behavior.
Let $x_j^{-1}(x):=\sup\{h\in[0,1]:x_j(h)\geq x\}$\footnote{For $x>x_j(0)$, we define $x_j^{-1}(x)=0$. This applies to future notations similarly.}, $Q(x):=\sum_{j=1}^m x_j^{-1}(x)$, and $Q^{-1}(q):=\max\{x:Q(x)\geq q\}$. Note that any $x_j^{-1}(x)$, $Q(x)$, and $Q^{-1}(q)$ are non-increasing and left-continuous, since each $x_j(h)$ is non-increasing on $[0,1]$.

To help understanding, recall the complete information setting. $x_j^{-1}(x)$ is analogous to how many prizes of at least $x$ can be offered in the prize structure of contest $j$. $Q(x)$ can be understood as how many prizes of at least $x$ are prepared in total by all designers, which is also the maximum number of contestants who can receive a prize of at least $x$. And $Q^{-1}(q)$ is analogous to the prize ranked approximately $qn$ among all $mn$ prizes. %the maximum prize which a contestant with a certain ranking (approximately $qn$) gets under the equilibrium, since every contestant ranking before her will get a prize not less than hers.
%Intuitively, $x_j^{-1}(x)$ represents the quantity of prizes of at least $x$ offered by contest $j$. $Q(x)$ indicates the total quantity of prizes of at least $x$ that contestants can receive from all the $m$ contests in total, and $Q^{-1}(q)$ means a maximum level such that at least a quantity $q$ of prize of at least this level is offered by all the contests in total.
%Let $x_j^{-1}(x):=\sup\{h\in[0,1]:x_j(h)\geq x\}$\footnote{For $x>x_j(0)$, we define $x_j^{-1}(x)=0$. This applies to future notations similarly.} be the supremum quantity of contestants that may receive a expected prize at least $x$ from contest $j$, and denote the notion of total quantity as $Q(x):=\sum_{j=1}^m x_j^{-1}(x)$ and the reverse function of it as $Q^{-1}(q)=\max\{x:Q(x)\geq q\}$, respectively.
Intuitively, when the contestant with quantile $q$ is making decision, let $X=Q^{-1}(q)$, then the highest amount of the expected remaining prize in all the contests is $X$. Therefore, she will get an expected prize of $X$. 

Now we give the characterization of CEB in the following theorem, under the condition that all the interim allocation functions are strictly decreasing, which holds when every designer adopt a rank-by-skill prize structure where $w_{j,1}>w_{j,n}$, i.e., the $n$ prizes are not all equal. This condition guarantees that every $x_j^{-1}(x)$ is a continuous function. In the case that $w_{j,1}=w_{j,2}=\cdots=w_{j,n}$ for some contest $j$, $x_j(h)$ becomes a constant, and $x_j^{-1}(x)$ becomes not continuous, resulting in the need for more discussion. The full version of this theorem, which is compatible with interim allocation functions that may have constant intervals and discontinuities, is stated in \Cref{theorem:player-equilibrium-general-condition}.

\begin{theorem}
\label{theorem:rankbyskill-player-equilibrium-general-condition} Suppose $x_1(h),\cdots,x_m(h)$ are all strictly decreasing on $[0,1]$. Define $\underline{X}=Q^{-1}(1)$ and $\overline{X}=\max_{j} x_j(0)$ as the lower bound and the upper bound on prize level. %, which is a countable set.
For any $\vec{H}^*(q)\in\mathcal H$, 
$\vec{H}^*(q)$ is a cumulative equilibrium behavior if and only if the following condition holds: For any $X\in [\underline{X},\overline{X}]$, let %$q_1=Q(X+0)$ and 
$q=Q(X)$, then it holds for each $j\in[m]$ that $H_j^*(q) = x_j^{-1}(X)$.
%and any $q\in[q_1,\min(q_2,1)]$, $$x_j^{-1}(X+0)\leq H_j^*(q)\leq x_j^{-1}(X).$$
%and $\forall q\in[0,1]$, $$x_j^{-1}(Q^{-1}(q)+0)\leq H^*_j(q)\leq x_j^{-1}(Q^{-1}(q)).$$
%$H_j^*(q)\geq \max\{x_j^{-1}(X+0),q-\sum_{j'\neq j}x_{j'}^{-1}(X)\}$, and $H_j^*(q)\leq\min\{x_j^{-1}(X),q-\sum_{j'\neq j}x_{j'}^{-1}(X+0)\}$.
% \begin{enumerate}
%     \item $\forall X\in D_{Q(x)}$, denote $q_1=Q(X+0)$ and $q_2=Q(X)$. For each $j\in[m]$, for any $q\in[q_1,\min(q_2,1)]$, $H_j^*(q)\geq \max(x_j^{-1}(X+0),q-\sum_{j'\neq j}x_{j'}^{-1}(X)$, and $H_j^*(q)\leq\min(x_j^{-1}(X),q-\sum_{j'\neq j}x_{j'}^{-1}(X+0))$.
%     %$H_j^*(q_1)=x_j^{-1}(X+0)$. When $q_2\leq 1$, $H_j^*(q_2)=x_j^{-1}(X)$.
    
%     \item $\forall X\in [\underline{X},\overline{X}]\setminus D_{Q(x)}$, denote $q=Q(X)$, then for each $j\in[m]$, $H^*_j(q)=x_j^{-1}(X)$.
% \end{enumerate}
\end{theorem}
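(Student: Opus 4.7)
The plan is to introduce the auxiliary function $X^*(q) := \max_{j\in[m]} x_j(H_j^*(q))$, which represents the ``best available expected prize'' at quantile $q$, and to show that $X^*(q) = Q^{-1}(q)$ pointwise. Under the strict decreasing assumption on each $x_j$, every $x_j^{-1}$ is a genuine continuous inverse on $[x_j(1),x_j(0)]$ (extended to be $0$ for $X > x_j(0)$), so $Q(X)=\sum_j x_j^{-1}(X)$ is continuous and strictly decreasing on $[\underline X,\overline X]$ and $Q^{-1}$ is a proper two-sided inverse there. In this setup, the theorem's condition is equivalent to the pointwise identity $H_j^*(q)=x_j^{-1}(Q^{-1}(q))$ for every $q\in[0,1]$, which is what I will actually establish.

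For the sufficiency direction, assume $H_j^*(q)=x_j^{-1}(Q^{-1}(q))$ holds. A direct computation shows $x_j(H_j^*(q))$ equals $Q^{-1}(q)$ when $x_j(0)\geq Q^{-1}(q)$ and equals $x_j(0)<Q^{-1}(q)$ otherwise; hence $\arg\max_j x_j(H_j^*(q))=\{j:x_j(0)\geq Q^{-1}(q)\}$, with common maximum $Q^{-1}(q)$. Invoking \Cref{proposition:differentiating-cumulative-behavior}, reconstruct a $\vec{\pi}^*(q)\in\mathcal S$ whose cumulation is $\vec{H}^*(q)$: at almost every $q$, $\pi_j^*(q)=H_j^{*\prime}(q)$, which vanishes whenever $H_j^*$ is locally constant, and in particular whenever $x_j(0)<Q^{-1}(q)$ (where $H_j^*\equiv 0$ in a neighborhood). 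So the support of $\vec{\pi}^*(q)$ is contained in $\arg\max_j x_j(H_j^*(q))$, verifying the equilibrium condition.

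For the necessity direction, let $\vec{\pi}^*$ be a symmetric BNE with cumulation $\vec{H}^*$. I would establish three rigidity facts in order: (i) $X^*$ is non-increasing, since for $q_1<q_2$ and $j_2\in\arg\max$ at $q_2$, monotonicity of $H_{j_2}^*$ and $x_{j_2}$ gives $X^*(q_2)=x_{j_2}(H_{j_2}^*(q_2))\leq x_{j_2}(H_{j_2}^*(q_1))\leq X^*(q_1)$; (ii) $X^*$ is in fact strictly decreasing, because on any interval where $X^*\equiv X_0$, equilibrium forces $H_j^*(q)=x_j^{-1}(X_0)$ on the support of each $\pi_j^*$, so every $H_j^*$ is constant on a full-measure subset of that interval, which is incompatible with $\sum_j \pi_j^*(q)=1$ pointwise unless the interval degenerates; (iii) once some contest $j$ enters $\arg\max$ at $q_0$, it stays there, for if $j$ left on a later interval $(q_1,q_1+\delta)$ then $\phi_j(q):=x_j(H_j^*(q))$ would be constant there while the strictly decreasing $X^*$ dropped below $\phi_j(q_1)=X^*(q_1)$, contradicting $\phi_j\leq X^*$. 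Combining (i)--(iii), for every $q$ and every $j$, either $j$ has already entered and $H_j^*(q)=x_j^{-1}(X^*(q))$, or $j$ has not entered, in which case $X^*(q)\geq x_j(0)$ and $H_j^*(q)=0=x_j^{-1}(X^*(q))$. Summing over $j$ then yields $q=\sum_j H_j^*(q)=Q(X^*(q))$, so $X^*(q)=Q^{-1}(q)$, and setting $q=Q(X)$ gives $H_j^*(Q(X))=x_j^{-1}(X)$ as required.

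The main obstacle is the rigidity pair (ii)--(iii) in the necessity direction: strict monotonicity of $X^*$ rests on a careful measure-theoretic combination of the equilibrium constraint with the normalization $\sum_j \pi_j^*(q)=1$, and the ``once in, always in'' structure is precisely where strict monotonicity of each $x_j$ is indispensable. Plateaus or jumps in the $x_j$'s would permit contests to exit and re-enter the $\arg\max$ on equal footing, which explains why the general version stated as \Cref{theorem:player-equilibrium-general-condition} in the appendix requires a substantially more elaborate treatment.
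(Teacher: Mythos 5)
Your argument is correct in substance, and it takes a genuinely different route from the paper. The paper does not prove this theorem directly: it proves the full generalized version (\Cref{theorem:player-equilibrium-general-condition}) by a direct contradiction argument that works with the one-sided limits $x_j^{-1}(X+0)$ and the interval $[Q(X+0),Q(X)]$, and then observes in the appendix that the strictly decreasing case is an immediate corollary. You instead prove the restricted statement head-on by introducing $X^*(q)=\max_j x_j(H_j^*(q))$ and exploiting strict monotonicity to obtain structural rigidity: $X^*$ is strictly decreasing, each contest once in the $\arg\max$ stays there, and hence $H_j^*(q)=x_j^{-1}(X^*(q))$ pointwise, from which $X^*=Q^{-1}$ follows by summing. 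What each buys: the paper's proof handles plateaus and jumps in the $x_j$ at the cost of a less transparent case analysis with left/right limits; your proof gives a cleaner ``threshold prize'' picture and a two-line derivation of $X^*=Q^{-1}$, but it is intrinsically tied to strict monotonicity and, as you correctly note, does not extend to the general case.

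Two points to tighten. First, your reformulation of the condition as $H_j^*(q)=x_j^{-1}(Q^{-1}(q))$ for all $q\in[0,1]$ is correct, but it quietly uses that under strict monotonicity $Q$ is a continuous strictly decreasing bijection from $[\underline{X},\overline{X}]$ onto $[0,1]$ with $Q(\underline{X})=1$ and $Q(\overline{X})=0$; this should be stated, since it is precisely what makes ``for $X$'' and ``for $q$'' interchangeable. Second, step (ii) is the genuinely delicate one and your current phrasing (``every $H_j^*$ is constant on a full-measure subset, incompatible with $\sum_j\pi_j^*=1$'') glosses over the mechanism. The clean version is: on an interval $(a,b)$ where $X^*\equiv X_0$, the equilibrium condition forces $H_j^*(q)=x_j^{-1}(X_0)$ wherever $\pi_j^*(q)>0$; since $H_j^*$ is continuous and non-decreasing, taking a fixed value on a set of positive measure would force $H_j^*$ to be constant on an interval, on which $\pi_j^*=0$ a.e., contradicting $\pi_j^*>0$ there; so each $S_j=\{q\in(a,b):\pi_j^*(q)>0\}$ has measure zero, hence $\bigcup_j S_j$ has measure zero, contradicting $\sum_j\pi_j^*(q)=1$ pointwise on an interval of positive length. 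Spelling this out would close the gap.
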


\vspace*{-10pt}
\section{Relaxation and Simplification}
\label{sec:relax and simp}

%In Section \ref{section:players_equilibrium}, we have analyzed the contestants' equilibrium behaviors, when all prize structures of contests are determined. Based on the results of equilibria, from this section, we begin to investigate the strategies of a contest designer. In this section, we first simplify the contest environment (from $m$ contests to two contests) by ``bundling'' the other $m-1$ contest designers into one opponent. Then, we generalize the feasible region of interim allocation function $x_j(h)$ to the set of any non-increasing, non-negative function on $[0,1]$\footnote{In Section \ref{sec: x to w}, we introduce a technique which can transfer any generalized interim allocation back to original feasible region with a little loss.}, which helps us construct the strategies with elegant properties. Finally, we provide a core technique, Horizontal Stretching, which can adjust the current strategy to gain a better utility. %Two kinds of strategies are studied: last-mover strategy (Section \ref{section:best_response}) and first-mover strategy (Section \ref{section:safetylevel}). In this section, we concentrate on the last-mover strategy, i.e., a designer decides her strategy after witnessing all other designers' strategies, aiming to maximize her utility. 

In Section \ref{section:players_equilibrium}, we have characterized the contestants' equilibrium behavior given all prize structures of contests, with the help of cumulative behavior and interim allocation function. In this section, we first introduce the relaxation on the possible allocation rule of the contests. As we have shown, any rank-by-skill allocation can be represented by an interim allocation function. Starting from this section (Subsection \ref{subsec:gene inter allo}), we generalize the prize allocation system of each contest $j$ to an abstract allocation represented by the interim allocation function $x_j(h)$, which is only required to be non-negative and non-increasing. Under this generalized model, we define generalized symmetric equilibrium of the contestants and give the full description in \Cref{theorem:player-equilibrium-general-condition}. This generalization brings great convenience to our analysis on the strategies of the designers. Specially, we develop some techniques in Subsection \ref{subsec:simp contest envm}-\ref{subsec: HS}, which are frequently applied in \Cref{section:best_response} and \Cref{section:safetylevel}. %\lny{need modify}

\subsection{Relaxation of Interim Allocation}
\label{subsec:gene inter allo}
We have introduced the concept of interim allocation function, which represents the allocation rule of a contest. Now we generalize the prize structure of each contest $j$ from the rank-by-skill allocation to an abstract\footnote{We use the word abstract to emphasize that an interim allocation function $x_j(h)$ in the generalized model may not be implementable by rank-by-skill allocation in the original model. We propose how to implement it approximately in \Cref{section:implement}.} allocation represented by the interim allocation function $x_j(h)$.

From now on, we consider the generalized model, where we only assume that for each $j\in[m]$, $x_j(h)$ is a non-increasing, non-negative function on $[0,1]$, representing the prize policy of contest $j$. The budget constraint can be naturally\footnote{From \Cref{eq:expected-utility-in-rank-order}, we have seen that, any contest $j$ with a rank-by-skill allocation has the interim allocation function $x_j(h)=\sum_{k=1}^n w_{j,k}\binom{n-1}{k-1}h^{k-1}(1-h)^{n-k}$, and $\int_0^1x_j(h)dh= (\sum_{k=1}^n w_{j,k})/n\leq t_j$.} defined for the interim allocation function as $\int_0^1x_j(h)dh\leq t_j$.

Define the notation $\mathcal F_{t}$ as the set of feasible strategies of a designer within budget $t$, which is the class of functions $x(h)$ on $[0,1]$ which is non-negative, non-increasing, and satisfies that $\int_0^1 x(h)dh\leq t$.

When some interim allocation functions have discontinuous points, the symmetric equilibrium defined in \Cref{definition:contestants-symmetric-equilibrium} may not exist. We adapt the definition of symmetric equilibrium to this situation, using the right limit at each discontinuous point. With this generalization, it can be seen that the symmetric equilibrium always exists, as implied by \Cref{theorem:player-equilibrium-general-condition}.

%In preliminary, we introduce the concept of interim allocation, which is a continuous, non-increasing, non-negative function on $[0,1]$. From this section, we ignore the constraint of continuity and concentrate on the set of functions which are non-increasing, non-negative function on $[0,1]$. 

%After the generalization, the key difference is that we allow there exists the discontinuity points in an interim allocation function. Therefore, we need to generalize the definition of SBNE.

\begin{definition}[Generalized Symmetric Bayesian Nash Equilibrium (GSBNE)]
\label{definition:contestants-generalized-symmetric-equilibrium}
Given $x_1(h),\cdots,x_m(h)$, we say $\vec{\pi}^*(q)\in\mathcal S$ is a generalized symmetric Bayes-Nash equilibrium, if for any $j\in[m]$ and any $q\in[0,1]$, it holds that
\begin{equation*}
    \{j\in[m]:\pi^*_j(q)>0\}\subseteq\arg\max_{j'\in[m]}x_{j'}(H_{\pi^*_{j'}}(q)+0)\footnote{We define the notions $f(x+0) = \lim_{x'\to x+0}f(x')$ and $f(x-0) = \lim_{x'\to x-0}f(x')$ as the right limit and the left limit of $f(x)$ respectively.}.
\end{equation*}
W.l.o.g, we define $x_j(1+0)=x_j(1)$, so that $x_{j'}(H_{\pi^*_{j'}}(q)+0)$ is always well-defined.

If $\vec{\pi}^*(q)$ is a generalized symmetric Bayesian Nash equilibrium, its cumulation $\vec{H}^*(q)$ is called a generalized cumulative equilibrium behavior (GCEB).
\end{definition}
Note that the definition of GSBNE is compatible with the definition of symmetric BNE (\Cref{definition:contestants-symmetric-equilibrium}) when all interim allocation functions are continuous. 
Now we extend \Cref{theorem:rankbyskill-player-equilibrium-general-condition} to completely characterize the generalized cumulative equilibrium behavior.
\begin{theorem}
\label{theorem:player-equilibrium-general-condition}
Define $\underline{X}=Q^{-1}(1)$ and $\overline{X}=\max_{j} x_j(0)$.  %, which is a countable set.For any $\vec{H}^*\in\mathcal H$, 
$\vec{H}^*(q)$ is a generalized cumulative equilibrium behavior if and only if the following condition holds: $\forall X\in [\underline{X},\overline{X}]$, define $q_1=Q(X+0)$ and 
$q_2=Q(X)$. For each $j\in[m]$ %it has $H_j^*(q) = x_j^{-1}(X)$.
and any $q\in[q_1,\min(q_2,1)]$, $$x_j^{-1}(X+0)\leq H_j^*(q)\leq x_j^{-1}(X).$$
%and $\forall q\in[0,1]$, $$x_j^{-1}(Q^{-1}(q)+0)\leq H^*_j(q)\leq x_j^{-1}(Q^{-1}(q)).$$
\end{theorem}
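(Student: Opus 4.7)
The plan is to prove both directions by leveraging the intuition that a contestant at quantile $q$ effectively ``consumes'' the highest available expected prize, so at equilibrium the level $X^*(q) := \max_j x_j(H_j^*(q)+0)$ should behave like $Q^{-1}(q)$, i.e., the $q$-th highest prize across all contests. The characterization formalizes this: an interval $[q_1, q_2] = [Q(X+0), Q(X)]$ of quantiles all receive level $X$, and during this interval each $H_j^*$ fills precisely the ``slots'' in contest $j$ offering at least $X$, bounded by $x_j^{-1}(X+0)$ and $x_j^{-1}(X)$.

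For necessity, assume $\vec{H}^*$ is a GCEB, witnessed by some $\vec{\pi}^* \in \mathcal{S}$. I would first show that $X^*(q)$ is non-increasing in $q$: if a lower-quantile contestant achieves prize level $X$ in some contest $j$, then by monotonicity of $x_j$ and $H_j^*$, a higher-quantile contestant who deviates to contest $j$ obtains $x_j(H_j^*(q)+0) \leq X$, and her equilibrium payoff cannot exceed $X$ either. Fix $X \in [\underline{X}, \overline{X}]$ and set $q_1 = Q(X+0)$, $q_2 = Q(X)$. Using the monotonicity of $X^*(q)$ together with $\sum_j H_j^*(q) = q$, I would show $\{q : X^*(q) = X\} = [q_1, \min(q_2, 1)]$ and that on this interval $x_j^{-1}(X+0) \le H_j^*(q) \le x_j^{-1}(X)$ for every $j$; summing these bounds gives $Q(X+0) \le q \le Q(X)$, so the bounds are tight enough to pin down $H_j^*$ modulo the allowed slack.

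For sufficiency, given $\vec{H}^*$ satisfying the condition, I would first verify that $\vec{H}^* \in \mathcal{H}$. Monotonicity and non-negativity of each $H_j^*$ follow directly from the prescription across all $X$, and continuity follows by matching the two-sided bounds at each transition. The identity $\sum_j H_j^*(q) = q$ is exactly the squeeze $Q(X+0) \le \sum_j H_j^*(q) \le Q(X)$ combined with $q \in [q_1, \min(q_2,1)]$. I would then invoke Proposition~\ref{proposition:differentiating-cumulative-behavior} to construct a $\vec{\pi}^*$ whose cumulation is $\vec{H}^*$, and verify the GSBNE condition from Definition~\ref{definition:contestants-generalized-symmetric-equilibrium}: for any $q$ with $\pi_j^*(q) > 0$, the condition gives $x_j(H_j^*(q)+0) = X = X^*(q)$, so contest $j$ is optimal. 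The tie-breaking piece of construction \eqref{eqn:h to pi} handles exactly the points where several contests simultaneously attain the maximum.

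The main obstacle will be the careful bookkeeping around jumps of $x_j$ (which produce jumps in $Q$) and flat intervals of $x_j$ (which produce jumps in $x_j^{-1}$). When $x_j$ jumps at some $h$, the interval $[x_j^{-1}(X+0), x_j^{-1}(X)]$ collapses to a point while $Q$ still jumps; conversely, a flat stretch of $x_j$ at height $X$ yields a genuine range for $H_j^*(q)$ that must be coordinated with the analogous ranges for the other contests so that the sum equals $q$. The right-limit convention $x_j(H_j^*(q)+0)$ built into Definition~\ref{definition:contestants-generalized-symmetric-equilibrium} is exactly what makes the equilibrium condition robust here; I would split the analysis by whether $q$ lies in the interior or on the boundary of a level set of $X^*$, and treat the two inequalities $x_j^{-1}(X+0) \le H_j^*(q)$ and $H_j^*(q) \le x_j^{-1}(X)$ symmetrically using left/right limits.
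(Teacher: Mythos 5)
Your outline is conceptually reasonable but there are genuine gaps at the steps that actually carry the proof.

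For sufficiency, the key claim you make is: ``for any $q$ with $\pi_j^*(q) > 0$, the condition gives $x_j(H_j^*(q)+0) = X = X^*(q)$.'' This does not follow from the stated condition. The condition only constrains $H_j^*(q)$ to the interval $[x_j^{-1}(X+0), x_j^{-1}(X)]$, which pins down $x_j(H_j^*(q)+0)$ only when $x_j$ has no jump there. In particular, if $H_j^*(q) = x_j^{-1}(X)$ and $x_j$ jumps down past $X$ at that value of $h$, then $x_j(H_j^*(q)+0) < X$, so the ``$=X$'' claim is simply false at such boundary points, and contest $j$ may look suboptimal. What saves the argument --- and what the paper's proof actually uses --- is that for $q\notin E$ the constructed $\pi_j^*(q)$ is the pointwise derivative $(H_j^*)'(q)$, so if $\pi_j^*(q)>0$ then $H_j^*$ is strictly increasing at $q$; one then shows that if some $j'$ strictly beats $j$ at $q$, the condition applied at a slightly larger $q'=q+\tfrac{1}{2}\delta$ forces $H_j^*(q')\le H_j^*(q)$, contradicting strict monotonicity. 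That derivative-based contradiction is the heart of the sufficiency direction, and your sketch omits it entirely. You also need the inequalities at $q'$ relative to the \emph{new} level $X'=Q^{-1}(q')$, not $X$ itself, which is exactly the ``bookkeeping around jumps'' you flag as a difficulty but do not resolve.

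For necessity, you propose to first prove $X^*(q)$ is non-increasing and then read off the level sets; this is a different route from the paper, which instead runs a direct contradiction: if $H_{j_0}^*(q) < x_{j_0}^{-1}(X+0)$ (or $> x_j^{-1}(X)$), use $\sum_j H_j^*(q)=q$ together with $q\ge Q(X+0)$ (resp. $q\le Q(X)$) to find a $j_1$ whose $H_{j_1}^*(q)$ exceeds the corresponding bound in the opposite direction, then locate a $q'\in(0,q)$ where $\pi_{j_1}^*(q')>0$ but $j_1$ is strictly dominated, violating \Cref{definition:contestants-generalized-symmetric-equilibrium}. Your monotonicity route might be made to work, but as written it does not pin down the per-contest bounds $x_j^{-1}(X+0)\le H_j^*(q)\le x_j^{-1}(X)$: knowing the aggregate level $X^*(q)$ does not by itself control how the mass $q=\sum_j H_j^*(q)$ is distributed across contests when several $x_j^{-1}$'s are flat at height $X$. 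You acknowledge this (``modulo the allowed slack'') but do not close it, and closing it is exactly where the equilibrium condition has to be invoked at nearby quantiles, as in the paper's contradiction argument. So while your overall scaffolding (verify $\vec{H}^*\in\mathcal{H}$, invoke \Cref{proposition:differentiating-cumulative-behavior}, check the GSBNE condition pointwise) matches the paper's, both the forward and backward directions are missing the argument that makes them true.
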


\subsection{Simplification of Contest Environment}
\label{subsec:simp contest envm}

From a certain contest designer $j$'s perspective, the number of contests $m$ can be reduced to two, by viewing other designers as one bundle, and combining their interim allocation functions. %Intuitively, when others' strategies are transparent, it is reasonable to regard them as ``one'' opponent. In another word, whatever strategies they make, from my perspective, we always can combine these strategies into a most threatening strategy as the opponent's strategy.
Specifically, given the budgets $t_{j'}$ and strategies $x_{j'}(h)$ of all other designers $j'\neq j$, we define $t_{-j}= \sum_{j' \neq j}t_{j'}$ and $x_{-j}(h)=\max\{x:\sum_{j'\neq j}x_{j'}^{-1}(x)\geq h\}$, then the rest $m-1$ designers are bundled as one designer with budget $t_{-j}$ and the interim allocation function is $x_{-j}(h)$.
%, denoted by the opponent $-j$. 
Note that $x_{-j}(h)$ is feasible within budget $t_{-j}$, since $\int_0^1 x_{-j}(h)dh=\int_0^{\overline{X}}x_{-j}^{-1}(X)dX=\sum_{j'\neq j}\int_0^{\overline{X}}x_{j'}^{-1}(x)dx=\sum_{j'\neq j}\int_0^1x_{j'}(h)dh\leq\sum_{j'\neq j}t_{j'}= t_{-j}$. %To demonstrate that this simplification can apply to our problem, we need to show that, for any GSBNE of contestants under $\big(x_1(h), \cdots, x_j(h), \cdots, x_m(h)\big)$ ($m$ contests), there exists a GSBNE with $\big(x_j(h), x_{-j}(h)\big)$ (two contests).

% For the other designers, we set $t_{-j}= \sum_{j' \neq j}t_{j'}$ and $x_{-j}(h)=\max\{x:\sum_{j'\neq j}x_{j'}^{-1}(x)\geq h\}$, then $x_{-j}^{-1}(x)=\max\{h:x_{-j}(h)\geq x\}=\sum_{j'\neq j}x_{j'}^{-1}(x)$ and $\int_0^1 x_{-j}(h)dh\leq t_{-j}$.
%\int_{0}^{\overline{X}}\sum_{j'\neq j}x_{j'}^{-1}(x)dx=\sum_{j'\neq j}\int_0^1x_{j'}(h)dh$.
% Intuitively, we can bundle all other designers as a single designer with prize structure $x_{-j}(h)$ and budget $t_{-j}$. 
Now we show that this simplification does not affect the contestant's behavior from designer $j$'s view. Given $x_1(h),\cdots,x_m(h)$, let $\mathcal H^*(x_1(h),\cdots,x_m(h))$ denote the set of all generalized cumulative equilibrium behaviors, and let $\mathcal H^*_j(x_1(h),\cdots,x_m(h))$ denote the set of $H_j(q)$ that is the $j$-th component of some $\vec{H}\in\mathcal H^*(x_1(h),\cdots,x_m(h))$. We have the following proposition. %that, when fixed designer $j$, any equilibrium of contestants with $m$ contests can be transferred to an equilibrium with two contests, shown in the following proposition.

\begin{proposition}
\label{proposition:m-contest-equivalent-with-two-contest}
Given $x_1(h),\cdots,x_m(h)$, for any GCEB, $\vec{H}(q)=(H_1(q), H_2(q),\cdots, H_m(q))\in\mathcal H^*(x_1(h),\cdots,x_m(h))$ and any $j\in[m]$, if we define $\tilde{H}(q)=(H_j(q),q-H_j(q))$, then $\tilde{H}(q)\in\mathcal H^*(x_j(h),x_{-j}(h))$. In other words, $H_j(q)$ appears in the GCEB $\tilde{H}(q)$ in the GoC game with only two contest designers, whose interim allocation functions are $x_j(h)$ and $x_{-j}(h)$, respectively.
\end{proposition}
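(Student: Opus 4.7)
The plan is to reduce the claim to the characterization in \Cref{theorem:player-equilibrium-general-condition} applied twice, once to the original $m$-contest environment and once to the 2-contest environment $(x_j(h), x_{-j}(h))$. The key preliminary is a clean identity showing that the ``inverse'' of the bundle $x_{-j}$ is literally the sum of the inverses of its constituents, after which the verification of the equilibrium condition for $\tilde H(q)=(H_j(q),q-H_j(q))$ reduces to summing the componentwise bounds over $j'\neq j$.

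First, I would establish the identity $x_{-j}^{-1}(X)=\sum_{j'\neq j}x_{j'}^{-1}(X)$, together with its right-limit analogue $x_{-j}^{-1}(X+0)=\sum_{j'\neq j}x_{j'}^{-1}(X+0)$. Set $F(x):=\sum_{j'\neq j}x_{j'}^{-1}(x)$; since each $x_{j'}^{-1}$ is non-increasing and left-continuous, so is $F$. From the definition $x_{-j}(h)=\sup\{x:F(x)\geq h\}$, I would show that $x_{-j}(h)\geq X$ iff $F(X)\geq h$: the forward direction uses $X$ itself as a witness, and for the converse, if $F(X)<h$ then every $x>X$ fails $F(x)\geq h$ by monotonicity, and left-continuity of $F$ rules out a sequence $x_n\uparrow X$ with $F(x_n)\geq h$, so the supremum is strictly less than $X$. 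Taking $\sup$ over $h$ yields $x_{-j}^{-1}(X)=F(X)$, and passing to right limits in $X$ gives the other identity since finite sums commute with limits.

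Second, I would use this identity to identify the quantities that enter \Cref{theorem:player-equilibrium-general-condition}. In the 2-contest environment write $\tilde Q(X):=x_j^{-1}(X)+x_{-j}^{-1}(X)$; then $\tilde Q(X)=\sum_{j'=1}^mx_{j'}^{-1}(X)=Q(X)$ and similarly $\tilde Q(X+0)=Q(X+0)$, so the threshold interval $[q_1,\min(q_2,1)]$ with $q_1=Q(X+0)$, $q_2=Q(X)$ and the range $[\underline X,\overline X]$ are the same in both environments. Now fix any $\vec H\in\mathcal H^*(x_1,\ldots,x_m)$, any $X\in[\underline X,\overline X]$, and any $q\in[q_1,\min(q_2,1)]$. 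Applying \Cref{theorem:player-equilibrium-general-condition} componentwise gives $x_{j'}^{-1}(X+0)\leq H_{j'}(q)\leq x_{j'}^{-1}(X)$ for every $j'\in[m]$. The bound at $j'=j$ is exactly the first-coordinate condition required of $\tilde H$. Summing the bounds over $j'\neq j$ and invoking $\sum_{j'=1}^mH_{j'}(q)=q$ (a defining property of $\mathcal H$) gives
\[
\sum_{j'\neq j}x_{j'}^{-1}(X+0)\;\leq\;q-H_j(q)\;\leq\;\sum_{j'\neq j}x_{j'}^{-1}(X),
\]
which by the identity above is precisely $x_{-j}^{-1}(X+0)\leq q-H_j(q)\leq x_{-j}^{-1}(X)$, the second-coordinate condition. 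Since $\tilde H$ clearly lies in the class $\mathcal H$ associated with two contests (its components are non-negative, non-decreasing, continuous, and sum to $q$), \Cref{theorem:player-equilibrium-general-condition} concludes $\tilde H\in\mathcal H^*(x_j,x_{-j})$.

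The main obstacle I foresee is the careful proof of $x_{-j}^{-1}=\sum_{j'\neq j}x_{j'}^{-1}$; the supremum-of-supremums definition of $x_{-j}$ interacts delicately with jump points, and left-continuity of each $x_{j'}^{-1}$ must be used to rule out the boundary case. Once this identity and its right-limit version are in hand, the rest of the argument is a direct transcription through \Cref{theorem:player-equilibrium-general-condition}.
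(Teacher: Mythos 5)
Your proof follows essentially the same route as the paper's: identify the aggregate inverse $x_{-j}^{-1}(X)=\sum_{j'\neq j}x_{j'}^{-1}(X)$ (and its right-limit analogue), conclude that the threshold function $\tilde Q$ for the two-contest environment coincides with the original $Q$, and then derive the second-coordinate bounds on $q-H_j(q)=\sum_{j'\neq j}H_{j'}(q)$ by summing the componentwise bounds from \Cref{theorem:player-equilibrium-general-condition}. The only difference is that you spell out a careful argument for the inverse identity (using left-continuity of each $x_{j'}^{-1}$), which the paper takes for granted, having effectively built that identity into the definition of $x_{-j}(h)$; this is a reasonable point to make explicit, but it does not constitute a different method.
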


With this proposition, we can simplify the environment from $m$ contests to two contest. In the remaining section, except for the special case, we directly consider the GoD between two designers, designer $j$ and designer $-j$.% and designer the strategy ($x_j(h)$) to combat with the designer $-j$.

\subsection{The Best and Worst Equilibrium for A Certain Designer}
\label{subsec:charac best and worst equ}
When $x_j(h)$ and $x_{-j}(h)$ are not restricted to be strictly decreasing, the GCEB in $\mathcal H^*(x_j(h),x_{-j}(h))$ can be non-unique. Consequently, designer $j$ may get different utilities under different equilibrium behaviors of contestants. To deal with this non-uniqueness, we define the worst utility and the best utility of designer $j$ respectively, as 
\begin{align*}
    &R^{Worst}(x_j(h)|x_{-j}(h),v_j(q))=\min_{H_j\in \mathcal H_j^*(x_j(h),x_{-j}(h))}\int_0^1 v_j(q)dH_j(q), ~ \text{and}\\
    &R^{Best}(x_j(h)|x_{-j}(h),v_j(q))=\max_{H_j\in \mathcal H_j^*(x_j(h),x_{-j}(h))}\int_0^1 v_j(q)dH_j(q).
\end{align*}
Then, we give the exact characterization of the best and the worst $H^*_j(q)\in\mathcal H^*_j(x_j(h),x_{-j}(h))$, which are from the cumulative equilibrium which maximizes/minimizes $j$'s utility, respectively. Roughly speaking, the non-uniqueness of equilibria is due to the coincidence of discontinuous points of $x_j^{-1}(X)$ and $x_{-j}^{-1}(X)$\footnote{We note that this happens even in the original model, when there is two contests $j_1,j_2$ such that $w_{j_1,1}=w_{j_1,2}=\cdots=w_{j_1,n}=w_{j_2,1}=w_{j_2,2}=\cdots=w_{j_2,n}$.}, which indicates that both $x_j(h)$ and $x_{-j}(h)$ are equal to a certain constant $X$ on some positive-lengthed interval. As a result, the quantile of the contestants who receive expected prize $X$ can vary in a range. Intuitively, the best utility of designer $j$ is achieved when contestants with the best possible skills join the contest $j$. On the other hand, the worst case is that the contestants with the lowest skills choose the contest $j$. We denote $H_j^{Worst}(q)$ and $H_j^{Best}(q)$ as the worst and best equilibrium behaviors for designer $j$, respectively, and characterize them in the following proposition.
\begin{proposition}
\label{proposition:best-response-worst-and-best-equilibrium}
Given $x_j(h)$ and $x_{-j}(h)$, $H_j^{Worst}(q)$ and $H_j^{Best}(q)$ can be calculated as following:
For every $q\in[0,1]$, let $X=Q^{-1}(q)$, 
\begin{align*}
H_j^{Worst}(q)&=\max\{x_j^{-1}(X+0), q-x_{-j}^{-1}(X)\}
% &=\begin{cases}
% x_j^{-1}(X+0),    & \text{if }q< x_j^{-1}(X+0)+x_{-j}^{-1}(X);\\
% q-x_{-j}^{-1}(X), & \text{if }q\geq x_j^{-1}(X+0)+x_{-j}^{-1}(X),
% \end{cases}
\end{align*} 
and
\begin{align*}
H_j^{Best}(q)&=\min\{x_j^{-1}(X),q-x_{-j}^{-1}(X+0)\}.
\end{align*}
For any $v_j(q)$, we have $R^{Worst}(x_j(h))=\int_0^1v_j(q)dH_j^{Worst}(q)$ and $R^{Best}(x_j(h))= \int_0^1v_j(q)dH_j^{Best}(q)$.
\end{proposition}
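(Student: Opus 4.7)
The plan is to first verify that $H_j^{Best}$ and $H_j^{Worst}$ (together with their complements $H_{-j} = q - H_j$) are valid GCEBs, then to establish that they are pointwise extremal among all GCEBs, and finally to convert pointwise extremality into the claimed utility identities via integration by parts.

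For validity, I would show membership in $\mathcal H$: the sum constraint $H_j^{Best}(q) + H_{-j}^{Best}(q) = q$ is immediate, and non-negativity follows because $q \geq Q(X+0) \geq x_{-j}^{-1}(X+0)$, so both arguments of the min are non-negative. Monotonicity and continuity in $q$ are checked by a case split on the behavior of $Q^{-1}(q)$: on any interval where $Q^{-1}(q) = X$ is constant, $H_j^{Best}$ reduces to $\min\{x_j^{-1}(X),\, q - x_{-j}^{-1}(X+0)\}$, which is non-decreasing and continuous in $q$; across a $q$ where $Q^{-1}$ drops, the two branches of the min are forced to coincide, so continuity is preserved. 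To verify the equilibrium condition of \Cref{theorem:player-equilibrium-general-condition}, fix $X \in [\underline X, \overline X]$ and $q \in [Q(X+0), \min(Q(X), 1)]$ so that $Q^{-1}(q) = X$. The upper bound $H_j^{Best}(q) \leq x_j^{-1}(X)$ is immediate from the min. The lower bound $H_j^{Best}(q) \geq x_j^{-1}(X+0)$ reduces to $q - x_{-j}^{-1}(X+0) \geq x_j^{-1}(X+0)$, i.e.\ $q \geq Q(X+0)$, which is exactly the left endpoint of the interval. The companion bounds on $H_{-j}^{Best}$ follow symmetrically, and the argument for $H_j^{Worst}$ is analogous with the roles of the left and right limits exchanged.

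For pointwise extremality, fix any GCEB $\vec H^* \in \mathcal H^*(x_j(h), x_{-j}(h))$ and a point $q$ with $X = Q^{-1}(q)$. From \Cref{theorem:player-equilibrium-general-condition}, $H_j^*(q) \leq x_j^{-1}(X)$ and $H_{-j}^*(q) \geq x_{-j}^{-1}(X+0)$; combined with $H_j^*(q) + H_{-j}^*(q) = q$, the second inequality yields $H_j^*(q) \leq q - x_{-j}^{-1}(X+0)$. Together these give $H_j^*(q) \leq H_j^{Best}(q)$, and $H_j^*(q) \geq H_j^{Worst}(q)$ follows by the symmetric argument.

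For the utility claim, Riemann-Stieltjes integration by parts gives
\begin{equation*}
\int_0^1 v_j(q)\, dH_j(q) \;=\; v_j(1)\,H_j(1) \;+\; \int_0^1 H_j(q)\,\bigl(-dv_j(q)\bigr),
\end{equation*}
where $-dv_j$ is a non-negative Borel measure because $v_j$ is non-increasing in $q$, and $v_j(1) \geq 0$. Both terms are therefore monotone non-decreasing in $H_j$ in the pointwise order, so the maximum is attained at $H_j^{Best}$ and the minimum at $H_j^{Worst}$. The main obstacle I anticipate is the continuity check in the validity step, specifically at the $q$'s where $Q^{-1}(q)$ transitions between a flat piece and a strictly decreasing piece; handling this cleanly requires careful bookkeeping of the left-continuity of $x_j^{-1}$, $x_{-j}^{-1}$ and $Q$. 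A secondary subtlety is justifying the integration by parts for $v_j$ that is only assumed bounded and non-increasing, which can be handled by approximating $v_j$ by non-increasing step functions for which the identity reduces to a finite summation.
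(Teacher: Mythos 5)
Your proposal is correct and follows essentially the same structure as the paper's own proof: verify via \Cref{theorem:player-equilibrium-general-condition} that $H_j^{Worst}$ and $H_j^{Best}$ (with complements) are GCEBs, establish pointwise extremality among all GCEBs, and conclude by monotonicity of the Riemann--Stieltjes integral in the integrator. The only real difference is the last step: the paper invokes its \Cref{lemma:integral-compare}, proved by Abel summation of Riemann--Stieltjes sums, whereas you invoke continuous integration by parts and the non-negativity of the measure $-dv_j$; these are two framings of the same fact. One small thing you do that the paper leaves implicit: you explicitly check that the proposed extremal behaviors belong to $\mathcal H$ (continuity, monotonicity, sum-to-$q$), and you correctly flag the one place that requires care --- continuity at quantiles where $Q^{-1}$ jumps. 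Your observation that the two branches of the $\min$ (resp.\ $\max$) are forced to agree across such a jump is the right resolution, and it is worth recording since the paper simply asserts membership.
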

\subsection{Horizontal Stretching}
\label{subsec: HS}
Technically, we come up with a useful tool called \emph{Horizontal Stretching} to help the construction of advantageous strategy. Horizontal stretching allows a designer to imitate the strategy of her opponent, or to take advantage of an existing strategy which exceeds her budget, to get a guaranteed utility. All she needs to do is to horizontally scale the interim allocation function to fit it into her budget constraint.
The intuition is that, to reduce the expense, rather than lowering the prize amount to cover more contestants, which may cause a complete loss of the high-skilled contestants, it seems to be more efficient to reduce the number of prize winners, while slightly raising the prize level to compete for contestants with higher value. 

To formalize the idea of horizontal stretching, we introduce the concept of \textit{$C$-dominating} and \textit{$C$-strongly-dominating}.
%Next, We define two concepts to help us realize the core tool, horizontal stretching, which will be exploited to show the relationship between the best utility and worst utility. 
%for a designer to fit an existing interim allocation function (either from her opponent or from herself) into her budget constraint, achieving a guaranteed utility. Intuitively, to save budget, rather than lowering the prize amount, which may cause a complete loss of the high-skilled contestants, it seems more efficient to reduce the quantity of prizes, while raising the prize level to compete for more value. Horizontal stretching follows with this point and can help designers adjust interim allocation properly to improve the utility. 
\begin{definition}
Let $x(h),\hat{x}(h)$ be two interim allocation functions. Given $C>0$, we call $x(h)$ $C$-dominates $\hat{x}(h)$, if
$$x(Ch)\geq \hat{x}(h),\forall h\in[0,\min\{1,\frac1C\}].$$
Moreover, $x(h)$ $C$-strongly-dominates $\hat{x}(h)$, if
$$x(Ch)> \hat{x}(h),\forall h\in[0,\min\{1,\frac1C\}].$$
\end{definition}
%Intuitively, ``C-domonite'' \li{Add insight of C-D}

It can be seen that, given some $\hat{x}(h)$, if we horizontally stretch $\hat{x}(h)$ by a factor of $C$, to construct $x(h)$ such that $x(Ch)=\hat{x}(h)$, then $x(h)$ $C$-dominates $\hat{x}(h)$, while the budget requirement is also scaled with factor $C$, i.e. $\int_0^1x(h)dh=C\int_0^1\hat{x}(h)dh$. Moreover, for any $\epsilon>0$, we have $x(h)+\epsilon$ $C$-strongly-dominates $\hat{x}(h)$, while $\int_0^1(x(h)+\epsilon)dh=C\int_0^1\hat{x}(h)dh+\epsilon$.

If we horizontally stretch a strategy with factor $C\in(0,1]$, it is interesting that a proportional utility is guaranteed, as stated in the following proposition.
\begin{proposition}\label{prop:relationship between w and b under C}
    Given the value function $v_j(q)$ of designer $j$ and the opponent's strategy $x_{-j}(h)$, the following propositions hold:
    \begin{enumerate}
        \item If $x_j(h)$ $C$-dominates $\hat{x}_j(h)$ for some $C\in(0,1]$, then $R^{Worst}(x_j(h))$  $\geq C \cdot R^{Worst}(\hat{x}_j(h))$ and  $R^{Best}(x_j(h))\geq C \cdot R^{Best}(\hat{x}_j(h))$ hold;
        \item If $x_j(h)$ $C$-strongly-dominates $\hat{x}_j(h)$ for some $C\in(0,1]$, then $R^{Worst}(x_j(h))\geq C \cdot R^{Best}(\hat{x}_j(h))$ holds.
    \end{enumerate}
\end{proposition}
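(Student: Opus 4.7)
The plan is to reduce each of the two statements to a pointwise inequality between cumulative equilibrium behaviors, and then close the argument by Riemann--Stieltjes integration by parts. Concretely, for part (1) I aim to establish $H_j^{Worst}(q) \geq C\,\hat{H}_j^{Worst}(q)$ and $H_j^{Best}(q) \geq C\,\hat{H}_j^{Best}(q)$ for every $q \in [0,1]$, where hatted quantities refer to the game in which designer $j$ uses $\hat{x}_j(h)$ against the same opponent $x_{-j}(h)$; for part (2) I aim to strengthen this to $H_j^{Worst}(q) \geq C\,\hat{H}_j^{Best}(q)$. Given any such pointwise bound, integration by parts rewrites the designer's utility as $R(x_j) = v_j(1)\,H_j(1) + \int_0^1 H_j(q)\,d(-v_j(q))$; since $v_j$ is non-increasing and non-negative, $-dv_j$ is a non-negative measure and $v_j(1) \geq 0$, so the factor-$C$ inequality on $R$ follows at once.

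The key algebraic input from the $C$-domination condition is the inverse-form inequality $x_j^{-1}(X) \geq C\,\hat{x}_j^{-1}(X)$ for every prize level $X$: whenever $\hat{x}_j(h) \geq X$, the $C$-dominates condition gives $x_j(Ch) \geq X$, hence $Ch \leq x_j^{-1}(X)$, and taking a supremum over such $h$ yields the bound. Under $C$-strong-domination the same argument applied to the strict inequality $x_j(Ch) > \hat{x}_j(h)$ upgrades this to $x_j^{-1}(X+0) \geq C\,\hat{x}_j^{-1}(X)$. This sharpening is exactly what part (2) needs, because the formula for $H_j^{Worst}$ in \Cref{proposition:best-response-worst-and-best-equilibrium} involves $x_j^{-1}(X+0)$ while the formula for $\hat{H}_j^{Best}$ uses $\hat{x}_j^{-1}(\hat{X})$ with no right limit.

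For the pointwise comparison I will use the explicit formulas in \Cref{proposition:best-response-worst-and-best-equilibrium}. At a fixed quantile $q$ the two games generically select different prize levels $X = Q^{-1}(q)$ and $\hat{X} = \hat{Q}^{-1}(q)$, so the analysis splits on whether $X \leq \hat{X}$ or $X > \hat{X}$. In the first case, monotonicity of $\hat{x}_j^{-1}$ combined with the inverse-form inequality bounds the ``$x_j^{-1}(\cdot)$''-arm on the left by the corresponding arm on the right. In the second case, monotonicity of $s(\cdot) := x_{-j}^{-1}(\cdot)$ yields $s(X) \leq s(\hat{X}+0)$, so the ``$q - s(\cdot)$''-arm on the left dominates the corresponding arm on the right; combined with $x_j^{-1}(X) \geq q - s(X)$ (immediate from $Q(X) \geq q$ at $X = Q^{-1}(q)$) this closes the case.

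The main obstacle will be the best-equilibrium inequality $H_j^{Best} \geq C\,\hat{H}_j^{Best}$, where both sides are minima and a naive componentwise dominance would require stronger input than $C$-domination provides. My plan is to split further on whether $\hat{x}_j^{-1}(\hat{X}) \leq q - s(\hat{X}+0)$, which identifies the active arm of the right-hand min, and then to verify in each sub-case that \emph{both} arms of the left-hand min separately exceed that active arm; the inequality $A = x_j^{-1}(X) \geq q - s(\hat{X}+0)$ that appears in the $X > \hat{X}$ sub-case here reduces again to $Q(X) \geq q$. The analogous case analysis for part (2) is where the strengthened inequality $x_j^{-1}(X+0) \geq C\,\hat{x}_j^{-1}(X)$ is indispensable, since only it matches the $x_j^{-1}(X+0)$ appearing in $H_j^{Worst}$ against the $\hat{x}_j^{-1}(\hat{X})$ appearing in $\hat{H}_j^{Best}$ without losing a factor.
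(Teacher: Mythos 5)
Your proposal is correct, but it takes a genuinely different route from the paper's. Both arguments reduce to the same pointwise inequalities ($H_j^{Worst}(q)\geq C\hat H_j^{Worst}(q)$, $H_j^{Best}(q)\geq C\hat H_j^{Best}(q)$, and for part (2), $H_j^{Worst}(q)\geq C\hat H_j^{Best}(q)$) followed by a Stieltjes integral comparison (the paper packages this as \Cref{lemma:integral-compare}, proved via Abel summation on Riemann--Stieltjes sums; your integration-by-parts argument $\int v_j\,dH_j=v_j(1)H_j(1)+\int H_j\,d(-v_j)$ is an equivalent way to reach the same conclusion). The difference lies in how the pointwise bound is established. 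The paper factors $C$-domination into two pure cases: uniform domination ($C=1$, where $Q\geq\hat Q$ pointwise and everything compares arm-by-arm) and an exact horizontal stretch ($x_j(Ch)=\hat x_j(h)$, handled by constructing a smaller quantile $q'=C\hat H_j^{Best}(q)+x_{-j}^{-1}(\hat X+0)\leq q$ at which $Q^{-1}(q')=\hat X$, computing $H_j^{Best}(q')=C\hat H_j^{Best}(q)$ exactly, and invoking monotonicity of $H_j$); the general case is then obtained by composing via the intermediate $\bar x_j(h)=\hat x_j(h/C)$. You instead attack the general case directly: extract the inverse-form inequalities $x_j^{-1}(X)\geq C\hat x_j^{-1}(X)$ (and $x_j^{-1}(X+0)\geq C\hat x_j^{-1}(X)$ under strong domination), then case-split on $X=Q^{-1}(q)$ versus $\hat X=\hat Q^{-1}(q)$ and on the active arm of each min/max from \Cref{proposition:best-response-worst-and-best-equilibrium}. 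I checked the key sub-cases and they close; the step you elide (that $x_j^{-1}(X)\geq q-s(\hat X+0)$ in the $X>\hat X$ sub-case) needs both $Q(X)\geq q$ and $s(X)\leq s(\hat X+0)$, the latter following from strict $X>\hat X$ by monotonicity and left-continuity of $s$, which your strict inequality supplies. What each buys: the paper's factoring isolates the two semantic ingredients of $C$-domination and the stretch case admits a clean change-of-quantile trick that avoids comparing $X$ to $\hat X$ at all; your direct approach avoids constructing the intermediate $\bar x_j$ and the two-lemma composition, at the cost of a heavier case analysis in which one must track right-limits $X+0$, $\hat X+0$ and the $X=\hat X$ boundary with some care.
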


In addition, we obtain lower bounds on the utility in both the worst and the best equilibrium, which is helpful in following discussions. 
\begin{proposition}\label{proposition:best-response-trivial-lowerbound}
For any $v_j(q)$ and $x_{-j}(h)$, the following statements hold.
\begin{enumerate}
    \item If $x_j(h)$ $C$-dominates $x_{-j}(h)$, then $R^{Best}(x_j(h)|x_{-j}(h),v_j(q))\geq \frac{C}{C+1}\int_0^1v_j(q)dq$;
    \item If $x_j(h)$ $C$-strongly-dominates $x_{-j}(h)$, then $R^{Worst}(x_j(h)|x_{-j}(h),v_j(q))\geq \frac{C}{C+1}\int_0^1v_j(q)dq$;
    \item There is $x_j(h)\in\mathcal F_{t_j}$ such that $R^{Best}(x_j(h)|x_{-j}(h),v_j(q))\geq\frac{t_j}{t_j+t_{-j}}\int_0^1v_j(q)dq$.
\end{enumerate}
\end{proposition}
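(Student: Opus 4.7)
The plan is to prove all three parts via \Cref{proposition:best-response-worst-and-best-equilibrium}, which gives explicit formulas for $H_j^{Best}$ and $H_j^{Worst}$ in terms of $x_j^{-1}$ and $x_{-j}^{-1}$. The heart of the argument is to establish the pointwise lower bound $H_j^{Best}(q)\geq \tfrac{C}{C+1}q$ (respectively $H_j^{Worst}(q)\geq \tfrac{C}{C+1}q$) under $C$-dominance (resp.\ $C$-strong-dominance) for every $q\in[0,1]$, and then to upgrade it to the integral bound against $v_j$ via Riemann--Stieltjes integration by parts.

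For Part 1, fix $q$ and set $X=Q^{-1}(q)$. I first establish a ``transfer inequality'': since $x_j(Ch)\geq x_{-j}(h)$ for $h\in[0,1/C]$, for every $h'<x_{-j}^{-1}(X)$ with $Ch'\leq 1$ we have $x_{-j}(h')\geq X$ and hence $x_j(Ch')\geq X$, so $Ch'\leq x_j^{-1}(X)$; taking the supremum yields $x_j^{-1}(X)\geq \min(C\,x_{-j}^{-1}(X),\,1)$, and an analogous argument with right limits gives $x_j^{-1}(X+0)\geq \min(C\,x_{-j}^{-1}(X+0),\,1)$. Combined with $Q(X)\geq q\geq Q(X+0)$ and a short case split on whether $C\,x_{-j}^{-1}(\cdot)\leq 1$, these respectively give $x_j^{-1}(X)\geq \tfrac{C}{C+1}q$ and $x_{-j}^{-1}(X+0)\leq \tfrac{1}{C+1}q$, so $H_j^{Best}(q)\geq \tfrac{C}{C+1}q$. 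To pass to the integral bound, let $F(q)=H_j^{Best}(q)-\tfrac{C}{C+1}q\geq 0$ with $F(0)=0$; integration by parts gives
\begin{align*}
\int_0^1 v_j(q)\,dF(q) = v_j(1)F(1) + \int_0^1 F(q)\,d(-v_j(q)) \geq 0,
\end{align*}
because $v_j\geq 0$ and $-v_j$ is non-decreasing, which rearranges to $R^{Best}(x_j)\geq \tfrac{C}{C+1}\int_0^1 v_j(q)\,dq$.

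For Part 2, the analogous strategy applies to $H_j^{Worst}(q)=\max\{x_j^{-1}(X+0),\,q-x_{-j}^{-1}(X)\}$. Strong dominance sharpens the transfer inequality to $x_j^{-1}(X+0)\geq \min(C\,x_{-j}^{-1}(X),\,1)$: for each $h'<x_{-j}^{-1}(X)$, the strict inequality $x_j(Ch')>X$ supplies some $\delta>0$ with $x_j(Ch')\geq X+\delta$, hence $Ch'\leq x_j^{-1}(X+\delta)\leq x_j^{-1}(X+0)$. Writing $b=x_{-j}^{-1}(X)$, either $b\leq \tfrac{1}{C+1}q$ (so $q-b\geq\tfrac{C}{C+1}q$) or $b>\tfrac{1}{C+1}q$ (so $x_j^{-1}(X+0)\geq \min(Cb,1)\geq \tfrac{C}{C+1}q$). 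In either case $H_j^{Worst}(q)\geq\tfrac{C}{C+1}q$, and the same integration-by-parts step concludes. For Part 3, I invoke horizontal stretching: with $C=t_j/t_{-j}$, define $x_j(h)=x_{-j}(h/C)$ on $[0,\min(1,C)]$ and $x_j(h)=0$ beyond; then $x_j(Ch)=x_{-j}(h)$ on $[0,1/C]$, so $x_j$ $C$-dominates $x_{-j}$, and $\int_0^1 x_j(h)\,dh\leq C\,t_{-j}=t_j$, so $x_j\in\mathcal F_{t_j}$ and Part 1 delivers $R^{Best}(x_j)\geq \tfrac{t_j}{t_j+t_{-j}}\int_0^1 v_j(q)\,dq$ (the edge case $t_{-j}=0$ is handled by any strictly positive constant $x_j\in\mathcal F_{t_j}$, which trivially strongly dominates $x_{-j}\equiv 0$).

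The main technical obstacle is the careful handling of one-sided limits and the constant intervals of $x_j^{-1}$ and $x_{-j}^{-1}$. The asymmetric appearance of $X$ versus $X+0$ in the formulas of \Cref{proposition:best-response-worst-and-best-equilibrium} is precisely what forces the distinction between dominance and strong dominance: ordinary dominance is enough to control $x_j^{-1}(X)$ in $H_j^{Best}$, while only strong dominance can push the transfer inequality through to $x_j^{-1}(X+0)$, which is what appears in $H_j^{Worst}$.
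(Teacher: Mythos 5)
Your proof is correct and follows essentially the same route as the paper: both rely on the explicit formulas for $H_j^{Best}$ and $H_j^{Worst}$ from \Cref{proposition:best-response-worst-and-best-equilibrium}, the same transfer inequality $x_j^{-1}(X)\geq\min(C\,x_{-j}^{-1}(X),1)$ (with the one-sided variant under strong dominance), and horizontal stretching for Part 3. The only cosmetic differences are that you state the pointwise bound as $H_j(q)\geq\tfrac{C}{C+1}q$ rather than the equivalent $H_j(q)\geq C\,H_{-j}(q)$, and you pass to the integral via integration by parts rather than invoking the paper's \Cref{lemma:integral-compare}.
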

\vspace*{-10pt}
\section{Best Response of Designer}
\label{section:best_response}

With the preparatory results in Section \ref{sec:relax and simp}, we investigate the strategy of a contest designer starting from this section. %In Section \ref{section:players_equilibrium}, we have analyzed the contestants' equilibrium behaviors, when all prize structures of contests are determined. Based on the results of equilibria, from this section, we begin to investigate the strategies of a contest designer. 
Two kinds of situations are studied: last-mover strategy (Section \ref{section:best_response}) and first-mover strategy (Section \ref{section:safetylevel}). In this section, we concentrate on the optimal last-mover strategy (best response), i.e., a designer decides her strategy after witnessing all other designers' strategies, aiming to maximize her utility. 

% We first show the expected utility of designer $j$ in detail. When all contestants adopt the symmetric strategy $\vec{\pi}$, the expected utility of the designer $j$ is \begin{align*}
% R_j(\vec{\pi})=n\E_{q\sim U[0,1]}\left[v_j(q)\pi_j(q)\right]=n\int_0^1v_j(q)\pi_j(q)dq 
% \overset{(\dag)}{=} n\int_0^1v_j(q)dH_{j}(q),
% \end{align*}
% Where equation $(\dag)$ holds, because we can rewrite $\int_0^1v_j(q)\pi_j(q)dq$ in the form of Riemann-Stieltjes integral $\int_0^1v_j(q)dH_{j}(q)$, if $\vec{H}$ is the cumulation of $\vec{\pi}$.
% then $\int_0^1v_j(q)\pi_j(q)dq=\int_0^1v_j(q)dH_{\pi_j}(q)=\int_0^1v_j(q)dH_{j}(q)\overset{\dag}{=}$, therefore we obtain 
% \begin{equation}
% R_j(\vec{\pi})=n\int_0^1v_j(q)dH_{j}(q).
% \end{equation}
% This means that, the expected objective of any designer $j$ under a symmetric strategy $\vec{\pi}$ only depends on its cumulation $\vec{H}$

First, %we demonstrate how to reduce the number of contests from $m$ to two by ``bundling'' the other contest designers into one opponent, when we investigate the strategies of a certain designer, without any loss of generality. %, and show that this action does not affect any theoretical results. This simplification also applies to the analysis of first-mover strategies in Section \ref{section:safetylevel}. Next, 
we specify the definition of the best-response problem, where the possible ambiguity caused by the non-uniqueness of the contestants' cumulative equilibrium behavior is eliminated through some technical analysis. Then, we prove the NP-hardness of finding the best last-mover strategy by reduction from a weighted knapsack problem. Finally, we propose an FPTAS to output an $\epsilon$-approximate best response.

\subsection{Specification for Best Response Problem} %and Horizontal-Stretching}

% For any designer $j$, recall that
% $R_j(\vec{\pi}^*)=n\int_0^1v_j(q)\pi_j^*(q)dq$. By writing writting $\int_0^1v_j(q)\pi_j^*(q)dq$ in the form of Riemann-Stieltjes integral, we get $$R_j(\vec{\pi}^*)=\int_0^1v_j(q)dH^*_{j}(q).$$
%Therefore, $\vec{H}^*(q)$ plays a key role to bridge contestants and designers.
%which only depends on the function $H_{\pi^*_j}(q)$.
% Even though, in the environment of two contests, $x_j(h)$ and $x_{-j}(h)$ are given, the GCEB in $\mathcal H^*(x_j(h),x_{-j}(h))$ may not be unique. Consequently, designer $j$ may get different utilities under different equilibrium behaviors of contestants. Now, we define the best utility and the worst utility of designer $j$ as 
% % Given $x_j(h)$ and $x_{-j}(h)$, $\mathcal H^*_{x_j(h),x_{-j}(h)}$ may contains several different equilibrium behaviors. Denote $\mathcal H_j^{x_j(h),x_{-j}(h)}$ as the set of all possible $H_j(q)$ such that $\tilde{\vec{H}}(q)=(H_j(q),q-H_j(q))\in\mathcal H^*_{x_1(h),x_{-j}(h)}$. We can define $R^{Best}(\cdot)$ and $R^{Worst}(\cdot)$ as the maximum welfare and minimum welfare in an equilibrium of GoC respectively, i.e.,
% \begin{align*}
%     &R^{Best}(x_j(h)|x_{-j}(h),v_j(q))=\max_{H_j\in \mathcal H_j^*(x_j(h),x_{-j}(h))}\int_0^1 v_j(q)dH_j(q), ~ \text{and}\\
%     &R^{Worst}(x_j(h)|x_{-j}(h),v_j(q))=\min_{H_j\in \mathcal H_j^*(x_j(h),x_{-j}(h))}\int_0^1 v_j(q)dH_j(q).
% \end{align*}

As discussed in Subsection \ref{subsec:charac best and worst equ}, the GCEB of contestants can be non-unique. To define the best-response problem precisely, we must specify under which equilibrium behavior we are trying to optimize designer $j$'s utility.
% Herein, it is a problem of how to define the best response of designer $j$, given the opponent's strategy, i.e., which one is more appropriate for definition of best response, maximizing the best utility or maximizing the worst utility. 

Interestingly, the supremum of $R^{Best}(x_j(h))$
\footnote{In the rest part of this paper, we use the abbreviation $R^{Best}(x_j(h))$ and $R^{Worst}(x_j(h))$ to represent the best utility and the worst utility of designer of designer $j$, when there is no ambiguity.} 
is exactly equal to the supremum of  $R^{Worst}(x_j(h))$. It means that whichever measure we choose, the utility of the best response will be the same. We present this result in Theorem \ref{theorem:best-response-worst-and-best-equivalent}. 

\begin{theorem}\label{theorem:best-response-worst-and-best-equivalent}
Given any budget $t_j>0$ and the opponent's strategy $x_{-j}(h)$, for the designer $j$, $\sup_{x_j(h)\in\mathcal F_{t_j}}R^{Best}(x_j(h)$ $|x_{-j}(h),v_j(q))=\sup_{x_j(h)\in\mathcal F_{t_j}}R^{Worst}(x_j(h)|x_{-j}(h),v_j(q))$.
\end{theorem}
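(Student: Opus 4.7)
The plan is to prove the two inequalities separately. The easy direction, $\sup R^{Best}\ge\sup R^{Worst}$, is immediate from the pointwise inequality $R^{Best}(x_j(h))\ge R^{Worst}(x_j(h))$ valid for every fixed $x_j(h)\in\mathcal F_{t_j}$, so the whole content of the theorem is the reverse inequality $\sup R^{Worst}\ge\sup R^{Best}$. I would establish this by a perturbation argument built on the horizontal stretching machinery from Subsection~\ref{subsec: HS}.

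Fix an arbitrary $x_j(h)\in\mathcal F_{t_j}$ and a small parameter $\delta\in(0,1)$; set $C=1-\delta$ and $\epsilon=(1-C)t_j$. Define the perturbation
\[
\tilde x(h)=\begin{cases}x_j(h/C)+\epsilon, & h\in[0,C],\\ \epsilon, & h\in(C,1],\end{cases}
\]
which is exactly the horizontal $C$-stretching of $x_j(h)$ (extended by zero on $(C,1]$) plus the additive shift $\epsilon$. By construction, $\tilde x(h)$ is non-negative and non-increasing (the drop from $x_j(1)+\epsilon$ to $\epsilon$ at $h=C$ is downward since $x_j(1)\ge 0$), and its budget satisfies
\[
\int_0^1\tilde x(h)\,dh=C\int_0^1 x_j(u)\,du+\epsilon\le C\,t_j+(1-C)\,t_j=t_j,
\]
so $\tilde x(h)\in\mathcal F_{t_j}$. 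Moreover, for every $h\in[0,1]$ we have $\tilde x(Ch)=x_j(h)+\epsilon>x_j(h)$, i.e.\ $\tilde x(h)$ $C$-strongly-dominates $x_j(h)$.

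Invoking item~(2) of Proposition~\ref{prop:relationship between w and b under C} now gives
\[
R^{Worst}(\tilde x(h))\ge C\cdot R^{Best}(x_j(h))=(1-\delta)\,R^{Best}(x_j(h)).
\]
Since $\tilde x(h)\in\mathcal F_{t_j}$, this implies $\sup_{x\in\mathcal F_{t_j}}R^{Worst}(x)\ge(1-\delta)R^{Best}(x_j(h))$; letting $\delta\to 0^+$ and then taking the supremum over $x_j(h)\in\mathcal F_{t_j}$ yields $\sup R^{Worst}\ge\sup R^{Best}$, which together with the easy direction proves the equality.

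The main obstacle is conceptual rather than computational: one has to recognize that the multiplicity of cumulative equilibrium behaviors identified in Subsection~\ref{subsec:charac best and worst equ} stems entirely from coincidences between the flat pieces of $x_j^{-1}$ and $x_{-j}^{-1}$, and that the horizontal compression by $C<1$ together with the additive $\epsilon$-shift is precisely what breaks such coincidences while preserving budget feasibility. Once this is seen, the strong-domination clause of Proposition~\ref{prop:relationship between w and b under C} absorbs all the work, and the rest is the elementary bookkeeping carried out above.
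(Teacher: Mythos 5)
Your proposal is correct and follows essentially the same route as the paper: both proofs compress the interim allocation function horizontally by a factor $1-\delta$, add a small constant to make the domination strict, verify that the perturbed function still fits in $\mathcal F_{t_j}$, and then invoke the strong-domination clause of Proposition~\ref{prop:relationship between w and b under C} before sending $\delta\to 0$. The only cosmetic difference is that the paper first selects a near-supremum $\hat{x}_j$ and then stretches it, while you stretch an arbitrary $x_j$ and take the supremum at the end; these are equivalent.
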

Here we present the proof of \Cref{theorem:best-response-worst-and-best-equivalent}, which is an example of the horizontal stretching method.
\begin{proof}
Let $R^*=\sup_{x_j(h)\in\mathcal F_{t_j}}R^{Best}(x_j(h))$.
For any $\epsilon>0$, there is some $\hat{x}_j(h)\in\mathcal F_{t_j}$ such that $R^{Best}(\hat{x}_j(h))\geq (1-\epsilon)R^*$. Then, we can construct another interim allocation function $\bar{x}_j(h)=\hat{x}_j(h/(1-\epsilon))+\epsilon t_j$ for any $h\in[0,1-\epsilon]$, and $\bar{x}_j(h)=\epsilon t_j$ for $h\in(1-\epsilon,1]$. One can check that $\int_0^1\bar{x}_j(h)dh= (1-\epsilon)\int_0^1\hat{x}_j(h)dh+\epsilon t_j\leq t_j$ (so $\bar{x}_j(h)\in\mathcal F_{t_j}$), and $\bar{x}_j(h)$ $(1-\epsilon)$-strongly-dominates $\hat{x}_j(h)$. By Proposition \ref{prop:relationship between w and b under C}, we have $R^{Worst}(\bar{x}_j(h))\geq (1-\epsilon)R^{Best}(\hat{x}_j(h))\geq (1-\epsilon)^2R^*$.
Since $\epsilon$ can be arbitrarily small, we get the desired result. %$\sup_{x_j(h)\in\mathcal F_{t_j}}R^{Worst}(x_j(h);x_{-j}(h);v_j(q))=\sup_{x_j(h)\in\mathcal F_{t_j}}R^{Best}(x_j(h);x_{-j}(h);v_j(q))$.
\end{proof}

Theorem \ref{theorem:best-response-worst-and-best-equivalent} unifies different measures of utility, and guarantees that, when we search for the best response of designer $j$, we only need to care about the value of $R^{Best}(x_j(h))$. Therefore, the best response problem we consider in this section can be specified as: Given $x_{-j}(h)$, find the best $x_j(h)\in\mathcal F_{t_j}$ for designer $j$, which maximizes the expected utility under the best equilibrium of contestants, i.e., $R^{Best}(x_j(h)|x_{-j}(h),v_j(q))$.

\subsection{FPTAS for Best Response Problem}
In the last subsection, we eliminate the ambiguity and specified the problem. However, it is unfortunate that the above problem of finding the best-response is NP-hard.% (\Cref{theorem:NP-hard}). Therefore, we turn to design a polynomial-time algorithm to output the approximate best response. %there is a FPTAS to find a $(1-\epsilon)$-approximation for any $\epsilon>0$.

\begin{theorem}\label{theorem:NP-hard}
Given $v_j(q)$, $x_{-j}(h)$ and $t_j>0$, it is NP-hard to calculate $j$'s best response.
\end{theorem}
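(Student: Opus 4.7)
The plan is to exhibit a polynomial-time reduction from the decision version of the $0/1$ knapsack problem. Given a knapsack instance with integer item weights $a_1,\dots,a_k$, values $b_1,\dots,b_k$, capacity $W$, and target value $B$, I would construct a best response instance as follows. Partition $[0,1]$ into $k+1$ consecutive intervals $I_1,\dots,I_k,I_{k+1}$, where the width of $I_i$ (for $i\le k$) is proportional to $a_i$ (after normalization so the $a_i$'s fit within $[0,1]$, with $I_{k+1}$ acting as slack). Choose $k$ distinct prize levels $X_1>X_2>\cdots>X_k>0$ (chosen far apart enough so competing at level $X_i$ requires essentially paying the full $X_i\cdot |I_i|$), and define the opponent's strategy $x_{-j}(h)$ as the step function equal to $X_i$ on $I_i$ and to $0$ on $I_{k+1}$. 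Define $v_j(q)$ as the step function equal to $b_i / |I_i|$ on $I_i$ (so the total value attainable from capturing interval $I_i$ is exactly $b_i$) and $0$ on $I_{k+1}$. Finally, set $t_j$ to match a suitably scaled version of $W$.

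The key structural claim is that, up to a negligible loss controllable via a small $\epsilon$-perturbation, an optimal $x_j(h)$ may be assumed to be a step function that, on each $I_i$, is either strictly above $X_i$ (slightly, by some tiny $\epsilon$) or at most $X_k$ (effectively $0$ for the purpose of capturing contestants in $I_1,\dots,I_k$). Intuitively, any prize value strictly between $X_{i+1}$ and $X_i$ on $I_i$ wastes budget without improving the captured mass, since by \Cref{proposition:best-response-worst-and-best-equilibrium} the best equilibrium at ties is already resolved in designer $j$'s favor. Moreover, setting $x_j$ far above $X_i$ rather than just above it only burns budget. Under this structural form, ``competing'' on interval $I_i$ contributes value $b_i$ to the utility and consumes budget approximately $a_i$, while not competing contributes $0$ and consumes $0$. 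Thus the best response reduces exactly to selecting a subset $S\subseteq[k]$ with $\sum_{i\in S}a_i\le W$ maximizing $\sum_{i\in S}b_i$, and the designer achieves utility $\ge B$ iff the knapsack instance admits a solution of value $\ge B$.

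The main obstacle will be the structural reduction to step functions, in particular handling the tie-breaking at the prize levels $X_i$ under the best-equilibrium convention. There are two subtleties to address carefully: first, to capture the interval $I_i$ the designer must strictly exceed $X_i$ there, which costs at least $X_i|I_i|+\epsilon|I_i|$ rather than exactly $X_i|I_i|$; second, the non-increasing constraint on $x_j(h)$ forces the chosen ``steps'' to respect the ordering of the intervals, so some additional formatting of the intervals is required (for instance, reordering $I_i$ along $[0,1]$ so that the levels $X_i$ chosen by $j$ naturally stack in a non-increasing fashion). The former is handled by choosing $\epsilon$ small enough that the total additive overhead $\epsilon\sum_i |I_i|\le\epsilon$ can be absorbed into the budget via a tiny slack term, while the latter is resolved by placing the intervals $I_1,I_2,\dots,I_k$ in the order of decreasing $X_i$ and showing that the designer's optimal strategy on a prefix-union of intervals is the one used in the reduction. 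With these points settled, the reduction is polynomial-time and preserves optimality up to negligible loss, establishing NP-hardness. Combining with a sufficiently fine discretization and scaling of the $a_i,b_i$ (using their encoding in binary) yields the formal NP-hardness statement of \Cref{theorem:NP-hard}.
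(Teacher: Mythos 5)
Your overall plan (reduce $0/1$ knapsack to the best-response problem with piecewise-constant $v_j$ and $x_{-j}$) is the right one and matches the paper's strategy. But there is a genuine gap in the central structural claim, and it concerns the interplay between the $h$-index of the interim allocation functions and the $q$-index of the value function, which your proposal treats as if they were the same axis.

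Concretely, suppose the opponent is at level $X_i$ on the $i$-th stair, and designer $j$ chooses to ``compete'' on some subset $S\subseteq[k]$ of levels. Following \Cref{proposition:best-response-worst-and-best-equilibrium}, the set of quantiles captured by $j$ while at level $X_i$ is
\[
\bigl[A_{i-1}+B_{i-1},\ A_{i-1}+B_{i-1}+|I_i|\bigr],
\]
where $A_{i-1}=\sum_{i'<i}|I_{i'}|$ is the opponent's cumulative stair width and $B_{i-1}=\sum_{i'<i,\,i'\in S}|I_{i'}|$ is $j$'s own cumulative stair width on the selected levels. This interval is \emph{not} $I_i=[A_{i-1},A_{i-1}+|I_i|]$: it is shifted right by $B_{i-1}$, an amount that depends on the subset $S$. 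Therefore the utility contribution of ``competing at level $i$'' is $\int_{A_{i-1}+B_{i-1}}^{A_{i-1}+B_{i-1}+|I_i|}v_j(q)\,dq$, which is not $b_i$ and, worse, is entangled with the choices made at all earlier levels. This destroys the separability that your knapsack correspondence relies on. (Relatedly, the budget to compete at level $X_i$ is $\approx X_i\cdot|I_i|$, so setting $|I_i|\propto a_i$ forces all $X_i$ to coincide to match the knapsack weights; your construction does not simultaneously satisfy both constraints.)

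The paper resolves this shift problem in a way your proposal does not: it makes all opponent stair widths equal to a single infinitesimal $a$ (so $A_{i-1}=(i-1)a$ is uniformly tiny) and bakes this exact shift into the breakpoints of $v_j$, defining $v_j$ on intervals like $[L_i+(i-1)a,\,L_i+l_i+(i-1)a)$; moreover it encodes the weight $w_i$ not in the width $|I_i|$ but in the displacement $l_i=w_i/d_i$ of the breakpoint $B_i$, weighted by the jump $d_i=c_i-c_{i+1}$ of the opponent's stairs. It then adds a large-$M$ staircase term to $v_j$ so that any optimal $B_i$ is forced into the narrow window $[L_i,\,L_{i+1}-a)$, which is what makes the $\hat z_i\in\{0,1\}$ rounding at the end sound. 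Without some analogue of these two devices --- shrinking the opponent's stairs and pre-shifting $v_j$, plus a large-penalty term enforcing the intended structure --- your reduction does not preserve optimality, and the claimed equivalence with knapsack fails.
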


Since the best response problem is NP-hard, we turn to designing the algorithm to output the approximate best response. We first come up with the definition of an $\epsilon$-best response. 
\begin{definition}
    Given the value function $v_j(q)$ and the strategy of the opponent $x_{-j}(h)$, we say that, for any $\epsilon>0$, $\hat{x}_j(h) \in \mathcal{F}_{t_j}$ is an $\epsilon$-best response, if 
    \begin{align*}
        R^{Best}(\hat{x}_j(h)|x_{-j}(h),v_j(q)) \geq (1- \epsilon)\sup_{x_j(h)\in\mathcal F_{t_j}}R^{Best}(x_j(h)|x_{-j}(h),v_j(q)).
    \end{align*}
\end{definition}
With the help of dynamic programming, we can design a fully polynomial time approximation scheme to find a $\epsilon$-best response. Due to space limitations, we put \Cref{alg:FPTAS} into appendix.
\begin{theorem}\label{theorem:FPTAS}
Given $x_{-j}(h)$, $v_j(q)$ and $t_j>0$, define $V(q)=\int_0^qv_j(u)du$,
 $V^{-1}(v)=\min\{q\in[0,1]:V(q)\geq v\}$, and $x_{-j}^{-1}(x)=\sup\{h\in[0,1]:x_{-j}(h)\geq x\}$. Given oracle access to those functions, there is an FPTAS to calculate a $\epsilon$-best response of designer $j$. % for any $\epsilon>0$ in ... time.
\end{theorem}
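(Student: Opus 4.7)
The plan is a dynamic program over a geometrically discretized grid of prize levels, combined with the horizontal-stretching tool from Subsection \ref{subsec: HS} to control the discretization loss; this adapts the Ibarra--Kim knapsack FPTAS template to our continuous setting. By Proposition \ref{proposition:best-response-worst-and-best-equilibrium}, any step-function strategy with decreasing prize levels $X_0 > X_1 > \cdots > X_K$ and cumulative masses $y_k = x_j^{-1}(X_k)$ admits the telescoping decomposition
\[
R^{Best}(x_j) = \sum_{k=0}^K \bigl[V(y_{k-1}+z_{k-1}+\Delta y_k) - V(y_{k-1}+z_{k-1})\bigr],
\]
where $\Delta y_k = y_k - y_{k-1}$, $z_k = x_{-j}^{-1}(X_k)$ is obtainable from the oracle, and the budget constraint reads $\sum_k X_k \Delta y_k \le t_j$. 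So computing the best response reduces to optimizing over a monotone sequence $\{y_k\}$ on a prize grid.

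The first step is to argue that a geometric prize grid $X_k = \overline X (1-\epsilon)^k$, $k = 0, \ldots, K = O(\epsilon^{-1}\log(\overline X/\underline X))$ suffices, where $\overline X$ and $\underline X$ are polynomial in $t_j$ and $x_{-j}(0)$. Given any candidate $x_j^\star \in \mathcal F_{t_j}$, I would round its values up to the grid to get $\tilde x_j^\star$ and then horizontally squeeze by factor $1-\epsilon$ to obtain $\hat x_j(h) = \tilde x_j^\star(h/(1-\epsilon))$ on $[0, 1-\epsilon]$ and $0$ elsewhere. A direct budget calculation shows $\hat x_j \in \mathcal F_{t_j}$, and by construction $\hat x_j$ $(1-\epsilon)$-dominates $x_j^\star$, so Proposition \ref{prop:relationship between w and b under C} yields $R^{Best}(\hat x_j) \ge (1-\epsilon) R^{Best}(x_j^\star)$. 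Therefore optimizing step functions on this grid already gives a $(1-\epsilon)$-approximation.

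The second step is the DP itself. Using a min-budget-to-achieve-value formulation, the state is $(k, y, V')$ with $y$ on a $\delta$-grid and $V'$ a scaled value bucket, and $f(k, y, V')$ stores the minimum budget needed to reach total value at least $V'$ after processing prize levels $X_0, \ldots, X_k$ with cumulative $j$-mass $y$. Transitions enumerate discretized increments $\Delta y$ on the same $\delta$-grid, with per-step cost $X_k \Delta y$ and value increment $V(y + z_{k-1} + \Delta y) - V(y + z_{k-1})$ computed via the oracles for $V$ and $x_{-j}^{-1}$. Scaling $V$ into $O(\epsilon^{-1})$ buckets and choosing $\delta = \Theta(\epsilon/K)$ keeps the table polynomial in the input size and $\epsilon^{-1}$; the output $\hat x_j$ is the step function reconstructed by backtracking from the largest $V'$ whose stored minimum budget is at most $t_j$.

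The hard part will be the approximation analysis of the state-grid rounding. A naive per-step error of $O(\|v_j\|_\infty \delta)$ summed over $K$ levels gives only an additive $O(\|v_j\|_\infty K \delta)$ loss, which fails to imply a multiplicative $(1-\epsilon)$ guarantee when $R^\star$ is small relative to $\|v_j\|_\infty$. The min-budget formulation above is the standard remedy, converting rounding errors on $y$ into budget slack rather than value loss; together with the Lipschitz continuity of $V$ on $[0,1]$ and the $(1-\epsilon)$ loss from prize-grid rounding, it yields the overall guarantee. A residual issue is the calibration of $\overline X$ and $\underline X$: prize levels above $\overline X$ contribute little because their budget cost forces vanishingly thin widths, and prize levels below $\underline X$ are dominated by the opponent's non-trivial prize. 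Both truncations can be absorbed via one more horizontal-stretching step using Proposition \ref{proposition:best-response-trivial-lowerbound}.
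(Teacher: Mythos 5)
Your overall plan — geometric prize grid, DP over prize levels, horizontal stretching to absorb budget overruns, trying a small number of calibration guesses — is the right shape, and your telescoping decomposition of $R^{Best}$ over prize levels matches the one the paper uses. But there is a genuine gap in how you set up the DP state, and you have correctly located it yourself without resolving it.

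The gap is that you discretize the cumulative mass $y$ onto a $\delta$-grid and make it a \emph{state dimension}, storing minimum budget as the DP value. The optimal strategy's breakpoints $y_0^\star,\dots,y_K^\star$ need not lie on this grid, and rounding them costs value. As you note, the per-step loss from snapping a breakpoint is $O(v_j(0)\,\delta)$, so the total loss is $O(v_j(0)\,K\,\delta)$, and since $v_j(0)/R^\star$ can be arbitrarily large, no polynomially-bounded choice of $\delta^{-1}$ makes this an $\epsilon R^\star$ loss. Your proposed fix — that the min-budget formulation "converts rounding errors on $y$ into budget slack" — does not actually address this: rounding a breakpoint shifts the segment $[z_k + y_{k-1},\, z_k + y_k]$ to a region of strictly smaller $v_j$, which is a loss of \emph{value}, not a surplus of budget, and there is no mechanism in your DP that recovers it.

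What the paper does differently is to invert the roles of mass and budget. Its DP state is $(i,r,s)$ where $i$ indexes the prize level, $r$ is a quantized \emph{value} level, and $s$ is a quantized \emph{budget} level; the DP value $b(i,r,s)$ is the \emph{minimum cumulative mass} $B_i$ achieving that value/budget pair, and $B_i$ remains a free real number. Because $B_i$ is not gridded, the transition can use $V^{-1}$ to tune the next breakpoint so that each segment's value contribution lands \emph{exactly} on a multiple of $\lambda_v$; the rounding is applied once to the optimal solution, shrinking each segment's value down to the nearest multiple, which costs at most $\lambda_v$ per level, i.e.\ $p\lambda_v$ total. Crucially, $\lambda_v = \epsilon' K^\star/p$ is calibrated by a guess $K^\star$ of the optimal utility $R^\star$ (tried over $O(\log(1 + t_{-j}/t_j))$ geometric values starting at $\tfrac{t_j}{t_j+t_{-j}}K$), so the total value loss is $\epsilon' K^\star \le \epsilon' R^\star$ — a genuine multiplicative bound. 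Your "$O(\epsilon^{-1})$ value buckets" of size $\epsilon K$ are also too coarse for the same reason ($R^\star$ can be much smaller than $K$); the $K^\star$-calibration loop fixes this and should not be skipped. In short: swap your discretized/continuous roles — discretize budget (not mass) as a state dimension, keep mass as the continuous DP value, and size the value buckets against a guess for $R^\star$ rather than against $K$ — and the argument goes through along the lines you sketched.
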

\noindent \emph{Proof sketch.} Inspired by the classic FPTAS of knapsack problems, we design the FPTAS for the best-response problem based on dynamic programming. The main idea is that, although $\mathcal F_{t_j}$,the space of all possible strategies, is a continuous space, we can always discretize a good strategy $x_j(h)$ to a piecewise-constant function $\bar{x}_j(h)$, with some quantization constraints on the utility contribution and the budget requirement on each constant interval of $\bar{x}_j(h)$. Comparing with $x_j(h)$, $\bar{x}_j(h)$ needs $O(\epsilon)$ more budget and gets $O(\epsilon)$ less utility. We design a DP procedure to find the optimal strategy in the discrete space consisting of all the strategies under the quantization constraints, which is never worse than $\bar{x}_j(h)$, and thus ensures a $1-O(\epsilon)$ fraction of the best response utility. Finally, we deal with the $O(\epsilon)$-shortage of budget by horizontal stretching, and then we get a strategy which fits in the $t_j$ budget constraint, and is an $O(\epsilon)$-best response.

\section{Safety Level of  Designer}
\label{section:safetylevel}

In the previous section, we investigate the best response strategy of a designer, when she observes all other designers' prize structures clearly. In this section, we consider the first-mover strategy, i.e., making a decision without any knowledge of other designers' strategies. Similar with the simplification in Section \ref{sec:relax and simp}, we still focus on the situation of two contests, without loss of generality. %Unlike before, we make a decision without the knowledge of other designers' strategies. %Since the designer makes a decision first, she can view all other designers as a single designer, called the opponent, with budget $t_{-j}= \sum_{j'\neq j}t_j'$. 
Due to the uncertainty on others' strategies, we introduce the concept of \emph{Safety Level} to evaluate the performance of a first-mover strategy. %When designer $j$ does not know the strategy $x_{-j}(h)$ of other designers, she may want to optimize her expected objective in the worst-case.

\begin{definition}[Safety Level]
Given a designer $j$'s value function $v_j(q)$ and the budget of the opponent $t_{-j}$, for an interim allocation function $x_j(h)$, the safety level of $x_j(h)$ is defined as
\begin{align*}
    \mathrm{SL}_{t_{-j}, v_j(q)}(x_j(h))=\min_{x_{-j}(h) \in \mathcal F_{t_{-j}}}R^{\text{Worst}}(x_j(h)|x_{-j}(h),v_j(q)),
\end{align*}
\end{definition}

Generally speaking, the safety level represents the worst performance of one strategy and provides a lower bound on utility that one strategy can realize. In the worst case, the opponent can observe the first mover's strategy, and adaptively select a strategy which minimize the first mover's utility. Suppose the opponent produces a strategy by the horizontal stretching method, so that $x_{-j}(h)$ $\frac{t_{-j}}{t_j}$-dominates $x_j(h)$. This gives an upper bound on the safety level of any interim allocation function, as shown in the following proposition.  %We first show that, given the budget of opponent $t_{-j}$ and the own budget $t_j$, %In another word, Given $v_j(q)$, the value function of designer $j$, $t_j>0$, the budget of designer $j$, and $t_{-j}>0$, the total budget of all other designers except $j$, the prior-strategy problem is to find $x_j(h)\in\mathcal F_{t_j}$, which has maximum safety level $\mathrm{SAFETY}_{t_{-j};v_j(q)}(x(h))$.
%the safety level of any strategy has a upper bound. %on the best safety level, which is determined by how much proportion of the total budget is owned by designer $j$.
\begin{proposition}\label{proposition:safetylevel-trivial-upper-bound}
For any value function $v_j(q)$ of designer $j$, for any $x_j(h)\in\mathcal F_{t_j}$, it holds that $\mathrm{SL}_{t_{-j},v_j(q)}(x_j(h))\leq\frac{t_j}{t_j+t_{-j}}\int_0^1v_j(q)dq$.
\end{proposition}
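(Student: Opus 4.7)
The plan is to exhibit, for any $\epsilon>0$, an opponent strategy $x_{-j}^{(\epsilon)} \in \mathcal F_{t_{-j}}$ that forces $R^{Worst}(x_j\mid x_{-j}^{(\epsilon)}, v_j)$ to be essentially $\tfrac{t_j}{t_j+t_{-j}}\int_0^1 v_j(q)\,dq$, by having the opponent horizontally stretch $x_j$ itself. Following the template from the proof of Theorem~\ref{theorem:best-response-worst-and-best-equivalent}, set $C_\epsilon = (1-\epsilon)\,t_{-j}/t_j$ and define $x_{-j}^{(\epsilon)}(h) = x_j(h/C_\epsilon) + \epsilon t_{-j}$ on $[0,\min(1,C_\epsilon)]$, extended by the constant $\epsilon t_{-j}$ on $(C_\epsilon, 1]$ when $C_\epsilon < 1$. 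Routine budget and monotonicity checks — using $C_\epsilon\int_0^{1}x_j\le C_\epsilon t_j = (1-\epsilon)t_{-j}$ — show that $x_{-j}^{(\epsilon)} \in \mathcal F_{t_{-j}}$, and by construction it $C_\epsilon$-strongly-dominates $x_j$ since $x_{-j}^{(\epsilon)}(C_\epsilon h) = x_j(h) + \epsilon t_{-j} > x_j(h)$ on the relevant range.

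Next I apply Proposition~\ref{proposition:best-response-trivial-lowerbound}(2) with the two designers' roles swapped, formally using $v_j(q)$ as the ``dominator's'' value function. The proposition is a purely mathematical inequality and makes no ontological claim on whose preferences $v_j$ encodes, so the role-swapped application yields
$$R^{Worst}\bigl(x_{-j}^{(\epsilon)} \,\big|\, x_j,\, v_j\bigr) \;\geq\; \frac{C_\epsilon}{C_\epsilon + 1}\int_0^1 v_j(q)\,dq.$$
Equivalently, in every GCEB $(H_j, H_{-j})$ induced by $(x_j, x_{-j}^{(\epsilon)})$, one has $\int_0^1 v_j\,dH_{-j} \ge \tfrac{C_\epsilon}{C_\epsilon+1}\int_0^1 v_j\,dq$.

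Finally I combine this with the conservation identity $H_j(q) + H_{-j}(q) = q$ satisfied by every cumulative behavior (Condition~2 preceding Proposition~\ref{proposition:differentiating-cumulative-behavior}), which integrates to $\int v_j\, dH_j + \int v_j\, dH_{-j} = \int_0^1 v_j(q)\,dq$. Specializing to a GCEB that realizes $R^{Worst}(x_j \mid x_{-j}^{(\epsilon)}, v_j)$ and subtracting the previous lower bound on $\int v_j\,dH_{-j}$ gives
$$R^{Worst}\bigl(x_j \,\big|\, x_{-j}^{(\epsilon)}, v_j\bigr) \;\leq\; \frac{1}{C_\epsilon+1}\int_0^1 v_j(q)\,dq,$$
hence $\mathrm{SL}_{t_{-j}, v_j(q)}(x_j(h)) \leq \tfrac{1}{C_\epsilon+1}\int_0^1 v_j(q)\,dq$. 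Letting $\epsilon \to 0$ drives $C_\epsilon \to t_{-j}/t_j$ and the right-hand side to the claimed bound $\tfrac{t_j}{t_j+t_{-j}}\int_0^1 v_j(q)\,dq$. The only delicate step is tuning $x_{-j}^{(\epsilon)}$ so that \emph{strict} strong dominance and the budget cap $t_{-j}$ are both met — the $(1-\epsilon)$ shrinkage of the stretch factor and the additive $\epsilon t_{-j}$ slack balance exactly this trade-off. The conceptual key, more a recognition than an obstacle, is that Proposition~\ref{proposition:best-response-trivial-lowerbound}(2) is role-symmetric and its content can be turned ``against'' designer $j$ by slotting $v_j$ in as the opponent's formal value function, which via the conservation identity converts a guaranteed utility for the opponent into the desired upper bound on designer $j$'s safety level.
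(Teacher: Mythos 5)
Your proof is correct and follows essentially the same approach as the paper's: invoke the horizontal-stretching lemma (Proposition~\ref{proposition:best-response-trivial-lowerbound}) with the two designers' roles swapped, then use the conservation identity $\int v_j\,dH_j + \int v_j\,dH_{-j} = \int_0^1 v_j\,dq$ to convert the opponent's guaranteed utility into an upper bound on designer $j$'s. The only cosmetic difference is that the paper directly cites Proposition~\ref{proposition:best-response-trivial-lowerbound}(3), which already asserts the existence of a budget-feasible $x_{-j}$ achieving the exact ratio $\frac{t_{-j}}{t_j+t_{-j}}$ (as $R^{Best}$ for $-j$, equivalently $R^{Worst}$ for $j$), so no $\epsilon$-limit or explicit strict-strong-dominance construction is needed.
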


With this proposition, it is reasonable to exploit $\frac{t_j}{t_j+t_{-j}}\int_0^1v_j(q)dq$ as a benchmark of the safety level. Specifically, given $v_j(q)$, $t_j$ and $t_{-j}$, we say $x_j(h)\in\mathcal F_{t_j}$ is \emph{$C$-competitive}, if $\mathrm{SL}_{t_{-j},v_j(q)}(x_j(h))\geq \frac1C\cdot\frac{t_j}{t_j+t_{-j}}\int_0^1v_j(q)dq$. We will use the concept of $C$-competitive to appraise the performance of a strategy. 

%Before giving a strategy with a constant competitiveness, we first present an example to show that not all strategies have a constant competitiveness, which means that it is non-trivial to find a constant-competitive strategy.
Intuitively, it is harder to decide the strategy without knowing the opponent's strategy. The following example shows indeed that a class of interim allocation functions cannot guarantee a constant-competitive safety level, which suggests that it is non-trivial to find a constant-competitive strategy.

\begin{example}\label{example:simple-threshold-safetylevel}
Suppose $v_j(q)=\begin{cases}M,&\text{if }q\in[0,\frac1M],\\\frac1q,&\text{if }q\in(\frac1M,1]\end{cases}$, where $M$ is a big number, then $\int_0^1 v_j(q)dq=1+\ln M$ and $\max_{q\in[0,1]}qv_j(q)=1$. 
In this example, we consider a class of interim allocation functions, called \emph{simple threshold allocation function}, and demonstrate its property in Lemma \ref{lemma:example-simple-threshold-allocation}. 

\begin{lemma}\label{lemma:example-simple-threshold-allocation}
Given $t_j>0$, we say $x_j(h)\in \mathcal F_{t_j}$ is a simple threshold allocation function, if there is $r\in(0,1]$, such that $x_j(h)=\begin{cases}\frac{t_j}{r},&\text{if }h\in[0,r],\\0,&\text{if }h\in(r,1]\end{cases}$.
%Given $v_j(q)$, $t_j$, and $t_{-j}$, 
For any simple threshold allocation function $x_j(h)\in\mathcal F_{t_j}$, $\mathrm{SL}_{t_{-j};v_j(q)}(x_j(h))\leq\frac{t_j}{t_{-j}}\max_{q\in[0,1]}q v_j(q)$.
\end{lemma}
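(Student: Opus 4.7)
The plan is to upper bound the safety level by exhibiting one adversarial opponent strategy $x_{-j}(h)$ that drives designer $j$'s worst-case utility close to $\frac{t_j}{t_{-j}}\max_q qv_j(q)$. The crucial observation is that a simple threshold is ``flat'' on $[0,r]$, so it is vulnerable to being matched by an opponent who spreads a slightly higher flat prize over a proportionally wider quantile range, thereby stealing all top contestants and leaving designer $j$ only with a strip of middle-skilled ones.

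Concretely, for each $\epsilon>0$, set $a:=\min\bigl\{1,\ t_{-j}/(t_j/r+\epsilon)\bigr\}$ and define $x_{-j}(h):=t_j/r+\epsilon$ for $h\in[0,a]$ and $x_{-j}(h):=0$ otherwise. Then $\int_0^1 x_{-j}(h)\,dh\leq t_{-j}$, so $x_{-j}\in\mathcal F_{t_{-j}}$, and $x_{-j}$ strongly dominates $x_j$ on $[0,a]$.

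I would then apply Proposition \ref{proposition:best-response-worst-and-best-equilibrium} to compute $H_j^{Worst}(q)$ under $(x_j,x_{-j})$. Since the step-function inverses give $Q^{-1}(q)=t_j/r+\epsilon$ on $(0,a]$ and $Q^{-1}(q)=t_j/r$ on $(a,a+r]$, a direct evaluation of the formula yields $H_j^{Worst}(q)=0$ for $q\in[0,a]$ and $H_j^{Worst}(q)=q-a$ for $q\in[a,\min(a+r,1)]$, after which $H_j^{Worst}$ is constant. Hence
\[
R^{Worst}(x_j\mid x_{-j}) = \int_0^1 v_j(q)\, dH_j^{Worst}(q) = \int_a^{\min(a+r,1)} v_j(q)\, dq \ \leq\ r\, v_j(a),
\]
using that $v_j$ is non-increasing.

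Finally I would bound $r\,v_j(a)=(r/a)\cdot a v_j(a)\leq (r/a)\max_{q} q v_j(q)$ and use the identity $r/a=(t_j+\epsilon r)/t_{-j}$ (valid when $a<1$). Sending $\epsilon\to 0^{+}$ yields $\mathrm{SL}_{t_{-j},v_j(q)}(x_j)\leq (t_j/t_{-j})\max_q qv_j(q)$, as claimed. The edge case $a=1$, which only occurs when $t_{-j}\geq t_j/r+\epsilon$, is handled separately: $x_{-j}$ strongly dominates $x_j$ on all of $[0,1]$, so $H_j^{Worst}\equiv 0$ and $j$'s utility vanishes, making the bound trivial. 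The main technical obstacle is carefully handling the one-sided limits $x_{-j}^{-1}(X{+}0)$ versus $x_{-j}^{-1}(X)$ from Proposition \ref{proposition:best-response-worst-and-best-equilibrium} at the coinciding discontinuities of $x_j^{-1}$ and $x_{-j}^{-1}$, and verifying budget feasibility of the adversarial $x_{-j}$ across both ranges of $a$.
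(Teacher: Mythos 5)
Your proposal is correct and takes essentially the same route as the paper's proof: the adversarial $x_{-j}$ is the horizontal stretch of the simple threshold to a width $\approx t_{-j}r/t_j$, the worst equilibrium leaves designer $j$ only the quantile strip $[a, a+r]$, and the bound $r\,v_j(a)\leq (r/a)\max_q q v_j(q)$ closes the argument. The only cosmetic difference is that you raise the opponent's level by $\epsilon$ and take a limit to sidestep the coinciding discontinuities, whereas the paper works at the exact tie level $t_j/r$ and lets the $x_j^{-1}(X{+}0)$ convention in Proposition~\ref{proposition:best-response-worst-and-best-equilibrium} resolve the tie directly.
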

%Given $t_j>0,t_{-j}>0$, for any simple threshold allocation function $x_j(h)\in\mathcal F_{t_j}$, 
By \Cref{lemma:example-simple-threshold-allocation}, the safety level of any simple threshold allocation function $x_j(h)$ is bounded by
\begin{align*}
    \mathrm{SL}_{t_{-j};v_j(q)}(x_j(h))\leq\frac{t_j}{t_{-j}}\max_{q\in[0,1]}q v_j(q)=\frac{t_j}{t_{-j}}.
\end{align*}
However, comparing it with the benchmark, we get
%$\frac{t_j}{t_j+t_{-j}}\int_0^1v_j(q)dq=(1+\ln M)\frac{t_j}{t_j+t_{-j}}$, so
\begin{align*}
    \frac{\mathrm{SL}_{t_{-j};v_j(q)}(x_j(h))}{\frac{t_j}{t_j+t_{-j}}\int_0^1v_j(q)dq}\leq \frac{t_j+t_{-j}}{t_{-j}(1+\ln M)}.
\end{align*}
It can be arbitrarily small when $M\to\infty$, which means that the class of simple threshold allocation functions does not guarantee a constant competitiveness. %\lny{may add extra example extending to linear combination of constant number of simple threshold function}
This result can be extended to show that a much broader function class also cannot guarantee a constant-competitive safety level. Due to space limitations, we demonstrate this in \Cref{example:extend-simple-threshold}.
% We note that, in practice, the simple contests, which equally allocates the prize to the best $k$ of $n$ players, generally induce a interim allocation function close to a simple threshold allocation function, when $n$ is large. This example suggests that such commonly used prize structure may be inefficient as prior strategy, in the sense that they are not constant- competitive.
\end{example}

% Now, we claim that we can construct a class of interim allocations, called \emph{Ladder Policy}, whose safety level can achieve a constant competitiveness. %, which is shown in Theorem \ref{theorem:16-competitive-safetylevel}.

% \begin{definition}[Ladder Policy]
% For the designer $j$ with budget $t_j$, we call an interim allocation $x_j(h)$ as a ladder policy, if it has the following form:
% \begin{align}\label{eqn: constrcution of safety level}
%     x_j(h)=\begin{cases}
%     c_i,&\text{if }\exists i\in\{1,2,\cdots\}\text{ such that }\sum_{i'=1}^{i-1}a_i'\leq h<\sum_{i'=1}^{i}a_i'; \\
%     0,& \text{otherwise},
% \end{cases}
% \end{align} 
% where for any $i$, $c_i\geq c_{i+1}$ and $\sum_{i=1}^{+\infty}a_ic_i\leq t_j$.
% \end{definition}
% By the construction of ladder policy, we can found that it always reward $c_i$ to certain range of $h$. Intuitively, we can set the ingenious $c_i$ and $a_i$ such that, even though the opponent knows the policy clearly, he still needs to spend a big amount of budget to compete for more valuable contestants. It leads to that he has to abandon the contestants in other value ranges. In this way, ladder policy can guarantee a certain utility. We show this result in Theorem \ref{theorem:16-competitive-safetylevel}.   
In the following theorem we show that, for any value function $v_j(q)$ and any budget constraints $t_j,t_{-j}$, we can construct a piecewise constant function\footnote{We sometimes call such a monotone piecewise fucntion a staircase function.} $x_j(h)\in\mathcal F_{t_j}$, which has a 16-competitive safety level. The idea is, we carefully design the width and height of each stair\footnote{Here a stair denotes the interval on which $x_j(h)$ is a certain constant.} aiming for a certain value level. The opponent must invest a relatively large amount to defeat each stair, i.e., to prevent this stair from gaining enough value over the desired value level. However, the total budget of the opponent is limited. Therefore it can be guaranteed that some stairs success in getting the desired value, i.e., obtaining a good utility in total.
\begin{theorem}
\label{theorem:16-competitive-safetylevel}
For the designer $j$, given any $v_j(q)$, $t_j$ and $t_{-j}$, there exist a strategy $x_j(h)\in\mathcal F_{t_j}$ such that $\mathrm{SL}_{t_{-j},v_j(q)}(x_j(h))\geq \frac1{16}\frac{t_j}{t_j+t_{-j}}\int_0^1v(q)dq$.
\end{theorem}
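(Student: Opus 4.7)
Let $B=\int_0^1 v_j(q)\,dq$ and $\gamma=t_j/(t_j+t_{-j})$; I aim to exhibit $x_j\in\mathcal{F}_{t_j}$ with safety level at least $\gamma B/16$. The plan is to build $x_j(h)$ as a non-increasing staircase with $K=\Theta(\log(1/\gamma))$ stairs, calibrated in a data-dependent way against $v_j$. Denote the cumulative widths by $R_0=0<R_1<\cdots<R_K\le 1$ and the heights by $x_1>x_2>\cdots>x_K>0$ with $\sum_k r_k x_k\le t_j$, where $r_k:=R_k-R_{k-1}$. Each stair will aim at a ``value chunk'' obtained by splitting $B$ geometrically: the natural choice is to fix dyadic targets, picking the quantile $q_k$ where the cumulative value $V(q):=\int_0^q v_j$ has climbed by roughly $\gamma B\cdot 2^{k-1}$ above $V(R_{k-1})$, and then setting $r_k$ and $x_k$ so that (i) stair $k$ consumes $\Theta(t_j/K)$ of designer $j$'s budget, and (ii) the opponent must spend $\Theta(t_{-j}/K)$ on her own prize level $\ge x_k$ in order to push this chunk past $q_k$.

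For the analysis, fix an arbitrary $x_{-j}\in\mathcal{F}_{t_{-j}}$. By the explicit formula for $H_j^{Worst}$ in \Cref{proposition:best-response-worst-and-best-equilibrium}, designer $j$'s worst-case utility decomposes as
\[
U\;=\;\sum_{k=1}^{K}\bigl[V(R_k+s_k)-V(R_{k-1}+s_k)\bigr],\qquad s_k:=x_{-j}^{-1}(x_k),
\]
where $s_k$ is non-decreasing in $k$ (since $x_k$ is decreasing and $x_{-j}^{-1}$ is non-increasing). Applying Abel summation to $\int_0^1 x_{-j}(h)\,dh=\int_0^\infty x_{-j}^{-1}(X)\,dX$ yields the opponent's budget constraint $\sum_{k=1}^K s_k(x_k-x_{k+1})\le t_{-j}$ with the convention $x_{K+1}:=0$. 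I call stair $k$ \emph{surviving} if $R_{k-1}+s_k\le q_k$; then, since $v_j$ is non-increasing, the stair contributes at least $V(q_k+r_k)-V(q_k)$, which is bounded below by the targeted chunk value by construction. A defeated stair $k$ (meaning $s_k>q_k-R_{k-1}$) forces the opponent to spend at least $(q_k-R_{k-1})(x_k-x_{k+1})$ at that level. Calibrating so that each defeating cost is $\Omega(t_{-j}/K)$ while the targeted chunks sum to $\Omega(B)$, the budget constraint rules out defeating more than a constant fraction of stairs, yielding $U=\Omega(\gamma B)$.

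I would then combine this with the single-stair fallback from \Cref{proposition:best-response-trivial-lowerbound} (which already gives $\tfrac{t_j}{t_j+t_{-j}}\int_0^1 v_j\,dq$ in a best-equilibrium sense, via horizontal stretching), and convert from best to worst equilibrium using the same stretching trick as in the proof of \Cref{theorem:best-response-worst-and-best-equivalent}: shrink the widths by $(1-\varepsilon)$ and add a flat $\varepsilon t_j$ floor, paying only a $(1-\varepsilon)^2$ factor by \Cref{prop:relationship between w and b under C}. Taking the better of the staircase construction and the single-stair fallback and chasing the constants through the case split should yield $1/16$-competitiveness. The main obstacle is the data-dependent calibration of $(r_k,x_k,q_k,K)$: there is tension between wanting $x_k(q_k-R_{k-1})$ large (to make defeating expensive) and $x_k r_k$ small (to fit in $t_j$), and these must be reconciled dyadically so that the opponent's total possible defeating-cost strictly exceeds $t_{-j}$. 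Edge cases where $v_j$ is highly concentrated near $q=0$ (so only a single stair is meaningful) or nearly constant (so few dyadic value levels exist) are handled by the case split, with the fallback covering the concentrated case; I expect the constant $1/16$ to emerge from this case-splitting and the horizontal-stretching overhead rather than being tight.
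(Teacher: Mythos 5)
Your high-level plan---a geometric staircase whose calibration forces the opponent to overspend to defeat enough stairs---is the right general shape and matches the paper's intent, but there are two genuine gaps as written.

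First, the fallback step is not a valid safety-level argument. \Cref{proposition:best-response-trivial-lowerbound} item 3 produces an $x_j$ by horizontally stretching the \emph{known} $x_{-j}$; it is a best-response construction, so it cannot be ``combined in'' as a fixed first-mover strategy. Converting from best to worst equilibrium by stretching (as in \Cref{theorem:best-response-worst-and-best-equivalent}) only works against the \emph{same fixed} $x_{-j}$; it does not upgrade a best-response guarantee into a safety-level guarantee against an adversarial $x_{-j}$. The paper's own \Cref{example:simple-threshold-safetylevel} and \Cref{lemma:example-simple-threshold-allocation} show that \emph{no} single-stair strategy is constant-competitive in safety level, so the case you call ``concentrated'' cannot be handled by a fixed single stair without a different argument.

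Second, the staircase calibration is internally inconsistent in the interesting regime $t_{-j}\gg t_j$. You need stair $k$ to both (i) fit in budget $x_k r_k\approx t_j/K$, (ii) make defeating cost $x_k(q_k-R_{k-1})\gtrsim t_{-j}/K$, so $q_k-R_{k-1}\gtrsim (t_{-j}/t_j)\,r_k\gg r_k$; and (iii) the surviving contribution $V(q_k+r_k)-V(q_k)$ should be comparable to the targeted chunk $V(q_k)-V(R_{k-1})$, which (since $v_j$ is non-increasing) needs $r_k\gtrsim q_k-R_{k-1}$. Requirements (ii) and (iii) contradict each other when $t_{-j}>t_j$. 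Moreover, there is no pointwise lower bound on $v_j$ over the interval a surviving stair attracts, so ``bounded below by the targeted chunk value by construction'' does not follow from the cumulative-value target alone. The paper sidesteps both problems by partitioning the \emph{value axis} dyadically ($q_i=\sup\{q:v_j(q)\ge M/2^{i-1}\}$) and setting heights $c_i\propto M/2^{i-1}$, so that (a) any contestant attracted by a surviving stair $i$ has value $\ge M/2^{i-1}$ pointwise, giving a clean lower bound, and (b) the opponent's cumulative spend to defeat stair $i$ scales as $q_i c_i\propto s_i\cdot\tfrac{t_j+t_{-j}}{K}$ where $s_i$ is the value in band $i$; the defeating cost is thus automatically proportional to the value it blocks, so the total defeatable value is at most $O\!\left(\tfrac{t_{-j}}{t_j+t_{-j}}\right)K$, not a per-stair count. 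Your partition by cumulative value gives neither a pointwise value bound for surviving stairs nor this proportionality, and the per-stair-cost-$\Theta(t_{-j}/K)$ counting argument is marginal at best.
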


\section{Rounding the Interim Allocation Functions}
\label{section:implement}

In \Cref{section:best_response} and \Cref{section:safetylevel}, we discuss the strategies
% of contest designers, in the generalized assumption that the prize allocation of each contest $j$ is a hypothetical allocation represented by 
under the generalized model with relaxed space of interim allocation functions, as described in \Cref{sec:relax and simp}. This relaxation allows our analysis to be more convenient and flexible. For example, %the symmetric equilibrium behavior of the contestants is completely characterized in \Cref{theorem:player-equilibrium-general-condition}, with the help of interim allocation functions. Moreover, 
we can easily apply the horizontal stretching method to an interim allocation function, or construct interim allocation functions which are piecewise-constant, to solve the best-response and safety-level problem. Note that these results can even be derived without knowing $n$.

However, in practice, a contest designer must implement her strategy as a rank-by-skill allocation, setting the $n$ prizes, which is the original model, and an interim allocation function generally cannot be implemented directly. Luckily, we can show that, any interim allocation can be approximated closely by a rank-by-skill allocation, when $n$ is sufficiently large.

\begin{theorem}\label{theorem:approximating-interim-allocation-with-rank-by-skill}
Given the value function $v_j(q)$, the own budget $t_j$ and the opponent's budget $t_{-j}$, let $M=v_j(0)$ be the maximum value of $v_j(q)$, $K=\int_0^1v_j(q)dq$ be the expected value of $v_j(q)$, and $K^*=\frac{t_j}{t_j+t_{-j}}K$  which is the lower bound for best-response problem (\Cref{proposition:best-response-trivial-lowerbound}) and the upper bound for safety-level problem (\Cref{proposition:safetylevel-trivial-upper-bound}).

Fix some constant $D>1$. When $n$ is sufficiently large, for any $x_j(h)\in\mathcal F_{t_j}$, there is a rank-by-skill prize structure $\vec{w}=(w_1,w_2,\cdots,w_n)$ satisfying that $w_1\geq w_2\geq\cdots\geq w_n\geq 0$ and $\sum_{k=1}^nw_k\leq nt_j=T_j$. Moreover, for any $x_{-j}(h)\in\mathcal F_{t_{-j}}$, if the following inequality holds,
\begin{equation}
    R^{Best}(x_j(h)|x_{-j}(h),v_j(q))\geq K^*/D,\label{eq:rank-by-skill-approximating-condition}
\end{equation}
then we have 
$$R^{Worst}(x_{\vec{w}}(h)|x_{-j},v_j(q))\geq (1-r(n)) R^{Best}(x_j(h)|x_{-j},v_j(q)),$$
where $x_{\vec{w}}(h)=\sum_{k=1}^n w_{k}\binom{n-1}{k-1}h^{k-1}(1-h)^{n-k}$ and $$r(n)=O\left(\left(\frac{DM}{K^*}+\sqrt{\ln n}\right)n^{-\frac13}\right)=o(1).$$
\end{theorem}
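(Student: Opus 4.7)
The plan is to construct $\vec{w}$ by sampling a truncated version of $x_j$ at the regular grid $\{(k-1)/(n-1)\}_{k=1}^{n}$ and then to convert the resulting closeness of $x_{\vec{w}}$ and $x_j$ into the desired utility bound via the horizontal-stretching machinery of \Cref{prop:relationship between w and b under C}. The key identity is that, for any rank-by-skill prize vector $\vec{w}$,
\[
x_{\vec{w}}(h) \;=\; \mathbb{E}_{K-1 \sim B(n-1,h)}\bigl[w_K\bigr],
\]
so if $w_K$ equals a sample of $x_j$ near $(K-1)/(n-1)$, then $x_{\vec{w}}(h)$ is the expectation of $x_j$ under a measure that concentrates tightly around $h$.

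Concretely, I would first truncate: $\tilde{x}_j(h) := \min\{x_j(h), L\}$ for a threshold $L$ chosen at the end. Since $\int x_j \leq t_j$, the set $\{x_j > L\}$ has Lebesgue measure at most $t_j/L$, so replacing $x_j$ by $\tilde{x}_j$ costs at most an additive $M\cdot t_j/L$ in the utility integral via $v_j \leq M$. Second, I set $w_k := \tilde{x}_j((k-1)/(n-1))$; monotonicity of $\vec{w}$ is immediate from that of $\tilde{x}_j$, and a Riemann-sum comparison yields $\sum_k w_k \leq nt_j + L$, the mild overshoot being absorbed by horizontal stretching with factor $1 - O(L/(nt_j))$ via \Cref{prop:relationship between w and b under C}. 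Third, Hoeffding's inequality at deviation $\delta = \sqrt{\ln n/n}$ gives $\Pr(Y > h + \delta) \leq 1/n^2$ for $Y = (K-1)/(n-1)$, whence
\[
x_{\vec{w}}(h) \;\geq\; \tilde{x}_j(h+\delta)\bigl(1 - 1/n^2\bigr).
\]
After one further horizontal stretch by factor $1 - \delta$, this shows that, outside the small region $\{x_j > L\}$ that we already accounted for, the rescaled $x_{\vec{w}}$ is $(1-O(\delta))$-strongly-dominating of $x_j$.

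The main obstacle is converting pointwise closeness of $x_{\vec{w}}$ and $x_j$ into closeness of the equilibrium utilities, since the cumulative equilibrium behavior $H_j^*$ depends on the entire interim allocation function and can shift discretely when the function is perturbed. The plan sidesteps this by packaging the truncation, sampling, and Hoeffding steps into a single strong-domination statement and invoking \Cref{prop:relationship between w and b under C}(2), which converts ``$A$ strongly-dominates $B$ with factor $C$'' directly into $R^{Worst}(A) \geq C\cdot R^{Best}(B)$ without ever examining the equilibria explicitly. Finally I balance the three error sources---truncation loss $O(M t_j/L)$, concentration shift $\delta = O(\sqrt{\ln n/n})$, and budget-stretch loss $O(L/(nt_j))$---using the hypothesis $R^{Best}(x_j) \geq K^*/D$ to turn an additive utility loss $\eta$ into a multiplicative factor $1 - O(D\eta/K^*)$. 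Setting $L = \Theta\bigl(M(t_j+t_{-j}) n^{1/3}/K\bigr)$ equates the truncation and stretching contributions, while the Hoeffding term is dominated by $\sqrt{\ln n}\cdot n^{-1/2}$ which is absorbed into $n^{-1/3}$, yielding the advertised rate $r(n) = O\bigl((DM/K^* + \sqrt{\ln n})\, n^{-1/3}\bigr)$.
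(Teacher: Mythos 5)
Your overall framework---discretize at the grid $\{(k-1)/(n-1)\}$, use concentration of the binomial weights, and convert pointwise domination to a utility ratio via horizontal stretching and a $\delta,C$-domination lemma---is exactly the paper's framework (their Lemmas in Appendix F play the roles you envision). However, there is a genuine gap in the concentration step, and as written the argument neither establishes strong domination nor reaches the rate $n^{-1/3}$.

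The problematic claim is that ``after one further horizontal stretch by factor $1-\delta$, the rescaled $x_{\vec{w}}$ is $(1-O(\delta))$-strongly-dominating of $x_j$.'' A horizontal stretch is a \emph{multiplicative} perturbation of the argument, but Hoeffding gives an \emph{additive} shift: you only know $x_{\vec{w}}(h)\gtrsim\tilde{x}_j(h+\delta)$. After stretching by $C=1-\delta$ you get $x_{\vec{w}}(Ch)\gtrsim\tilde{x}_j(Ch+\delta)=\tilde{x}_j\bigl(h+\delta(1-h)\bigr)$, and since $\delta(1-h)>0$ for $h<1$ this is $\le\tilde{x}_j(h)$, which is the wrong direction. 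To make $Ch+\delta\le h$ you need $h\ge\delta/(1-C)$, so the domination fails on $[0,\delta/(1-C))$, an interval of length $\Theta(1)$ when $1-C=\Theta(\delta)$. Your truncation at height $L$ does not rescue this: $\tilde x_j$ is constant on $[0,h_0]$ with $h_0=t_j/L$, so the problematic window $[(h_0-\delta)/C,\,\delta/(1-C))$ vanishes only if $h_0(1-C)\ge\delta$. Combined with the budget requirement $L/(nt_j)\lesssim 1-C$ and the truncation loss $DMt_j/(LK^*)$, optimizing over $L$ and $1-C$ gives $r(n)=\Theta\bigl(\sqrt{DM/K^*}\,(\ln n)^{1/4}n^{-1/4}\bigr)$, strictly worse than the claimed $n^{-1/3}$; the value of $L$ you announce does not satisfy $h_0(1-C)\ge\delta$, so even that slower rate is not reached by the sketch as stated.

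The paper's proof avoids this precisely by using a \emph{multiplicative} Chernoff bound (their Lemma \ref{lemma:simple-contest-chernoff-bound}): it assigns the prize value sampled near quantile $i/(n-1)$ to rank $p(i)=\lceil(1+\beta)i\rceil$, with $\beta=\sqrt{6\ln(1/\delta)/k_1}$, and only for $i\ge k_1=\lfloor n^{2/3}\rfloor$. A multiplicative index inflation $(1+\beta)$ is exactly the kind of deviation that a horizontal stretch by $C\approx(1+\beta)^{-1}$ can absorb uniformly in $h$, and cutting off below $k_1$ removes the small-$h$ region where a relative Chernoff bound is weak. The two error sources---the stretch factor $1-C=O(\sqrt{\ln n}\,n^{-1/3})$ and the excluded quantile mass $\int_0^{k_1/(n-1)}v_j\le M\,k_1/n=O(Mn^{-1/3})$, divided by $K^*/D$ via hypothesis (\ref{eq:rank-by-skill-approximating-condition})---balance at $n^{-1/3}$ precisely because $k_1=n^{2/3}$ equates $\sqrt{\ln n/k_1}$ and $k_1/n$. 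To repair your proposal you would need to replace Hoeffding with a multiplicative tail bound (or Bernstein) and cut off at a quantile level, rather than cutting off by prize value $L$ alone.
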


\Cref{theorem:approximating-interim-allocation-with-rank-by-skill} tells us that, when the value function $v_j(q)$ and the budgets $t_j, t_{-j}$ are fixed, we get an error bound $r(n)$ which is independent of $x_j(h)$ and $x_{-j}(h)$, and tends to $0$ as $n$ tends to infinity. For any $x_j(h)$, we can construct a rank-by-skill prize structure $\vec{w}=(w_1,w_2,\cdots,w_n)$, such that for any  $x_{-j}(h)$, $x_{\vec{w}}(h)$ is an $(1-r(n))$-approximation of $x_j(h)$, with respect to the utility, whenever $x_j(h)$ gets an somewhat satisfactory result against $x_{-j}(h)$, i.e., whenever \Cref{eq:rank-by-skill-approximating-condition} holds.

For the best-response problem, when $x_{-j}(h)$ is known, it is fairly easy to find some $x_j(h)$ such that $R^{Best}(x_j(h)|x_{-j},v_j(q))\geq K^*$, therefore satisfying \Cref{eq:rank-by-skill-approximating-condition}, as shown in the proof of \Cref{proposition:best-response-trivial-lowerbound} using the horizontal stretching method. Obviously, the $\epsilon$-approximate solution found by the FPTAS (\Cref{theorem:FPTAS}) also satisfies \Cref{eq:rank-by-skill-approximating-condition}, for any $\epsilon$ that is not too big. Therefore, when $n$ is large, a designer can first find a satisfactory strategy $x_j(h)$, then get a rank-by-skill prize structure by applying \Cref{theorem:approximating-interim-allocation-with-rank-by-skill}, without significant loss on utility.

For the safety-level problem, since we have constructed a constant-competitive strategy $x_j(h)$ as shown in \Cref{theorem:16-competitive-safetylevel}, we can see that $x_j(h)$ satisfies \Cref{eq:rank-by-skill-approximating-condition}. By applying \Cref{theorem:approximating-interim-allocation-with-rank-by-skill}, we can get a rank-by-skill prize structure that is always a $(1-r(n))$-approximation and consequently has a constant-competitive safety level when $n$ is sufficiently large.

We remark that, designing the allocation rule of a contest in the space of rank-by-skill allocations seems more difficult than that in the generalized space of any interim allocation function, due to the indivisibility of a prize: the designer cannot hand out half a prize to half a contestant, to get half of the contestant's value. The following example shows this difference between the original setting and the generalized setting.
\begin{example}
Assume that $n$ is some positive integer more than $2$ and there are $m=2$ designers with $nt_1<1$ and $nt_2=n$. Suppose $v_j(q)=1$, $\forall q\in[0,1]$, $\forall j=1,2$. If we know that the prize structure of contest $2$ is $w_{2,k}=1$ for $k=1,\cdots,n$, now consider the best response of designer $1$.

In the original model where designer $1$ has to choose a rank-by-skill allocation $\vec{w}_1=(w_{1,1},\cdots,$  $w_{1,n})$. We have $\max_{j\in[0,1]}x_{1}(h)=x_{1}(0)=w_{1,1}\leq\sum_{k=1}^nw_{1,k}\leq nt_1<1$. However, it can be seen that $x_2(h)=\sum_{k=1}^n w_{2,k}\binom{n-1}{k-1}h^{k-1}(1-h)^{n-k}=1$ for any $h\in[0,1]$. This implies that $Q(1)=x_1^{-1}(1)+x_2^{-1}(1)=0+1=1$, so by \Cref{theorem:player-equilibrium-general-condition}, we have $H_1^*(1)\leq x_1^{-1}(1)=0$ for any cumulative equilibrium behavior $H^*=(H_1^*(q),H_2^*(q))$. In other words, designer $1$ cannot attract any contestant and definitely gets zero utility.

In contrast, in the generalized model where designer $1$ is allowed to choose any non-negative and non-increasing interim allocation function $x_1(h)\in\mathcal{F}_{t_1}$ and use the abstract allocation. %represented by it, 
She can choose $x_1(h)=\begin{cases}1+\epsilon,&h\in[0,\frac{t_1}{1+\epsilon}],\\0,&h>\frac{t_1}{1+\epsilon}\end{cases}$, where $\epsilon>0$ can be arbitrarily small, and gets a positive utility of $R^{Worst}(x_1(h)|x_2(h),v_1(q))=\frac{t_1}{1+\epsilon}$. 
\end{example}

This example also shows that, the ratio between the utility of best response in the generalized setting and that in the original setting can be arbitrarily large when $n$ is not sufficiently large.

% Note that generally a non-increasing function $x_j(h)$ may not be in the form of $x_j(h)=\sum_{k=1}^n w_{j,k}\binom{n-1}{k-1}h^{k-1}(1-h)^{n-k}$, so it generally cannot be implemented as a rank-order allocation of $n$ prizes. However, in reality, the number of contestants is usually very large. We can show that, assuming that $n$ is sufficiently large, any $x_j(h)$ can be approximated by rank-order allocations.

% \begin{definition}[Asymptotically Feasible]
% We say a non-negative and non-increasing function $x(h)$ on $[0,1]$ is asymptotically feasible, if for any $0<C<1$, for sufficiently large $n$, there is $\vec{w}=(w_1,w_2,\cdots,w_n)$ such that $w_1\geq w_2\geq\cdots\geq w_n\geq 0$, and $\sum_{k=1}^nw_n=n\int_0^1x(h)dh$, and $x_{\vec{w}}(h)$ $C$-dominates $x(h)$, where $x_{\vec{w}}(h)=\sum_{k=1}^n w_{k}\binom{n-1}{k-1}h^{k-1}(1-h)^{n-k}$.
% \end{definition}

%\input{extensions}
\section{Conclusions and Future Works}
\label{section: conclusion}

In summary, this paper studies the competition among several contest designers. There are $n$ contestants, each of whom choose one contest to join, and try to maximize the expected prize she received. The designers design their prize structure to maximize the total value of participants. For the game of contestants, we give a complete characterization of symmetric Bayesian Nash equilibria. For the game of designers, we investigate the last-mover and first-mover strategy design in the relaxed space. We propose an FPTAS for the best-response problem and a construction of a first-mover strategy with $16$-competitive safety level. Finally we round a strategy back to an implementable prize structure approximately.

We propose three directions for future works that are worth consideration. The first direction is that, we only showed the existence of a first-mover strategy with $16$-competitive safety level comparing with the upper bound, and it remains unclear whether strategies with better safety level exists. The second direction is to investigate the equilibria of designers and characterize the equilibrium prize policies. The last one is to consider the model where the contestants can attend more than one contest and spilt effort over these contests.

%\newpage
\bibliographystyle{splncs04} 
\bibliography{main}

\newpage
\appendix
\section*{Appendix}
This appendix consists of five parts %In Appendix \ref{section:implement}, we provide a technical skill that can transfer any interim allocation function $x_j(h)$ back to the prize structure $w_j$. In 
(Appendix \ref{app:sec3}, \ref{app:sec4}, \ref{app:sec5}, \ref{app:sec6} and \ref{app:sec7}) listing the missing proofs of Section \ref{section:players_equilibrium}, \ref{sec:relax and simp}, \ref{section:best_response}, \ref{section:safetylevel} and \ref{section:implement}, respectively.

\section{Missing Proofs in Section \ref{section:players_equilibrium}}
\label{app:sec3}

\subsection{Proof of \Cref{proposition:differentiating-cumulative-behavior}}
\begin{proof}
First, we claim that each $H_j(q)$ satisfies the Lipschitz condition, because, for any $q_1<q_2 \in [0, 1]$, we have $0\leq H_j(q_2)-H_j(q_1)\leq \sum_{j=1}^m(H_j(q_2)-H_j(q_1))=q_2-q_1$. A Lipschitz function is absolutely continuous, and the following result is well-known:
\begin{lemma}[\cite{RF88}]
If $f(x)$ is an absolutely continuous function on $[a,b]$, then $f(x)$ is differentiable almost everywhere, and its pointwise derivative $f'(x)$ is integrable on $(a,b)$, and moreover, for every $x\in[a,b]$, $f(x)=\int_a^xf'(t)dt$.
\end{lemma}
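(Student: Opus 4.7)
The plan is to follow the classical real analysis route in three stages: bounded variation, almost-everywhere differentiability via monotone-function theory, and finally the fundamental theorem of calculus via absolute continuity. I will assume the standard measure-theoretic machinery of Lebesgue measure and integration on $\mathbb{R}$.

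First, I would show that any absolutely continuous $f$ on $[a,b]$ is of bounded variation. Unwinding the definition of absolute continuity with $\epsilon = 1$, pick $\delta > 0$ such that every finite disjoint collection of subintervals with total length less than $\delta$ has total oscillation less than $1$. Partition $[a,b]$ into $N = \lceil (b-a)/\delta\rceil$ pieces each of length less than $\delta$; the total variation on each piece is at most $1$, so the total variation of $f$ on $[a,b]$ is at most $N$, hence finite. Write $f = f_1 - f_2$ with $f_1, f_2$ monotone non-decreasing (the standard Jordan decomposition via the variation function).

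Next, I would invoke Lebesgue's monotone differentiation theorem: every monotone function on a compact interval is differentiable almost everywhere, with an integrable derivative satisfying $\int_a^b f_i'(t)\,dt \leq f_i(b) - f_i(a)$. Applied to $f_1, f_2$, this yields that $f = f_1 - f_2$ is differentiable a.e.\ and $f'$ is integrable on $(a,b)$. The main technical tool here is the Vitali covering lemma, which I would invoke without reproof, together with the standard Dini-derivative argument that shows the four Dini derivatives of a monotone function agree almost everywhere.

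The hard part, and the place where absolute continuity is essential rather than just bounded variation, is establishing the identity $f(x) = f(a) + \int_a^x f'(t)\,dt$ (the statement as written implicitly normalizes $f(a)=0$). The plan is as follows. Define $F(x) = \int_a^x f'(t)\,dt$; this is itself absolutely continuous (by the absolute continuity of the Lebesgue integral), differentiable a.e.\ with $F'(x) = f'(x)$ a.e.\ (by the Lebesgue differentiation theorem). Set $g(x) = f(x) - f(a) - F(x)$; then $g$ is absolutely continuous with $g(a) = 0$ and $g'(x) = 0$ almost everywhere. It then suffices to prove the key lemma: \emph{an absolutely continuous function with a.e.-vanishing derivative is constant.} To prove this, fix $x \in (a,b]$ and $\epsilon > 0$. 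Let $E \subseteq (a,x)$ be the full-measure set where $g'$ exists and equals $0$. For each $t \in E$, pick arbitrarily small intervals $[t, t+h_t] \subseteq (a,x)$ with $|g(t+h_t) - g(t)| < \epsilon h_t$. By the Vitali covering lemma, extract finitely many disjoint such intervals whose union covers $E$ up to a set of measure less than $\delta$, where $\delta$ is chosen via absolute continuity so that any union of subintervals of total length less than $\delta$ produces total oscillation less than $\epsilon$. Bounding $|g(x) - g(a)|$ by summing $\epsilon h_t$ over the selected intervals (contributing at most $\epsilon(b-a)$) and $\epsilon$ on the uncovered remainder gives $|g(x)| < \epsilon(b - a + 1)$, and letting $\epsilon \downarrow 0$ yields $g(x) = 0$. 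This forces $f(x) = f(a) + F(x)$ for every $x \in [a,b]$, completing the proof.

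The main obstacle is precisely the ``constancy from vanishing derivative'' lemma in the last paragraph, because it is false without absolute continuity (the Cantor function is a counterexample among merely continuous BV functions). Care must be taken to use a Vitali-type selection on the set where $g' = 0$ and couple it with the absolute-continuity modulus $\delta(\epsilon)$ to control the missing piece; this interplay is where the hypothesis of absolute continuity (rather than just continuity or BV) is indispensable.
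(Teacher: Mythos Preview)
Your proposal is correct and follows the classical textbook route (bounded variation $\Rightarrow$ a.e.\ differentiability via Lebesgue's monotone differentiation theorem, then the fundamental theorem via the ``absolutely continuous with $g'=0$ a.e.\ implies constant'' lemma proved by a Vitali covering argument). There is nothing to compare against, however: the paper does not prove this lemma at all. It is stated as a well-known result with a citation to \cite{RF88} and used as a black box inside the proof of Proposition~\ref{proposition:differentiating-cumulative-behavior}. Your sketch is precisely the standard proof one finds in that reference, so you have supplied what the paper deliberately omitted.
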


It is obvious that $E_j$ is a zero measure set. %Denote $E=\{0,1\}\cup\bigcup_{j=1}^mE_j$, then 
By the above lemma, for any $q\in[0,1]\setminus E$ and any $j\in[m]$, the pointwise derivative $H'_j(q)$ exists. Thus, it implies that  $\sum_{j=1}^m\pi_j(q)=\sum_{j=1}^mH_j'(q)=1$ (because $\sum_{j=1}^mH_j(q)=q$); if $q\in E$, then $\sum_{j=1}^m\pi_j(q)=\frac{\left|\arg\max_{j\in[m]}x_{j}(H_{j}(q)+0)\right|}{\left|\arg\max_{j\in[m]}x_{j}(H_{j}(q)+0)\right|}=1$. Therefore, we show $\vec{\pi}\in \mathcal S$. Furthermore, since $E$ is a zero-measure set, we have $\int_0^q\pi_j(u)du=\int_0^qH'_j(u)du=H_j(q)$ for any $q\in[0,1]$. In summary, $\vec{H}(q)$ is the cumulation of $\vec{\pi}(q)$.
\qed
\end{proof}

\subsection{Proof of \Cref{theorem:rankbyskill-player-equilibrium-general-condition} and \Cref{theorem:player-equilibrium-general-condition}}
One can see that \Cref{theorem:rankbyskill-player-equilibrium-general-condition} is an immediate corollary of \Cref{theorem:player-equilibrium-general-condition} (stated in \Cref{sec:relax and simp}), so we only need to prove \Cref{theorem:player-equilibrium-general-condition}.
\begin{proof}
% It is not difficult to check that the condition in Theorem \ref{theorem:player-equilibrium-general-condition} is equivalent to the following statement: for any $q\in[q_1,\min(q_2,1)]$, 
% $$x_j^{-1}(X+0)\leq H_j^*(q)\leq x_j^{-1}(X).$$
% Then, based on this statement, we show the Theorem \ref{theorem:player-equilibrium-general-condition}.

We first prove the necessity by contradiction. Assume that $\vec{H}^*(q)$ is a GCEB, and it is the cumulation of some GSBNE $\vec{\pi}^*\in \mathcal S$.

Suppose, for contradiction, that for some $X$, there is some $j_0\in[m]$, $q\in[q_1,\min(q_2,1)]$ that $H_{j_0}^*(q)<x_{j_0}^{-1}(X+0)$. Combining $q\geq q_1=Q(X+0)=\sum_{j=1}^mx_{j}^{-1}(X+0)$ and $\sum_{j=1}^mH_{j_0}^*(q)=q$, we have $\sum_{j'\neq j_0}H_{j'}^*(q)=q-H_{j_0}^*(q)>\sum_{j=1}^mx_{j}^{-1}(X+0)-x_{j_0}^{-1}(X+0)=\sum_{j'\neq j_0}x_{j'}^{-1}(X+0)$. Thus there must exist $j_1\neq j_0$ such that $H^*_{j_1}(q)>x_{j_1}^{-1}(X+0)$.

From $H_{j_0}^*(q)<x_{j_0}^{-1}(X+0)$ we get that there is $X'>X$, such that $H_{j_0}^*(q)<x_{j_0}^{-1}(X')$, which means that $x_{j_0}(H_{j_0}^*(q)+0)\geq X'>X$. On the other hand, since $H^*_{j_1}(q)=\int_0^q\pi^*_{j_1}(u)du$, and $H^*_{j_1}(q)>x_{j_1}^{-1}(X+0)\geq 0$, there exists $q'\in(0,q)$ such that $\pi^*_{j_1}(q')>0$ and $H^*_{j_1}(q')>x_{j_1}^{-1}(X+0)$, then $x_{j_1}(H^*_{j_1}(q')+0)\leq X$. Since $x_{j_0}(H^*_{j_0}(q')+0)\geq x_{j_0}(H^*_{j_0}(q)+0)>X$, we have $x_{j_1}(H^*_{j_1}(q')+0)\leq X<x_{j_0}(H^*_{j_0}(q')+0)$, which contradicts with \Cref{definition:contestants-generalized-symmetric-equilibrium}.

Similarly, suppose that for some $X$, there is some $j_0\in[m]$, $q\in[q_1,\min(q_2,1)]$ that $H_{j_0}^*(q)>x_{j_0}^{-1}(X)$. Since $\sum_{j=1}^mH_{j_0}^*(q)=q\leq q_2=Q(X)=\sum_{j=1}^mx_{j}^{-1}(X)$, we have $\sum_{j'\neq j_0}H_{j'}^*(q)=q-H_{j_0}^*(q)<\sum_{j'\neq j_0}x_{j'}^{-1}(X)$. Therefore there exists $j_1\neq j_0$ such that $H^*_{j_1}(q)<x_{j_1}^{-1}(X)$. This implies that $x_{j_1}(H^*_{j_1}(q)+0)\geq X$.

Since $H_{j_0}^*(q)>x_{j_0}^{-1}(X)\geq 0$, there is $q'\in(0,1)$ such that $\pi^*_{j_0}(q')>0$ and $H_{j_0}^*(q')>x_{j_0}^{-1}(X)$, then $x_{j_0}(H_{j_0}^*(q')+0)<X$. On the other hand $x_{j_1}(H^*_{j_1}(q')+0)\geq x_{j_1}(H^*_{j_1}(q)+0)\geq X$, therefore we obtain $x_{j_0}(H_{j_0}^*(q')+0)<X\leq x_{j_1}(H^*_{j_1}(q')+0)$, which contradicts with \Cref{definition:contestants-generalized-symmetric-equilibrium}.

Now we prove the sufficiency. Assume the condition holds for $\vec{H}^*$, we construct $\vec{\pi}^*(q)$ as in \Cref{proposition:differentiating-cumulative-behavior}, and show that $\vec{\pi}^*(q)$ satisfies \Cref{definition:contestants-generalized-symmetric-equilibrium}, and is thus a generalized symmetric Bayesian Nash equilibrium.
% Similar with the proof of \Cref{proposition:differentiating-cumulative-behavior}, let $E_j$ denote the set of non-differentiable points of $H_j^*(q)$ in $(0,1)$, and let $E=\{0,1\}\cup\bigcup_{j=1}^mE_j$, which has measure zero. Let $H_j^{*\prime}(q)$ denote the pointwise derivative of $H_j^*(q)$, and let $$\pi_j^*(q)=\begin{cases}
% H_j^{*\prime}(q),&\text{if }q\in[0,1]\setminus E;\\
% \frac1{\left|\arg\max_{j'\in[m]}x_{j'}(H_{j'}(q)+0)\right|},&\text{if }q\in E\land j\in\arg\max_{j'\in[m]}x_{j'}(H_{j'}(q)+0);\\
% 0,&\text{if }q\in E\land j\notin\arg\max_{j'\in[m]}x_{j'}(H_{j'}(q)+0).
% \end{cases}$$, 
% then $\vec{\pi}^*(q)=(\pi^*_1(q),\cdots,\pi^*_m(q))\in\mathcal S$, and $\vec{H}^*$ is the cumulation of $\vec{\pi}^*$.
For every $j\in[m]$, for any $q\in[0,1]$, if $\pi^*_j(q)>0$, we can prove that $x_j(H_j^*(q)+0)=\max_{j'\in[m]}x_{j'}(H_{j'}^*(q)+0)$.

Consider two cases: Either $q\in E$, or $q \notin E$. If $q\in E$, by the construction of $\pi^*_j(q)$ we have known that $\pi^*_j(q)>0$ implies $j\in\arg\max_{j'\in[m]}x_{j'}(H_{j'}(q)+0)$.

If $q\notin E$, then $q\in(0,1)$, and $\pi^*_j(q) = (H_j^{*})'(q)$. Suppose that $\vec{\pi}^*(q)$ is not a GSBNE, then there exists a  $j'\neq j$ such that $x_{j'}(H_{j'}^*(q)+0)>x_j(H_{j}^*(q)+0)$, so there is $\delta>0$ such that $x_{j'}(H_{j'}^*(q)+\delta)>x_j(H_{j}^*(q)+0)$. Set $q'=q+\frac12\delta$, we get $H_{j'}^*(q')\leq H_{j'}^*(q)+\frac12\delta<H_{j'}^*(q)+\delta$. Therefore $x_j^{-1}(x_{j'}(H_{j'}^*(q')+0))\leq  x_j^{-1}(x_{j'}(H_{j'}^*(q)+\delta))$.

Since $x_{j'}(H_{j'}^*(q)+\delta)>x_j(H_{j}^*(q)+0)$, we have $x_j^{-1}(x_{j'}(H_{j'}^*(q)+\delta))\leq H_{j}^*(q)$, so $x_j^{-1}(x_{j'}(H_{j'}^*(q')+0))\leq x_j^{-1}(x_{j'}(H_{j'}^*(q)+\delta))\leq H_{j}^*(q)$.

Let $X'=Q^{-1}(q')$, then $Q(X+0)\leq q'\leq Q(X)$. By assumption we obtain $H_{j'}^*(q')\geq x_{j'}^{-1}(X'+0)=\sup\{h\in[0,1]:x_{j'}(h)>X'\}$. Thus we have $x_{j'}(H_{j'}^*(q')+0)\leq X'$. Take function $x_j^{-1}(\cdot)$ on the both sides, we get $x_j^{-1}(X')\leq x_j^{-1}(x_{j'}(H_{j'}^*(q')+0))\leq H_{j}^*(q)$. Again by assumption we have $H_{j}^*(q')\leq x_{j}^{-1}(X')$, therefore $H_{j}^*(q')\leq H_{j}^*(q)$. Since $q'>q$ and $H_j^*(q)$ is non-decreasing, it can only be a constant on $(q,q')$, contradicting with $H_j^{*\prime}(q)>0$.

Thus we have $x_j(H_{j}^*(q)+0)=\max_{j'\in[m]}x_{j'}(H_{j'}^*(q)+0)$ whenever $\pi^*_j(q)>0$, so $\vec{\pi}^*$ is a GSBNE, and $\vec{H}^*$ is a GCEB.

% If $q\notin E$, $\pi^*_j(q) = H_j^{*\prime}(q)$.
% Suppose that $\vec{\pi}^*(q)$ is not a GSBNE, there exists a  $j'\neq j$ such that $x_{j'}(H_{j'}^*(q)+0)>x_j(H_{j}^*(q)+0)$. Since $x_{j'}(\cdot)$ is non-increasing, there exists a $\delta>0$ such that $x_{j'}(H_{j'}^*(q)+0)\geq x_{j'}(H_{j'}^*(q)+\delta)>x_j(H_{j}^*(q)+0)$.
% If we set $q'=q+\frac12\delta$, we get $H_{j'}^*(q')\leq H_{j'}^*(q)+\frac12\delta$ and $x_j^{-1}(x_{j'}(H_{j'}^*(q')+0))\leq  x_j^{-1}(x_{j'}(H_{j'}^*(q)+\delta))$. Since $x_{j'}(H_{j'}^*(q)+\delta)>x_j(H_{j}^*(q)+0)$, we have $x_j^{-1}(x_{j'}(H_{j'}^*(q)+\delta))\leq H_{j}^*(q)$, which implies that  $x_j^{-1}(x_{j'}(H_{j'}^*(q')+0))\leq H_{j}^*(q)$. Let $X'=Q^{-1}(q')$, by the Condition 1,  we obtain $H_{j'}^*(q')\geq x_{j'}^{-1}(X'+0)$. 
% On the other hand, for $\forall X''>X'$, we all have $x_{j'}(H_{j'}^*(q')+0)\leq X''$. When $X''$ converges to $X'$, the inequality $x_{j'}(H_{j'}^*(q')+0)\leq X'$ holds. Take function $x_j^{-1}(\cdot)$ on the both sides, we get $x_j^{-1}(X')\leq x_j^{-1}(x_{j'}(H_{j'}^*(q')+0))\leq H_{j}^*(q)$. By the Condition 1, we obtain $H_{j}^*(q')\leq x_{j}^{-1}(X')$. Therefore, we get $H_{j}^*(q')\leq H_{j}^*(q)$, which means that $H_j^*(q)$ is a constant on $(q,q')$, contradicting with $H_j^{*\prime}(q)>0$. Finally, we show $x_j(x_j^{-1}(X)+0)=\max_{j'\in[m]}x_{j'}(x_{j'}^{-1}(X)+0)$ and $\vec{\pi}^*(q)$ is a GSBNE.
\qed
\end{proof}
% \subsection{Proof of \Cref{}}

% \subsection{Proof of Corollory }

\section{Missing Proofs in Section \ref{sec:relax and simp}}
\label{app:sec4}

\subsection{Proof of \Cref{proposition:m-contest-equivalent-with-two-contest}}
\begin{proof}
We declare some symbols for the new GoC game with two contests. Let $\tilde{m}=2$, $\tilde{x_1}(h)=x_j(h),\tilde{x_2}(h)=x_{-j}(h)$, and
$\tilde{Q}(x)=\sum_{j'=1}^{\tilde{m}}\tilde{x_{j'}}^{-1}(x)=x_j^{-1}(x)+x_{-j}^{-1}(x)$ and $\tilde{H_1}(q)=H_j(q),\tilde{H_2}(q)=\sum_{j'\neq j}H_{j'}(q)=q-H_j(q)$.
To prove that $\tilde{H}=(\tilde{H_1},\tilde{H_2})\in\mathcal H^*(\tilde{x_1}(h),\tilde{x_2}(h))$, we only need to check the condition \Cref{theorem:player-equilibrium-general-condition}.
Since $\tilde{Q}(x)=Q(x)$ still holds, we know $D_{\tilde{Q}(x)}=D_{Q(x)}$, and $\underline{X},\overline{X}$ are unchanged.

For any $X\in[\underline{X},\overline{X}]$, let $q_1=\tilde{Q}(X+0), q_2=\tilde{Q}(X)$. Then we have for any $q\in[q_1,\min(1,q_2)]$,
\begin{align*}
    & \tilde{H_1}(q)=H_j(q)\in[x_j^{-1}(X+0),x_j^{-1}(X)]=[\tilde{x_1}^{-1}(X+0),\tilde{x_1}^{-1}(X)] \text{,and}\\
    & \tilde{H_2}(q)=\sum_{j'\neq j}H_{j'}(q_1)\in[\sum_{j'\neq j}x_{j'}^{-1}(X+0),\sum_{j'\neq j}x_{j'}^{-1}(X)]=[\tilde{x_2}^{-1}(X+0),\tilde{x_2}^{-1}(X)].
\end{align*}

Therefore, by \Cref{theorem:player-equilibrium-general-condition}, we get $\tilde{H}(q)=(\tilde{H_1}(q),\tilde{H_2}(q))\in\mathcal H^*(x_j(h),x_{-j}(h))$.
\qed
\end{proof}

% \subsection{Proof of Theorem \ref{theorem:best-response-worst-and-best-equivalent}}
\subsection{A Technical Lemma}
\begin{lemma}
\label{lemma:integral-compare}
If $f(x)$ is non-negative and non-increasing on $[0,1]$, and $g_1(x)$, $g_2(x)$ are continuous non-decreasing functions on $[0,1]$, such that $g_1(0)=g_2(0)=0$, and for any $x\in[0,1]$, $g_1(x)\leq g_2(x)$. Then $\int_0^1f(x)dg_1(x)\leq \int_0^1f(x)dg_2(x)$.
\end{lemma}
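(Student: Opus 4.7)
The plan is to apply integration by parts for Riemann--Stieltjes integrals. Since $g_1, g_2$ are continuous on $[0,1]$ and $f$ is of bounded variation (being bounded and monotone), and $f$ and $g_i$ share no common discontinuities (the $g_i$'s are continuous), the integration-by-parts formula
$$\int_0^1 f(x)\,dg_i(x) \;=\; f(1)g_i(1) - f(0)g_i(0) - \int_0^1 g_i(x)\,df(x)$$
is valid for $i=1,2$. Using $g_1(0)=g_2(0)=0$ and subtracting the identity for $i=1$ from that for $i=2$ yields
$$\int_0^1 f\,dg_2 \;-\; \int_0^1 f\,dg_1 \;=\; f(1)\bigl(g_2(1)-g_1(1)\bigr) \;+\; \int_0^1 \bigl(g_1(x)-g_2(x)\bigr)\,df(x).$$

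Now I would argue that each of the two right-hand terms is non-negative. For the first, $f(1)\geq 0$ by the non-negativity of $f$, and $g_2(1)\geq g_1(1)$ by hypothesis. For the second, since $f$ is non-increasing the signed Lebesgue--Stieltjes measure $df$ is non-positive, i.e.\ $-df$ is a non-negative measure on $[0,1]$; combined with $g_1(x)-g_2(x)\leq 0$, this gives
$$\int_0^1 (g_1-g_2)\,df \;=\; \int_0^1 (g_2-g_1)\,(-df) \;\geq\; 0.$$
Adding the two non-negative terms yields $\int_0^1 f\,dg_2 \geq \int_0^1 f\,dg_1$, which is the desired conclusion.

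The main technical point to be careful about is the legitimacy of Stieltjes integration by parts: it requires that $f$ and $g_i$ have no common discontinuities, which holds here because the $g_i$ are continuous (while $f$, being monotone, may have countably many jumps). If one wishes to avoid citing this formula, an equivalent route is to represent $f$ via its right-continuous non-increasing structure as $f(x) = f(1) + \int_{[x,1]} (-df)(u)$ and then apply Fubini's theorem to swap the order of the two Stieltjes integrals against $dg_i$ and $-df$; the same identity and inequalities fall out. Either way, the essential mechanism is that monotonicity of $f$ converts the pointwise order $g_1 \leq g_2$ into an order on the integrals.
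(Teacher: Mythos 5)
Your proof is correct and rests on essentially the same mechanism as the paper's: both are integration-by-parts arguments that convert the pointwise inequality $g_1 \leq g_2$, paired against $dg_i$, into a pairing against the non-negative ``measure'' $-df$, using $f \geq 0$ and $f$ non-increasing. The paper performs this at the level of Riemann--Stieltjes sums via Abel summation over an arbitrary partition and then takes limits, thereby remaining self-contained, whereas you invoke the Stieltjes integration-by-parts theorem and Lebesgue--Stieltjes measure theory directly; that is equally valid here since the $g_i$ are continuous and $f$, being monotone, is of bounded variation with a well-defined (after a right-continuous modification) signed measure $df$.
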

\begin{proof}
For any partition $a=x_0\leq x_1\leq x_2\leq\cdots\leq x_n=b$ of $[0,1]$, and any choice $\{\xi_1,\xi_2,\cdots,\xi_n\}$ such that $\xi_i\in[x_{i-1},x_i]$, we have $$\sum_{i=1}^nf(\xi_i)(g_1(x_{i})-g_1(x_{i-1}))=\sum_{i=1}^{n-1}g_1(x_{i})(f(\xi_i)-f(\xi_{i+1}))+f(\xi_n)g_1(x_n),$$
and $$\sum_{i=1}^nf(\xi_i)(g_2(x_{i})-g_2(x_{i-1}))=\sum_{i=1}^{n-1}g_2(x_{i})(f(\xi_i)-f(\xi_{i+1}))+f(\xi_n)g_2(x_n).$$
By assumption we have $f(\xi_i)-f(\xi_{i+1})\geq 0$ , $f(\xi_n)\geq 0$, and $g_1(x_i)\leq g_2(x_i)$. Combining with the above two equations, we get $$\sum_{i=1}^nf(\xi_i)(g_1(x_{i})-g_1(x_{i-1}))\leq \sum_{i=1}^nf(\xi_i)(g_2(x_{i})-g_2(x_{i-1})).$$
By the definition of Riemann-Stieltjes integral, we have the result,  $\int_0^1f(x)dg_1(x)$ $\leq \int_0^1f(x)dg_2(x)$.
\qed
\end{proof}

\subsection{Proof of \Cref{proposition:best-response-worst-and-best-equilibrium}}
\begin{proof}
We first prove that $H_j^{Worst}(q)$ and $H_j^{Best}(q)$ belong to $\mathcal H^*_j(x_j(h),x_{-j}(h))$.

We first check that $(H_j^{Worst}(q),q-H_j^{Worst}(q))\in\mathcal H^*(x_j(h),x_{-j}(H))$.
$\forall q\in [0,1]$, let $X=Q^{-1}(q)$, then $Q(X+0)\leq q\leq Q(X)$. Since $H_j^{Worst}(q)=\max\{x_j^{-1}(X+0), q-x_{-j}^{-1}(X)\}$, we have $H_j^{Worst}(q)\geq x_j^{-1}(X+0)$. Moreover, combining $x_j^{-1}(X+0)\leq x_j^{-1}(X)$ and $q-x_{-j}^{-1}(X)\leq Q(X)-x_{-j}^{-1}(X)=x_j^{-1}(X)$, we have $H_j^{Worst}(q)\leq x_j^{-1}(X)$. So we have $x_j^{-1}(X+0)\leq H_j^{Worst}(q)\leq x_j^{-1}(X)$.

For $q-H_j^{Worst}(q)$, since $q-H_j^{Worst}(q)=\min\{q-x_j^{-1}(X+0), x_{-j}^{-1}(X)\}$, we have $q-H_j^{Worst}(q)\leq x_{-j}^{-1}(X)$. Combining $q-x_j^{-1}(X+0)\geq Q(X+0)-x_j^{-1}(X+0)=x_{-j}^{-1}(X+0)$ and $x_{-j}^{-1}(X)\geq x_{-j}^{-1}(X+0)$, we have $q-H_j^{Worst}(q)\geq x_{-j}^{-1}(X+0)$. So we have $x_{-j}^{-1}(X+0)\leq q-H_j^{Worst}(q)\leq x_{-j}^{-1}(X)$. Thus by \Cref{theorem:player-equilibrium-general-condition}, we have $(H_j^{Worst}(q),q-H_j^{Worst}(q))\in\mathcal H^*(x_j(h),x_{-j}(H))$.

Almost symmetrically, we can check that $x_j^{-1}(X+0)\leq H_j^{Best}(q)\leq x_j^{-1}(X)$ and $x_{-j}^{-1}(X+0)\leq q-H_j^{Best}(q)\leq x_{-j}^{-1}(X)$, we omit the detailed calculation. Again by \Cref{theorem:player-equilibrium-general-condition}, we have $(H_j^{Best}(q),q-H_j^{Best}(q))\in\mathcal H^*(x_j(h),x_{-j}(H))$.
Therefore we have that both $H_j^{Worst}(q)$ and $H_j^{Best}(q)$ belong to $\mathcal H^*_j(x_j(h),x_{-j}(h))$.

For any $H_j(q)\in\mathcal H^*_j(x_j(h),x_{-j}(h))$, for any $q\in[0,1]$, by \Cref{theorem:player-equilibrium-general-condition} we have $x_j^{-1}(X+0)\leq H_j(q)\leq x_j^{-1}(X)$ and $x_{-j}^{-1}(X+0)\leq q-H_j(q)\leq x_{-j}^{-1}(X)$. The latter inequality is equivalent with $q-x_{-j}^{-1}(X)\leq H_j(q)\leq q-x_{-j}^{-1}(X+0)$. Thus we have $\max\{x_j^{-1}(X+0),q-x_{-j}^{-1}(X)\}\leq H_j(q)\leq \min\{x_j^{-1}(X),q-x_{-j}^{-1}(X+0)\}$, i.e., $H_j^{Worst}(q)\leq H_j(q)\leq H_j^{Best}(q)$. By \Cref{lemma:integral-compare}, we obtain
$$\int_0^1v_j(q)dH_j^{Worst}(q)\leq\int_0^1v_j(q)dH_j(q)\leq\int_0^1v_j(q)dH_j^{Best}(q).$$

Since $H_j(q)$ can be arbitrarily chosen from $\mathcal H^*_j(x_j(h),x_{-j}(h))$, we have $$\int_0^1v_j(q)dH_j^{Worst}(q)=\min_{H_j(q)\in\mathcal H^*_j(x_j(h),x_{-j}(h))}\int_0^1v_j(q)dH_j(q)=R^{Worst}(x_j(h)),$$and
$$\int_0^1v_j(q)dH_j^{Best}(q)=\max_{H_j(q)\in\mathcal H^*_j(x_j(h),x_{-j}(h))}\int_0^1v_j(q)dH_j(q)=R^{Best}(x_j(h)).$$
\qed
\end{proof}

\subsection{Proof of Proposition \ref{prop:relationship between w and b under C}}

% \begin{proposition}
% \label{proposition:horizontal-stretch-C-dominate}
% Given $x_{-j}(h)$. If $x_j(h)$ $C$-dominates $\hat{x}_j(h)$ for some $C\in(0,1]$, then for any $x_{-j}(h)$ and $v_j(q)$, $R^{Worst}(x_j(h))\geq CR^{Worst}(\hat{x}_j(h))$, and  $R^{Best}(x_j(h))\geq CR^{Best}(\hat{x}_j(h))$
% \end{proposition}
\begin{proof}
We first prove the statement 1. Let $H_j^{Worst}$ and $H_j^{Best}$ be the worst and the best behavior of contestants in $\mathcal H_j^*(x_j(h),x_{-j}(h))$ for designer $j$, and $\hat{H}_j^{Worst}$ and $\hat{H}_j^{Best}$ be the worst and the best behavior of contestants in $\mathcal H_j^*(\hat{x}_j(h),x_{-j}(h))$, as defined in \Cref{proposition:best-response-worst-and-best-equilibrium}.

Consider two special cases: the case that $C=1$, i.e. $x(h)\geq \hat{x}(h)$ for any $h\in[0,1]$; and the case that $x_j(Ch)=\hat{x}(h)$ for any $h\in[0,1]$.

Firstly we prove the case of $C=1$.
If $x_j(h)$ $1$-dominates $\hat{x}_j(h)$, then for any $h\in[0,1]$, $x_j(h)\geq \hat{x}_j(h)$.
Let $Q(x)=x_{j}^{-1}(x)+x_{-j}^{-1}(x),\hat{Q}(x)=\hat{x}_{j}^{-1}(x)+x_{-j}^{-1}(x)$. 

For any $X\in[\hat{Q}^{-1}(1),\hat{x}_j(0)]$, $x_j^{-1}(X)\geq \hat{x}_j^{-1}(X)$, and $Q(X)\geq \hat{Q}(X)$. For any $q\in[0,1]$, let $X=Q^{-1}(q)$, $\hat{X}=\hat{Q}^{-1}(q)$, then $\hat{X}\leq X$.

If $\hat{X}<X$, then $q-\hat{H}_j^{Worst}(q)\geq q-\hat{H}_j^{Best}(q)\geq x_{-j}^{-1}(\hat{X}+0)\geq x_{-j}^{-1}(X)\geq q-H_j^{Worst}(q)\geq q-H_j^{Best}(q)$, so $H_j^{Best}(q)\geq H_j^{Worst}(q)\geq \hat{H}_j^{Best}(q)\geq \hat{H}_j^{Worst}(q)$.

If $\hat{X}=X$, then
\begin{align*}
\hat{H}_j^{Best}(q)=&\min(q-x_{-j}^{-1}(X+0),\hat{x}_{j}^{-1}(X)),\\
H_j^{Best}(q)=&\min(q-x_{-j}^{-1}(X+0),x_{j}^{-1}(X)),\\ \hat{H}_j^{Worst}(q)=&\max(q-x_{-j}^{-1}(X),\hat{x}_{j}^{-1}(X+0)),\\ H_j^{Worst}(q)=&\max(q-x_{-j}^{-1}(X),x_{j}^{-1}(X+0)).
\end{align*}
Since $\hat{x}_{j}^{-1}(X)\leq x_{j}^{-1}(X)$ and $\hat{x}_{j}^{-1}(X+0)\leq x_{j}^{-1}(X+0)$, we have $\hat{H}_j^{Best}(q)\leq H_j^{Best}(q)$, and  $\hat{H}_j^{Worst}(q)\leq H_j^{Worst}(q)$.

Therefore $\forall q\in[0,1]$, $\hat{H}_j^{Best}(q)\leq H_j^{Best}(q)$, and  $\hat{H}_j^{Worst}(q)\leq H_j^{Worst}(q)$. By \Cref{lemma:integral-compare},  $\int_0^1v_j(q)d\hat{H}_j^{Best}(q)\leq \int_0^1v_j(q)dH_j^{Best}(q)$ and $\int_0^1v_j(q)d\hat{H}_j^{Worst}(q)\leq \int_0^1v_j(q)dH_j^{Worst}(q)$.

Next we prove the case that for some $C\in(0,1]$, $\forall h\in[0,1]$, $x_j(Ch)=\hat{x}(h)$.
We have $x_j^{-1}(x)=C\hat{x}_j^{-1}(x)$ for any $x$.
Let $Q(x)=x_{j}^{-1}(x)+x_{-j}^{-1}(x),\hat{Q}(x)=\hat{x}_{j}^{-1}(x)+x_{-j}^{-1}(x)$. 

For any $q\in[0,1]$, let $\hat{X}=\hat{Q}^{-1}(q)$, then $\hat{H}_j^{Best}(q)=\min(q-x_{-j}^{-1}(\hat{X}+0),\hat{x}_{j}^{-1}(\hat{X}))$. Let $q'=C\hat{H}_j^{Best}(q)+x_{-j}^{-1}(\hat{X}+0)$, then $q'\leq \hat{H}_j^{Best}(q)+x_{-j}^{-1}(\hat{X}+0)\leq q$. We have
\begin{align*}
&\hat{x}_{j}^{-1}(\hat{X}+0)\leq \hat{H}_j^{Best}(q)\leq \hat{x}_{j}^{-1}(\hat{X})\\
\implies& C\hat{x}_{j}^{-1}(\hat{X}+0)+x_{-j}^{-1}(\hat{X}+0)\leq q'\leq C\hat{x}_{j}^{-1}(\hat{X})+x_{-j}^{-1}(\hat{X}+0)\\
\implies& x_{j}^{-1}(\hat{X}+0)+x_{-j}^{-1}(\hat{X}+0)\leq q'\leq x_{j}^{-1}(\hat{X})+x_{-j}^{-1}(\hat{X})\\
\implies& Q(\hat{X}+0)\leq q'\leq Q(\hat{X}).
\end{align*}
Thus $H_j^{Best}(q')=\min(q'-x_{-j}^{-1}(\hat{X}+0),x_{j}^{-1}(\hat{X}))\leq q'-x_{-j}^{-1}(\hat{X}+0)=C\hat{H}_j^{Best}(q)$. Then $H_j^{Best}(q)\geq H_j^{Best}(q')\geq C\hat{H}_j^{Best}(q)$ since $q'\leq q$.

For any $q\in[0,1]$, let $\hat{X}=\hat{Q}^{-1}(q)$, then $\hat{H}_j^{Worst}(q)=\max(q-x_{-j}^{-1}(\hat{X}),\hat{x}_{j}^{-1}(\hat{X}+0))$.
When $q-x_{-j}^{-1}(\hat{X})\geq\hat{x}_{j}^{-1}(\hat{X}+0)$, $\hat{H}_j^{Worst}(q)=q-x_{-j}^{-1}(\hat{X})$. Let $q'=C\hat{H}_j^{Worst}(q)+x_{-j}^{-1}(\hat{X})$, then $q'\leq \hat{H}_j^{Worst}(q)+x_{-j}^{-1}(\hat{X})=q$. We have
\begin{align*}
&\hat{x}_{j}^{-1}(\hat{X}+0)\leq \hat{H}_j^{Worst}(q)\leq \hat{x}_{j}^{-1}(\hat{X})\\
\implies& C\hat{x}_{j}^{-1}(\hat{X}+0)+x_{-j}^{-1}(\hat{X}+0)\leq q'\leq C\hat{x}_{j}^{-1}(\hat{X})+x_{-j}^{-1}(\hat{X}+0)\\
\implies& x_{j}^{-1}(\hat{X}+0)+x_{-j}^{-1}(\hat{X}+0)\leq q'\leq x_{j}^{-1}(\hat{X})+x_{-j}^{-1}(\hat{X})\\
\implies& Q(\hat{X}+0)\leq q'\leq Q(\hat{X}),
\end{align*}
Thus $H_j^{Worst}(q')=\max(q'-x_{-j}^{-1}(\hat{X}),x_{j}^{-1}(\hat{X}+0))=\max(C\hat{H}_j^{Worst}(q),x_{j}^{-1}(\hat{X}+0))$. Since $\hat{H}_j^{Worst}(q)=q-x_{-j}^{-1}(\hat{X})\geq \hat{x}_{j}^{-1}(\hat{X}+0)$, we obtain $C\hat{H}_j^{Worst}(q)\geq x_{j}^{-1}(\hat{X}+0)$, and $H_j^{Worst}(q')=C\hat{H}_j^{Worst}(q)$. Then $H_j^{Worst}(q)\geq H_j^{Worst}(q')=C\hat{H}_j^{Worst}(q)$ since $q'\leq q$.

When $q-x_{-j}^{-1}(\hat{X})\leq\hat{x}_{j}^{-1}(\hat{X}+0)$, $\hat{H}_j^{Worst}(q)=\hat{x}_{j}^{-1}(\hat{X}+0)$. Let $q'=C\hat{H}_j^{Worst}(q)+x_{-j}^{-1}(\hat{X}+0)$, then $q'=x_{j}^{-1}(\hat{X}+0)+x_{-j}^{-1}(\hat{X}+0)=Q(\hat{X}+0)$. We have $Q(\hat{X}+0)=q'\leq C\hat{H}_j^{Worst}(q)+(q-\hat{H}_j^{Worst}(q))\leq q$, and $Q(\hat{X})\geq Q(\hat{X}+0)=q'$.

Thus, $H_j^{Worst}(q')=\max(q'-x_{-j}^{-1}(\hat{X}),x_{j}^{-1}(\hat{X}+0))\geq x_{j}^{-1}(\hat{X}+0)=C\hat{x}_{j}^{-1}(\hat{X}+0)=C\hat{H}_j^{Worst}(q)$. Then $H_j^{Worst}(q)\geq H_j^{Worst}(q')\geq C\hat{H}_j^{Worst}(q)$ since $q'\leq q$.

Now we have $H_j^{Best}(q)\geq C\hat{H}_j^{Best}(q)$ and $H_j^{Worst}(q)\geq C\hat{H}_j^{Worst}(q)$, by \Cref{lemma:integral-compare} we have
\begin{align*}\int_0^1v_j(q)dH_j^{Best}(q)&\geq\int_0^1v_j(q)d(C\hat{H}_j^{Best}(q))= C\int_0^1v_j(q)d\hat{H}_j^{Best}(q),\\ \int_0^1v_j(q)dH_j^{Worst}(q)&\geq\int_0^1v_j(q)d(C\hat{H}_j^{Worst}(q))= C\int_0^1v_j(q)d\hat{H}_j^{Worst}(q).\end{align*}

For the general case that $x_j(h)$ $C$-dominates $\hat{x}_j(h)$, we construct $\bar{x}_j(h)=\begin{cases}\hat{x}_j(h/C),&h\in [0,C],\\0,&h\in(C,1]\end{cases}$, let $\bar{H}_j^{Worst}$ and $\bar{H}_j^{Best}$ be the worst and best cumulative equilibrium in $\mathcal H_j^*(\bar{x}_j(h),x_{-j}(h))$. We can apply the second case on $\bar{x}_j(h)$ and $\hat{x}_j(h)$, then apply the first case on $x_j(h)$ and $\bar{x}_j(h)$, then we get \begin{align*}\int_0^1v_j(q)dH_j^{Best}(q)&\geq\int_0^1v_j(q)d\bar{H}_j^{Best}(q)\geq C\int_0^1v_j(q)d\hat{H}_j^{Best}(q),\\ \int_0^1v_j(q)dH_j^{Worst}(q)&\geq\int_0^1v_j(q)d\bar{H}_j^{Worst}(q)\geq C\int_0^1v_j(q)d\hat{H}_j^{Worst}(q).\end{align*}
% \qed
% \end{proof}
% \begin{proposition}
% \label{proposition:horizontal-stretch-C-strong-dominate}
% Given $x_{-j}(h)$. If for some $C\in(0,1]$, $x_j(h)$ $C$-strongly-dominates $\hat{x}_j(h)$, then for any $x_{-j}(h)$ and $v_j(q)$, $R^{Worst}(x_j(h))\geq CR^{Best}(\hat{x}_j(h))$.
% \end{proposition}
% \begin{proof}

By statement 1, to show statement 2, we only need to prove the case of $C=1$.
Suppose $\forall h\in[0,1]$, $x_j(h)>\hat{x}_j(h)$.
Let $Q(x)=x_{j}^{-1}(x)+x_{-j}^{-1}(x),\hat{Q}(x)=\hat{x}_{j}^{-1}(x)+x_{-j}^{-1}(x)$. For any $X\in[\hat{Q}^{-1}(1),\hat{x}_j(0)]$, $x_j^{-1}(X)\geq \hat{x}_j^{-1}(X)$, and $Q(X)\geq \hat{Q}(X)$.

For any $q\in[0,1]$, let $X=Q^{-1}(q)$, $\hat{X}=\hat{Q}^{-1}(q)$, then $X\geq\hat{X}$. If $X>\hat{X}$, then $q-\hat{H}_j^{Best}(q)\geq x_{-j}^{-1}(\hat{X}+0)\geq x_{-j}^{-1}(X)\geq q-H_j^{Worst}(q)$. 
If $X=\hat{X}$, then $\hat{H}_j^{Best}(q)\leq \hat{x}_j^{-1}(\hat{X})=\sup\{h\in[0,1]:\hat{x}_j(j)\geq \hat{X}\}\leq \sup\{h\in[0,1]:x_j(j)>\hat{X}\}=x_j^{-1}(X+0)\leq H_j^{Worst}(q)$.
Thus $\hat{H}_j^{Best}(q)\leq H_j^{Worst}(q)$, so by \Cref{lemma:integral-compare}, $\int_0^1v_j(q)dH_j^{Worst}(q)\geq C\int_0^1v_j(q)d\hat{H}_j^{Best}(q)$.
\qed
\end{proof}
% \begin{theorem}\label{theorem:best-response-worst-and-best-equivalent}
% Given any $t_j>0$ and $x_{-j}(h)$, $\sup_{x_j(h)\in\mathcal F_{t_j}}R^{Worst}(x_j(h);x_{-j}(h);v_j(q))=\sup_{x_j(h)\in\mathcal F_{t_j}}R^{Best}(x_j(h);x_{-j}(h);v_j(q))$
% \end{theorem}
% \begin{proof}
% Let $R^*=\sup_{x_j(h)\in\mathcal F_{t_j}}R^{Best}(x_j(h);x_{-j}(h);v_j(q))$.
% For any $\epsilon>0$, there is some $\hat{x}_j(h)\in\mathcal F_{t_j}$ such that $R^{Best}(\hat{x}_j(h);x_{-j}(h);v_j(q))\geq (1-\epsilon)R^*$.
% Let $\bar{x}_j(h)=\hat{x}_j(h/(1-\epsilon))+\epsilon t_j$ for any $h\in[0,1-\epsilon]$, then $\int_0^1\bar{x}_j(h)dh\leq (1-\epsilon)t_j+\epsilon t_j=t_j$, so $\bar{x}_j(h)\in\mathcal F_{t_j}$. Moreover, $\bar{x}_j(h)$ $(1-\epsilon)$-strongly-dominates $\hat{x}_j(h)$.

% By \Cref{proposition:horizontal-stretch-C-strong-dominate}, $R^{Worst}(\bar{x}_j(h);x_{-j}(h);v_j(q))\geq (1-\epsilon)R^{Best}(\hat{x}_j(h);x_{-j}(h);v_j(q))\geq (1-\epsilon)^2R^*$.
% Since $\epsilon$ can be arbitrarily small, we have $\sup_{x_j(h)\in\mathcal F_{t_j}}R^{Worst}(x_j(h);x_{-j}(h);v_j(q))=\sup_{x_j(h)\in\mathcal F_{t_j}}R^{Best}(x_j(h);x_{-j}(h);v_j(q))$.
% \qed
% \end{proof}

% By theorem \Cref{theorem:best-response-worst-and-best-equivalent}, in the best-response problem we only need to find $x_j(h)\in\mathcal F_{t_j}$ which maximizes $R^{Best}(x_j(h);x_{-j}(h);v_j(q))$.

\subsection{Proof of \Cref{proposition:best-response-trivial-lowerbound}}
% \begin{proposition}
% If $x_j(h)$ $C$-dominates $x_{-j}(h)$, then $R^{Best}(x_j(h))\geq \frac{C}{C+1}\int_0^1v_j(q)dq$. Moreover, if $x_j(h)$ $C$-strongly-dominates $x_{-j}(h)$, then $R^{Worst}(x_j(h))\geq \frac{C}{C+1}\int_0^1v_j(q)dq$.
% \end{proposition}
\begin{proof}
When $x_j(h)$ $C$-strongly-dominates $x_{-j}(h)$, consider any cumulative equilibrium $(H_j(q),H_{-j}(q))\in\mathcal H^*(x_j(h),x_{-j}(h))$. Since $\int_0^1v_j(q)dq =\int_0^1v_j(q)dH_j(q)+\int_0^1v_j(q)dH_{-j}(q)$, we only need to prove that $\int_0^1v_j(q)dH_j(q)\geq C\int_0^1v_j(q)dH_{-j}(q)$.

Let $\underline{X}=Q^{-1}(1)=\inf\{X:x_j^{-1}(X)+x_{-j}^{-1}(X)\leq 1\}$. For any $X\in[\underline{X},x_{-j}(0)]$, whenever $x_{-j}(h)\geq X$, $x_j(Ch)>X$. So we obtain $x_j^{-1}(X+0)=\sup\{h\in[0,1]:x_{j}(h)>X\}\geq\sup\{\min(1,Ch):h\in[0,1],x_{-j}(h)\geq X\}=\min(1,C\sup\{h\in[0,1]:x_{-j}(h)\geq X\})=\min(1,Cx_{-j}^{-1}(X))$.

For any $q\in[0,1]$, by \Cref{theorem:player-equilibrium-general-condition}, $x_j^{-1}(Q^{-1}(q)+0)\leq H_j(q)\leq x_j^{-1}(Q^{-1}(q))$. If $H_j(q)<1$, then $H_j(q)\geq x_j^{-1}(Q^{-1}(q)+0)\geq Cx_{-j}(Q^{-1}(q))\geq CH_{-j}(q)$. If $H_j(q)=1$, we still have $H_j(q)>0=CH_{-j}(q)$. Therefore, $H_j(q)\geq CH_{-j}(q)$ and $\int_0^1v_j(q)dH_j(q)\geq \int_0^1v_j(q)d(CH_{-j}(q))=C\int_0^1v_j(q)dH_{-j}(q)$.

When $x_j(h)$ $C$-dominates $x_{-j}(h)$, take $H_j^{Best}(q)$ as in \Cref{proposition:best-response-worst-and-best-equilibrium} and let $H_{-j}(q)=q-H_j^{Best}(q)$. We only need to prove that $\int_0^1v_j(q)dH_j^{Best}(q)\geq C\int_0^1v_j(q)$ $dH_{-j}(q)$ holds.

Let $\underline{X}=Q^{-1}(1)=\inf\{X:x_j^{-1}(X)+x_{-j}^{-1}(X)\leq 1\}$. For any $X\in[\underline{X},x_{-j}(0)]$, whenever $x_{-j}(h)\geq X$, $x_j(Ch)\geq X$, so we obtain $x_j^{-1}(x)=\sup\{h\in[0,1]:x_{j}(h)\geq x\}\geq \min(1,C\sup\{h\in[0,1]:x_{-j}(h)\geq x\})=\min(1,Cx_{-j}^{-1}(x))$, and $x_j^{-1}(x+0)\geq\min(1,Cx_{-j}^{-1}(x+0))$.

For every $q\in[0,1]$, let $X=Q^{-1}(q)$. By \Cref{proposition:best-response-worst-and-best-equilibrium}, if $q\leq x_j^{-1}(X)+x_{-j}^{-1}(X+0)$, we have $H_j^{Best}(q)=q-x_{-j}^{-1}(X+0)\geq x_j^{-1}(X)$ and $H_{-j}(q)=x_{-j}^{-1}(X+0)$, so $H_j^{Best}(q)\geq CH_{-j}(q)$. If $q>x_j^{-1}(X)+x_{-j}^{-1}(X+0)$, then $H_j^{Best}(q)=x_j^{-1}(X)$ and $H_{-j}(q)\leq x_{-j}^{-1}(X)$, so $H_j^{Best}(q)\geq CH_{-j}(q)$.
In either case, $H_j^{Best}(q)\geq CH_{-j}(q)$, so $\int_0^1v_j(q)dH_j(q)\geq \int_0^1v_j(q)d(CH_{-j}(q))=C\int_0^1v_j(q)dH_{-j}(q)$.

Finally, let $x_j(h)=\begin{cases}
x_{-j}(\frac{t_{-j}}{t_{j}}h),&\text{if }h\in[0,\min\{\frac{t_j}{t_{-j}},1\}]\\
0,&\text{if }h>\min\{\frac{t_j}{t_{-j}},0\}\end{cases}$. Since $x_j(h)$ ($\frac{t_j}{t_{-j}}$)-dominates $x_{-j}(h)$, we have $$R^{Best}(x_j(h))\geq\frac{t_j}{t_j+t_{-j}}\int_0^1v_j(q)dq.$$
\qed
\end{proof}
% \begin{proposition}
% % \label{proposition:best-response-trivial-lowerbound}
% For any $v(q)$, for any $x_{-j}(h)\in\mathcal F_{T_{-j}}$, there is $x_j(h)\in\mathcal F_{t_j}$ such that $R^{Best}(x_j(h);x_{-j}(h);v_j(q))\geq\frac{t_j}{t_j+T_{-j}}\int_0^1v_j(q)dq$.
% \end{proposition}

% Therefore solving the best-response problem with $x_j(h)$ being an general non-increasing function gives a tight upperbound of the case that $x_j(h)$ is required to be implemented by $n$ rank-order prizes.

% Now we show that, although the best-response problem is NP-hard, there is a FPTAS to find a $(1-\epsilon)$-approximation for any $\epsilon>0$.

\section{Missing Proofs in Section \ref{section:best_response}}
\label{app:sec5}

\subsection{Proof of NP-Hardness (Theorem \ref{theorem:NP-hard})}
% \begin{theorem}\label{theorem:NP-hard}
% Given $v_j(q)$ and $x_{-j}(h)$ and $t_j>0$, it is NP-hard to calculate $j$'s best response.
% \end{theorem}

\begin{proof}
Consider the following special case: assume that both $v_j(q)$ and $x_{-j}(h)$ are piecewise constant, i.e., there is some intervals forming a partition of $[0,1]$, and the function value is a constant on each interval. Without loss of generality we assume that both $v_j(q)$ and $x_{-j}(h)$ are right-continuous. It can be seen easily that they can be represented precisely by the lengths of and function values on the intervals.

Let $a_1,a_2,\cdots,a_N$ each denotes the length of an interval on which $x_{-j}(h)$ is a positive constant, where $a_i>0$, and $\sum_{i=1}^N a_i\leq 1$. Assume $c_1> c_2> \cdots > c_N> 0$ are the function values of $x_{-j}(h)$ on these intervals. Let $A_i=\sum_{j=1}^i a_j$, and $A_0=0$. We formally define $x_{-j}(h)=\begin{cases}c_i,&\forall h\in [A_{i-1},A_i),i=1,2,\cdots,N\\
0,&\forall h\in[A_N,1]\end{cases}$.

% Similarly a piecewise constant $v_j(q)$ can be represented similarly.

Given $t_j$, contest designer $j$ tries to find a best response $x_j(h)\in\mathcal F_{t_j}$ to maximize $R^{Best}(x_j(h)|x_{-j}(h),v_j(q))$.

We first observe that, without loss of generality, we may assume that there is $0=B_0\leq B_1\leq B_2\leq\cdots\leq B_N\leq 1$, such that \begin{equation}\label{eq:assumption-Bi-proofofNPhard}
x_j(h)=\begin{cases}c_i,&\forall h\in [B_{i-1},B_i),i=1,2,\cdots,N\\
0,&\forall h\in[B_N,1]\end{cases}
\end{equation}

Given any $\hat{x}_j(h)$, let $\bar{x}_j(h)=\max\{x\in\{c_1,c_2,\cdots,c_N,0\}:x\leq\hat{x}_j(h)\}$, and let $x_j(h)=\bar{x}_j(h+0)$, then $x_j(h)$ satisfies the assumption (\ref{eq:assumption-Bi-proofofNPhard}). We show that $R^{Best}(x_j(h)|x_{-j}(h),v_j(q))=R^{Best}(\hat{x}_j(h)|x_{-j}(h),v_j(q))$.

For any $i=1,2,\cdots,N$, we have $x_j^{-1}(c_i)=\bar{x}^{-1}(c_i)=\hat{x}^{-1}(c_i)$. Since for any $(H_j(q),H_{-j}(q))\in\mathcal H^*(x_j(q),x_{-j}(q))$, it holds that $\int_0^1 v_j(q)dq=\int_0^1 v_j(q)dH_j(q)+\int_0^1 v_j(q)dH_{-j}(q)$, we have $$R^{Best}(x_j(h)|x_{-j}(h),v_j(q))+R^{Worst}(x_{-j}(h)|x_j(h),v_j(q))=\int_0^1v_j(q)dq.$$ From \Cref{proposition:best-response-worst-and-best-equilibrium}, we know that the worst equilibrium for the opponent $-j$ is $H_{-j}^{Worst}(q)=\max\{x_{-j}^{-1}(Q^{-1}(q)+0),q-x_{j}^{-1}(Q^{-1}(q))\}$. By some calculation we have 
$$H_{-j}^{Worst}(q)=\begin{cases}
A_{i-1},&\forall q\in [x_j^{-1}(c_{i-1})+A_{i-1},x_j^{-1}(c_i)+A_{i-1}],\\
q-x_j^{-1}(c_i),&\forall q\in [x_j^{-1}(c_i)+A_{i-1},x_j^{-1}(c_i)+A_i]
\end{cases}$$ where we assume $c_0=+\infty$.
% Note that whenever $Q^{-1}(q)\in(c_{i},c_{i+1})$ for some $i$, it holds that $x_{-j}^{-1}(Q^{-1}(q)+0)=x_{-j}^{-1}(Q^{-1}(q))=A_i$, so $q-x_{j}^{-1}(Q^{-1}(q))\leq Q(Q^{-1}(q))-x_{j}^{-1}(Q^{-1}(q))=A_i$, then $H_{-j}^{Worst}(q)=\max\{x_{-j}^{-1}(Q^{-1}(q)+0),q-x_{j}^{-1}(Q^{-1}(q))\}=A_i$. It holds similarly that whenever $Q^{-1}(q)>c_1$, $H_{-j}^{Worst}(q)=0$. Therefore 
% So it can be seen that
% \begin{align*}R^{Worst}(x_{-j}(h);x_j(h);v_j(q))&=\int_0^1v_j(q)dH^{Worst}_{-j}(q)
% &=\sum_{i=1}^N\int_{x_j^{-1}(c_i)+A_{i-1}}^{x_j^{-1}(c_i)+A_i}v_j(q)dq\end{align*}
Observe that $H_{-j}^{Worst}(q)$ only depends on the value of $x_j^{-1}(c_i)$, where $i=1,2,\cdots$. Recall that $x_j^{-1}(c_i)=\hat{x}_j^{-1}(c_i)$, we have $$R^{Worst}(x_{-j}(h)|x_j(h),v_j(q))=R^{Worst}(x_{-j}(h)|\hat{x}_j(h),v_j(q)),$$and therefore $$R^{Best}(x_j(h)|x_{-j}(h),v_j(q))=R^{Best}(\hat{x}_j(h)|x_{-j}(h),v_j(q)).$$
Note that $\forall h\in[0,1], x_j(h)\leq\bar{x}_j(h)\leq \hat{x}_j(h)$, therefore $\int_0^1x_j(h)dh\leq\int_0^1\hat{x}_j(h)dh$, so it is not worse to replace $\hat{x}_j(h)$ with $x_j(h)$, which satisfies the assumption.

Let $B_{N+1}=1$ and $c_{N+1}=0$. Under assumption (\ref{eq:assumption-Bi-proofofNPhard}), for any $x\in[0,c_1]$, $x_{-j}^{-1}(x)=\max\{A_i:i\in\{1,2,\cdots,N\},c_i\geq x\}$, and $x_{j}^{-1}(x)=\max\{B_i:i\in\{1,2,\cdots,N\},c_i\geq x\}$.
For convenience define $v_j(q)=0$ for any $q>1$. Then \begin{align*}R^{Best}(x_j(h)|x_{-j}(h),v_j(q))&=\sum_{i=1}^{N+1}\int_{A_{i-1}+B_{i-1}}^{A_{i-1}+B_i}v_j(q)dq\\
&=\int_0^1v_j(q)dq-\sum_{i=1}^{N}\int_{B_i+A_{i-1}}^{B_i+A_{i}}v_j(q)dq.\end{align*}
Note that the budget constraint $\int_0^1x_j(h)dh\leq t_j$ can be written as $$t_j\geq\sum_{i=1}^N c_i(B_{i}-B_{i-1})=\sum_{i=1}^N (c_i-c_{i+1})B_{i}.$$ Let $d_i$ denote $c_i-c_{i+1}$.
Therefore, finding the optimal $x_j(h)$ is equivalent to finding the minimum solution of the following optimization problem:
\begin{align}
\min_{B_1,B_2,\cdots,B_N}&\sum_{i=1}^{N}\int_{B_i+A_{i-1}}^{B_i+A_{i}}v_j(q)dq,\label{eqn:NP-hard-optimize-problem} \nonumber \\
\text{s.t.} ~~~~~ &\sum_{i=1}^N d_iB_i\leq t_j \\
&0\leq B_1\leq B_2\leq\cdots\leq B_N\leq 1 \nonumber
\end{align}

Now we prove that the optimization problem (\ref{eqn:NP-hard-optimize-problem}) is NP-hard by reducing knapsack problem to it.

Consider the knapsack problem: Input positive integers $N',W,U$, representing the number of items, the capacity and the target total value. Then set $2N'$ positive integers $w_i,u_i$, where each pair represents the weight and value of the $i$-th item, respectively. The problem is to decide whether there exists $z_1,z_2,\cdots,z_{N'}$, such that $z_i\in\{0,1\}$ for all $i$, $\sum_{i=1}^{N'}z_iw_i\leq W$ and $\sum_{i=1}^{N'}z_iu_i\geq U$.

Given an instance of the knapsack problem, without loss of generality, we assume $1\leq w_i\leq W$ and $1\leq u_i\leq U$ holds for all $i$.
We construct an instance of problem (\ref{eqn:NP-hard-optimize-problem}).

We set $N=N'$. Let $a=\frac1{2WN(3N)^N}$, and set $a_1=a_2=\cdots=a_N=a$, consequently $A_i=ia$.

Set $t_j=2W$.

For $i=1,2,\cdots,N$, let $L_i=\frac{(3N)^i}{2N(3N)^N}$, set $d_i=\frac{W}{NL_i}$. Let $l_i=\frac{w_i}{d_i}$.

Note that $l_i\geq \frac{1}{d_i}\geq \frac{NL_i}{W}\geq \frac{3N}{2W(3N)^N}>a$, and for $i=1,2,\cdots,N-1$, $L_i+l_i=L_i+\frac{w_i}{W}NL_i\leq (N+1)L_i<3NL_i=L_{i+1}$.

Roughly speaking, we will construct a $v_j(q)$ such that, when $B_i$ move from $L_i-a$ to $L_i$, $\int_{B_i+A_{i-1}}^{B_i+A_{i}}v_j(q)dq$ greatly decreases, which forces $B_i$ to be at least $L_i$; and when $B_i$ moves from $L_i+l_i-a$ to $L_i+l_i$, $\int_{B_i+A_{i-1}}^{B_i+A_{i}}v_j(q)dq$ is reduced by an amount proportional to $u_i$, the value of the $i$-th item, indicating that the $i$-th item is selected. Note that the cost of moving $B_i$ from $L_i$ to $L_i+l_i$ is $l_id_i=w_i$, exactly equals to the weight of the $i$-th item.

Let $M=\max\{(3N)^NU,\sum_{i=1}^Nu_i\}+1$. Construct $v_j(q)$ as the following piecewise constant function:\\
$$v_j(q)=\begin{cases}
NM,&\forall q\in[0,L_1);\\
(N-i)M+u_i,&\forall q\in \left[L_i+(i-1)a,L_i+l_i+(i-1)a\right),1\leq i\leq N;\\
(N-i)M,&\forall q\in\left[L_i+l_i+(i-1)a,L_{i+1}+ia\right),1\leq i\leq N-1;\\
0,&\forall q\in[L_N+l_N+(N-1)a,1].
\end{cases}$$
% $\forall q\in[0,L_1)$, $v_j(q)=NM$;\\
% $\forall q\in[L_i+(i-1)a,L_i+l_i+(i-1)a)$, $v_j(q)=(N-i)M+u_i$, for $i=1,2,\cdots,N$;\\
% $\forall q\in[L_i+l_i+(i-1)a,L_{i+1}+ia)$, $v_j(q)=(N-i)M$, for $i=1,2,\cdots,N-1$;\\
% $\forall q\in[L_N+l_N+(N-1)a,1]$,$v_j(q)=0$.

Now we prove that, the instance of knapsack problem has a solution that $\sum_{i=1}^{N'}z_iu_i\geq U$, if and only if the optimal value of problem (\ref{eqn:NP-hard-optimize-problem}) satisfies $$\min_{B_1,B_2,\cdots,B_N}\sum_{i=1}^{N}\int_{B_i+A_{i-1}}^{B_i+A_{i}}v_j(q)dq\leq \left(\frac{N(N-1)}{2}M+\sum_{i=1}^Nu_i-U\right)a.$$

If there exists $z_1,z_2,\cdots,z_{N}$, such that $z_i\in\{0,1\}$ for all $i$, and that $\sum_{i=1}^{N}z_iw_i$ $\leq W$, and that $\sum_{i=1}^{N}z_iu_i\geq U$, let $B_i=L_i+z_il_i$.

For any $i=1,2,\cdots,N$, if $z_i=0$, then $B_i+A_{i-1}=L_i+(i-1)a$, and $B_i+A_{i}=L_i+ia<L_i+l_i+(i-1)a$, so $v_j(q)=(N-i)M+u_i$ for any $q\in[B_i+A_{i-1},B_i+A_{i}]$, then $\int_{B_i+A_{i-1}}^{B_i+A_{i}}v_j(q)dq=((N-i)M+u_i)a$. If $z_i=1$, then $B_i+A_{i-1}=L_i+l_i+(i-1)a$, and $B_i+A_{i}=L_i+l_i+ia<L_{i+1}$, so $v_j(q)=(N-i)M$ for any $q\in[B_i+A_{i-1},B_i+A_{i}]$, then $\int_{B_i+A_{i-1}}^{B_i+A_{i}}v_j(q)dq=((N-i)M)a$. Thus we have $\sum_{i=1}^{N}\int_{B_i+A_{i-1}}^{B_i+A_{i}}v_j(q)dq=(\frac{N(N-1)}{2}M+\sum_{i=1}^N(1-z_i)u_i)a\leq(\frac{N(N-1)}{2}M+\sum_{i=1}^Nu_i-U)a$.

Since $\sum_{i=1}^Nd_iB_i=\sum_{i=1}^N(d_iL_i+z_id_il_i)=\sum_{i=1}^N(\frac{W}{N}+z_iw_i)=W+\sum_{i=1}^Nz_iw_i\leq W+W=t_j$, and $B_i<L_{i+1}\leq B_{i+1}$, and $B_N\leq L_N+l_N<1$, $B_1,B_2,\cdots,B_N$ is a feasible solution of problem (\ref{eqn:NP-hard-optimize-problem}), therefore $\min_{B_1,B_2,\cdots,B_N}\sum_{i=1}^{N}\int_{B_i+A_{i-1}}^{B_i+A_{i}}v_j(q)dq\leq (\frac{N(N-1)}{2}M+\sum_{i=1}^Nu_i-U)a$.

If $\min_{B_1,B_2,\cdots,B_N}\sum_{i=1}^{N}\int_{B_i+A_{i-1}}^{B_i+A_{i}}v_j(q)dq\leq (\frac{N(N-1)}{2}M+\sum_{i=1}^Nu_i-U)a$, we prove that a solution  $z_1,z_2,\cdots,z_{N}$ of the knapsack problem exists.

By continuity of the object function $\sum_{i=1}^{N}\int_{B_i+A_{i-1}}^{B_i+A_{i}}v_j(q)dq$ and compactness of the feasible solution space of problem (\ref{eqn:NP-hard-optimize-problem}), there must exist a feasible minimum solution $B_1,B_2,\cdots,B_N$ such that $\sum_{i=1}^{N}\int_{B_i+A_{i-1}}^{B_i+A_{i}}v_j(q)dq\leq (\frac{N(N-1)}{2}M+\sum_{i=1}^Nu_i-U)a$.

Suppose (for a contradiction) that there is $i'\in\{1,2,\cdots,N\}$ such that $B_{i'}\leq L_{i'}-a$, then $B_{i'}+A_{i'}\leq L_{i'}+(i-1)a_{i'}$, so $\int_{B_i+A_{i-1}}^{B_i+A_{i}}v_j(q)dq\geq (N-i'+1)Ma$. Since for any $i\in\{1,2,\cdots,N\}$, $B_{i}\leq \frac{t_j}{d_i}=\frac{2WNL_i}{W}=NL_i<L_{i+1}-a$, we have $\sum_{i=1}^N\int_{B_i+A_{i-1}}^{B_i+A_{i}}v_j(q)dq\geq (M+\sum_{i=1}^N(N-i)M)a=(\frac{N(N-1)}{2}M+M)a>(\frac{N(N-1)}{2}M+\sum_{i=1}^Nu_i)a$, which contradicts with $\sum_{i=1}^{N}\int_{B_i+A_{i-1}}^{B_i+A_{i}}v_j(q)dq\leq (\frac{N(N-1)}{2}M+\sum_{i=1}^Nu_i-U)a$. Therefore, for any $i\in\{1,2,\cdots,N\}$, it must hold that $B_i>L_i-a$.

Next, suppose that there is $i\in\{1,2,\cdots,N\}$ such that $L_{i}-a<B_i<L_i$. $\int_{B_i+A_{i-1}}^{B_i+A_{i}}v_j(q)dq=(L_i-B_i)(N-i+1)M+(a-L_i+B_i)((N-i)M+u_i)=((N-i)M+u_i)a+(M-u_i)(L_i-B_i)$.
If $\sum_{i'=1}^Nd_{i'}B_{i'}<t_j$, there is sufficiently small $\epsilon>0$ such that $\sum_{i'=1}^Nd_{i'}B_{i'}+\epsilon\leq t_j$, and that $B_i<B_i+\frac{\epsilon}{d_i}\leq L_i$. Let $B_i^*=B_i+\frac{\epsilon}{d_i}$, then $\int_{B_i^*+A_{i-1}}^{B_i^*+A_{i}}v_j(q)dq=((N-i)M+u_i)a+(M-u_i)(L_i-B_i-\frac{\epsilon}{d_i})<\int_{B_i+A_{i-1}}^{B_i+A_{i}}v_j(q)dq$. Therefore by replacing $B_i$ with $B_i^*$ we get a strictly better solution, which is a contradiction. 

Now assume that $\sum_{i'=1}^Nd_{i'}B_{i'}=t_j$, since $\sum_{i'=1}^Nd_{i'}L_{i'}=W<t_j$, there must exist $i''$ such that $B_{i''}>L_{i''}$. Let $f_1(\epsilon)=\int_{B_{i''}+A_{{i''}-1}-\frac{\epsilon}{d_{i''}}}^{B_{i''}+A_{i''}-\frac{\epsilon}{d_{i''}}}v_j(q)dq$, $f_2(\epsilon)=\int_{B_i+A_{i-1}+\frac{\epsilon}{d_i}}^{B_i+A_{i}+\frac{\epsilon}{d_i}}v_j(q)dq$. Then we have the right-hand derivatives
\begin{align*}
{f_1}'_+(0) & =\frac{-v_j(B_{i''}+A_{{i''}}-0)+v_j(B_{i''}+A_{{i''}-1}-0)}{d_{i''}}\\
& \leq \frac{-(N-i'')M+((N-i'')M+u_{i''})}{d_{i''}}=\frac{u_{i''}}{d_{i''}}, \text{and} \\ {f_2}'_+(0)& =\frac{v_j(B_i+ia+0)-v_j(B_i+(i-1)a+0)}{d_{i}}\\
&=\frac{((N-i)M+u_i)-(N-i+1)M}{d_i}=-\frac{M-u_i}{d_i}. 
\end{align*}
Since $\frac{M-u_i}{d_i}=\frac{(M-u_i)W}{NL_i}$, $\frac{u_{i''}}{d_{i''}}=\frac{u_{i''}W}{NL_{i''}}$, we have $\frac{M-u_i}{d_i}\frac{d_{i''}}{u_{i''}}=\frac{(M-u_i)L_{i''}}{u_{i''}L_i}=\frac{M-u_i}{u_{i''}}(3N)^{i''-i}\geq \frac{(3N)^NU-u_i}{U}(3N)^{1-N}\geq 3N-1>1$, so $\frac{M-u_i}{d_i}>\frac{u_{i''}}{d_{i''}}$, i.e. ${f_1}'_+(0)+{f_2}'_+(0)>0$. Then there is sufficiently small $\epsilon>0$ such that $f_1(\epsilon)+f_2(\epsilon)<0$. That is, if we replace $B_{i}$ with $B_{i}+\frac{\epsilon}{d_i}$, and replace $B_{i''}$ with $B_{i''}-\frac{\epsilon}{d_{i''}}$, we get a strictly better solution, which contradicts.

Therefore, for all $i\in\{1,2,\cdots,N\}$, we have $B_i\in [L_i,L_{i+1}-a)$. Let $\hat{z}_i=\begin{cases}0,&\text{if }B_i<L_i+l_i-a;\\1,&\text{if }B_i\geq L_i+l_i-a\end{cases}$, then \begin{align*}
    \sum_{i=1}^N\hat{z}_iu_ia
\geq\sum_{i=1}^{N}((N-i)Ma+u_ia-\int_{B_i+A_{i-1}}^{B_i+A_{i}}v_j(q)dq)\\
\geq (\frac{N(N-1)}{2}M+\sum_{i=1}^Nu_i)a-\sum_{i=1}^{N}\int_{B_i+A_{i-1}}^{B_i+A_{i}}v_j(q)dq
\geq Ua.
\end{align*}
Note that for any $i\in\{1,2,\cdots,N\}$, $\hat{z}_iw_i\leq \hat{z}_il_id_i\leq (B_i-L_i+a)d_i$. Therefore, we have
\begin{align*}
\sum_{i=1}^N\hat{z}_i &\leq\sum_{i=1}^N(B_i-L_i+a)d_i\leq t_j-W+\sum_{i=1}^Nad_i \\ 
& \leq W +\sum_{i=1}^N\frac{1}{2WN(3N)^N}\frac{2N(3N)^NW}{N(3N)}\leq W+\frac1{3N}<W+1. 
\end{align*}Since $\hat{z}_i\in\{0,1\}$, we get that $\sum_{i=1}^N\hat{z}_i\leq W$, and $\sum_{i=1}^N\hat{z}_iu_i\geq U$, which satisfies the knapsack problem.

One can easily see that the reduction can be done in polynomial time.
\qed
\end{proof}

\newpage
\subsection{The Algorithm in FPTAS}
\begin{algorithm}[htbp!]
\caption{FPTAS for the best-response problem}\label{alg:FPTAS}
\begin{algorithmic}[1]
\State \textbf{Input} $\epsilon>0$
\Procedure{DP}{$K^*$}
\State $\lambda_v\gets\frac{\epsilon'K^*}p$
; $\rho_v\gets\lceil \frac{2K^*}{\lambda v}\rceil$
\State $\lambda_t\gets\frac{\epsilon't_j}{p}$
; $\rho_t\gets\lfloor\frac{(1+\epsilon')^3t_j}{\lambda_t}\rfloor$.
\State \textbf{Initialize} $b[0\dots p,0\dots \rho_v,0\dots\rho_t]$ as $b[i,r,s]=+\infty\forall i,r,s$
\State $b[0,0,0]\gets 0$

\For{$i\gets 1,\cdots,p$}
\For{$r\gets 0,\cdots,\rho_v$}
\For{$s\gets 0,\cdots,\rho_t$}
    \State $B^*\gets b[i-1,r,s]$
    \State $A^*\gets x_{-j}^{-1}(c_i+0)$
    \If{$B^*\leq 1$}
        \For{$r'\gets r,\cdots,\rho_v$}
            \If{$(r'-r)\lambda_v>K-V(B^*+A^*)$}
                \State\textbf{break}
            \EndIf
            \State $\delta\gets V^{-1}(V(B^*+A^*)+(r'-r)\lambda_v)-B^*-A^*$
            \State $s'\gets s+\lceil\frac{\delta c_i}{\lambda_t}\rceil$
            \If{$s'\leq \rho_t\land b[i,r',s']>B^*+\delta$}
                \State $b[i,r',s']\gets B^*+\delta$
                \State $\mathrm{hist}[i,r',s']\gets (r,s)$\Comment{auxiliary infomation}
            \EndIf
        \EndFor
    \EndIf
\EndFor
\EndFor
\EndFor

\For{$r\gets \rho_v,\rho_v-1,\cdots,0$}
\Comment{Recover the best solution found}
    \State $flag\gets \mathrm{false}$
    \For{$s\gets \rho_t,\rho_t-1,\cdots,0$}
        \If{$b[p,r,s]\leq 1$}
            \State $flag\gets \mathrm{true}$
            \State $s^*\gets s$
            \State \textbf{break}
        \EndIf
    \EndFor
    \If{$flag==\mathrm{true}$}
        \State $r^*\gets r$
%\Comment{$r^*=\max\{r\in\{0,\cdots,\rho_v\}:\exists s\in\{0,\cdots,\rho_t\}, b[p,r,s]<+\infty\}$}
        \State\textbf{break}
    \EndIf
\EndFor

\algstore{bkbreak}
\end{algorithmic}
\end{algorithm}
%new page
\begin{algorithm}
\begin{algorithmic}[1]
\algrestore{bkbreak}

\State $R^*\gets \lambda_v*r^*$
\For{$i\gets p,p-2,\cdots,1$}
    \State $B_i\gets b[i,r^*,s^*]$
    \State $r^*,s^*\gets \mathrm{hist}[i,r^*,s^*]$
\EndFor
\State $\vec{B}^*\gets(B_1,B_2,\cdots,B_p)$
\State \textbf{return} $\vec{B}^*,R^*$
\EndProcedure

\Comment{Main body, try possible $K^*$.}
\State $\vec{B}^*,R\gets \mathrm{NULL},0$
\State $\mu\gets\frac{t_j}{t_j+t_{-j}}$
\While{$\mu\leq1$}
    \State $\vec{B}',R'\gets$\Call{DP}{$\mu*K$}
    \If{$R'>R$}
    \State $\vec{B}^*,R\gets\vec{B}',R'$
    \EndIf
    \State $\mu\gets\mu*2$
\EndWhile

\For{$i\gets1,\cdots,p$}
\State $B_i\gets(1-\epsilon')^3B_i^*$\Comment{Suppose $\vec{B}^*=(B_1^*,B_2^*,\cdots,B_p^*)$}
\EndFor
\State $\vec{B}\gets(B_1,B_2,\cdots,B_p)$
\State \textbf{output} $\vec{B}$
\end{algorithmic}
\end{algorithm}

\newpage
\subsection{Proof of \Cref{theorem:FPTAS}}
\begin{proof}
Given $v_j(q)$ and $x_{-j}(h)$ and $t_j$, let $M=v_j(0)$, $K=\int_0^1v_j(q)dq$, $L=x_{-j}(0)$, $t_{-j}=\int_0^1x_{-j}(h)dh$.

We first describe a dynamic programming (DP) procedure, which try to find an approximately best result with the hint of an input parameter $K^*$.

Given any $\epsilon>0$, let $\epsilon'=\frac14\epsilon$. We use dynamic programming to find a good interim allocation function in a restricted solution space, denoted by $S$.

We use $S$ to denote the solution space, which consists of all interim allocation functions $x_j(h)$ satisfies the following restrictions:
\begin{enumerate}
    \item $x_j(h)\in\mathcal F_{(1+\epsilon')^3t_j}$.
    \item Let $\lambda_x=\epsilon't_j$, and $p=\lfloor\log_{(1+\epsilon')}(\frac{L}{\lambda_x})\rfloor+1$. 
For $i=1,2,\cdots,p$, let $c_i=L(1+\epsilon')^{-i+1}$.\\
For any $h\in[0,1]$, we restrict that $x_j(h)\in\{c_1,c_2,\cdots,c_p,0\}$.\\
In other words, let $B_i$ denote $x_j^{-1}(c_i)$ for $i=1,2,\cdots,p$, and assume $B_0=0$, $B_{p+1}=1$, we have $\forall i=1,2,\cdots,p$, $\forall h\in(B_{i-1},B_i)$, $x_j(h)=c_i$. We call $(B_{i-1},B_i)$ the $i$-th segment.
    \item For convenience we define $v_j(q)=0$ for any $q>1$, then we have\\ $$R^{Best}(x_j(h)|x_{-j}(h),v_j(q))=\sum_{i=1}^p\int_{x_{-j}^{-1}(c_i+0)+B_{i-1}}^{x_{-j}^{-1}(c_i+0)+B_i}v_j(q)dq.$$
    We restrict that for any $i=1,2,\cdots,p$, $\frac1{\lambda_v}\int_{x_{-j}^{-1}(c_i+0)+B_{i-1}}^{x_{-j}^{-1}(c_i+0)+B_i}v_j(q)dq$ is a non-negative integer, where $\lambda_v:=\frac{\epsilon'K^*}p$.
    \item We call $(B_{i}-B_{i-1})c_i$ the budget requirement of the $i$-th segment. Let $\lambda_t:=\frac{\epsilon't_j}{p}$, we restrict that \begin{equation}\label{eq:budget-round-up-FPTAS}
\sum_{i=1}^p{\lambda_t}\left\lceil\frac{(B_{i}-B_{i-1})c_i}{\lambda_t}\right\rceil\leq (1+\epsilon')^3t_j.
    \end{equation}Intuitively, if we round up the budget requirement of every segment to the closest multiple of $\lambda_t$, the total budget will not exceed $(1+\epsilon')^3t_j$.
\end{enumerate}

Intuitively, $\lambda_v$ is the minimum unit of the utility obtained by each segment, $\lambda_t$ is the minimum unit of the budget requirement (after rounding up) of each segment.

We define $\rho_v=\lceil \frac{2K^*}{\lambda_v}\rceil$, $\rho_t=\lceil\frac{(1+\epsilon')^3t_j}{\lambda_t}\rceil$. For any $x_j(h)$ in $S$, if $R^{Best}(x_j(h))\leq 2K^*$, then there is some integer $r\in\{0,1,\cdots,\rho_v\}$ such that $R^{Best}(x_j(h))=r\lambda_v$. And by constraint (\ref{eq:budget-round-up-FPTAS}), for any $x_j(h)$ in $S$, there is some integer $s\in\{0,1,\cdots,\rho_t\}$ such that $\sum_{i=1}^p\lceil\frac{(B_{i}-B_{i-1})c_i}{\lambda_t}\rceil=s$.

For $i=0,\cdots,p$, $r=0,\cdots,\rho_v$, $s=0,\cdots,\rho_t$, for the state $(i,r,s)$, we define the subproblem $b(i,r,s)$ as \begin{align*}
    b(i,r,s)=&\min_{0\leq B_1\leq B_2\leq \cdots \leq B_i\leq 1}B_i,\\
    \text{s.t.}&\sum_{i'=1}^i\int_{x_{-j}^{-1}(c_{i'}+0)+B_{{i'}-1}}^{x_{-j}^{-1}(c_{i'}+0)+B_{i'}}v_j(q)dq=r\lambda_v,\\
    &\sum_{i'=1}^i\lceil\frac{(B_{i'}-B_{i'-1})c_{i'}}{\lambda_t}\rceil=s.
\end{align*}
When there is no feasible solution, we define $b(i,r,s)=+\infty$. Intuitively, $b(i,r,s)$ denotes the minimum $B_i$, such that we have found $B_1,\cdots,B_i$, representing the first $i$ segments of some $x_j(h)$ in $S$, such that the total utility obtained by these $i$ segments is $r\lambda_v$, and that the total budget requirement (after rounding up for each segment) of these $i$ segments is $s\lambda_t$. The way to calculate the table of all $b(i,r,s)$ is shown in \Cref{alg:FPTAS}. After calculating the table, to find the $x_j(h)$ in $S$ with the maximum $R^{Best}(x_j(h))$, we only need to find the biggest $r\in\{0,\cdots,\rho_v\}$ such that $\exists s\in\{0,\cdots,\rho_t\}$, $b(p,r,s)\leq 1$. Denote this biggest $r$ by $r^*$, we can reconstruct the $B_1,\cdots,B_p$ corresponding with $x_j(h)$ such that $R^{Best}(x_j(h))=r^*\lambda_v$.

The DP procedure finally output the best $B_1,\cdots,B_p$ found, together with its utility $r^*\lambda_v$.

The main body of \Cref{alg:FPTAS} calls the DP procedure for $O(\log_2\frac{t_j+t_{-j}}{t_j})$ times. The parameter $K^*$ starts from $\frac{t_j}{t_j+t_{-j}}K$, and is doubled each time, until it exceeds $K$.
Intuitively, since the utility of the best response $x_j(h)\in\mathcal F_{t_j}$, denoted by $R^*$, satifies that $\frac{t_j}{t_j+t_{-j}}K\leq R^*\leq K$, there is a $K^*$ with which the DP procedure is called for one time, such that $K^*\leq R^*< 2K^*$. We will explain this later in the analysis of correctness.

Among the results $((B_1,\cdots,B_p),r^*\lambda_v)$ returned by the DP procedure, the one which has the highest utility is recorded. Since the recorded $(B_1,\cdots,B_p)$ corresponds to a $x_j(h)$ in $S\subset \mathcal F_{(1+\epsilon')^3t_j}$, we use a horizontal stretch method to fit in the budget constraint $t_j$. After scaling each $B_i$ to $(1-\epsilon')^{3}B_i$, we get a $x_j(h)\in\mathcal F_{t_j}$, which is the final output of our algorithm.

Now we prove that \Cref{alg:FPTAS} is a FPTAS.

The total running time of \Cref{alg:FPTAS} is mainly composed of $\lfloor\log_2(1/\frac{t_j}{t_j+t_{-j}})\rfloor+1$ times calling the DP procedure. The total number of states in the dynamic programming is $O(p\rho_v\rho_t)$, and each call to the DP procedure runs in $O(p\rho_v^2\rho_t)$ time. Since $\lfloor\log_2(1/\frac{t_j}{t_j+t_{-j}})\rfloor+1=O(\log(1+\frac{t_{-j}}{t_j}))$, $p=O(\frac1{\epsilon'}\ln(\frac{L}{t_j\epsilon'}))$, $\rho_v=O(\frac{p}{\epsilon'})$, $\rho_t=O(\frac{p}{\epsilon'})$, we obtain that the time complexity of \Cref{alg:FPTAS} is $O(\log(1+\frac{t_{-j}}{t_j})\epsilon^{-7}\ln^4(\frac{L}{t_j\epsilon}))$, which is polynomial in the input size and $\frac1{\epsilon}$. The space complexity is $O(p\rho_v\rho_t)$, which is also polynomial in the input size and $\frac1{\epsilon}$.

Now we prove the $(1-\epsilon)$ approximation ratio. 

% The DP procedure finds a $\vec{B'}=(B'_1,\cdots,B'_p)$, which corresponds to a strategy $\hat{x}(h)$, such that
% \begin{enumerate}
%     \item Let $B'_0=0$, for any $i=1,\cdots,p$, for any $h\in [B'_{i-1},B'_i)$, $x_j(h)=c_i$. And $\forall h\in [B'_p,1], \hat{x}_j(h)=0$.
%     \item $\int_0^1\hat{x}_j(h)dh=\sum_{i=1}^p(B'_i-B'_{i-1})c_i\leq (1+\epsilon')^3t_j$.
%     \item for any $i=1,\cdots,p$, $\int_{\hat{x}_j^{-1}(c_i+0)+x_{-j}^{-1}(c_i+0)}^{\hat{x}_j^{-1}(c_i)+x_{-j}^{-1}(c_i+0)}v_j(q)dq$ is a nonnegative integer multiple of $\lambda_v$, i.e. the contribution of each segment in $R^{Best}(\hat{x}_j(h))$ is divisible by $\lambda_v$.
%     \item $R^{Best}(\hat{x}_j(h))\leq \rho_v\lambda_v$.
% \end{enumerate}

% In the DP procedure, if $b[i,r,s]<+\infty$, then $b[i,r,s]\leq 1$, and it represents the minimum $B_i$ such that there is $B_1,B_2,\cdots,B_i$ which already takes at most $s\lambda_t$ budget and gets a contribution of precisely $r\lambda_v$. Note that the budget used by each segment is rounded up to a integer multiple of $\lambda_t$, however the total overestimation of the budget is no more than $p\lambda_t=\epsilon't_j$.

Suppose the best response solution is $x^*_j(h)\in\mathcal F_{t_j}$, and let $R^*$ denote $R^{Best}(x^*_j(h))$\footnote{For convenience we assume that $\sup_{x_j(h)\in\mathcal F_{t_j}}R^{Best}(x_j(h))$ is attained by $x_j^*(h)$. If the supremum is not attained by any $x_j(h)\in\mathcal F_{t_j}$, since there always exists $x_j^*(h)$ such that $R^{Best}(x^*_j(h))$ is arbitrarily close to the supremum, the proof is almost the same.}. By \Cref{proposition:best-response-trivial-lowerbound}, we know that $R^*\geq \frac{t_j}{t_j+t_{-j}}K$. Therefore in the main body of \Cref{alg:FPTAS}, there is at least one call to the DP procedure with an argument $K^*\in[\frac12R^*,R^*]$. We show that, in this case, the DP procedure finds a $\hat{x}_j(h)$ in $S$ such that $R^{Best}(\hat{x}_j(h))\geq (1-\epsilon')R^*$.

First, let $\tilde{x}_j(h)=\max(x^*_j(h),\lambda_x)$, then $\int_0^1\tilde{x}_j(h)dh\leq \int_0^1x^*_j(h)dh+\lambda_x\leq (1+\epsilon')t_j$. Next, construct $\bar{x}_j(h)=\min\{c_i:i\in\{1,2,\cdots,p\},c_i\geq \tilde{x}_j(h)\}$. One can see that $\forall h\in[0,1],\bar{x}_j(h)\leq (1+\epsilon')\tilde{x}_j(h)$, so $\int_0^1\bar{x}_j(h)dh\leq (1+\epsilon')\int_0^1\tilde{x}_j(h)dh\leq (1+\epsilon')^2t_j$. Since $\bar{x}_j(h)$ $1$-dominates $x^*_j(h)$, by \Cref{proposition:best-response-trivial-lowerbound}, $R^{Best}(\bar{x}_j(h))\geq R^*$. 

Let $\bar{B}_i=\bar{x}_j^{-1}(c_i)$, for $i=1,\cdots,p$, and let $\bar{B}_0=0$, we have $$\sum_{i=1}^p\int_{\bar{B}_{i-1}+x_{-j}^{-1}(c_i+0)}^{\bar{B}_{i}+x_{-j}^{-1}(c_i+0)}v(q)dq=R^{Best}(\bar{x}_j(h))\geq R^*$$. 
For $i=1,2,\cdots,p$, we reduce the length of the $i$-th segment to decrease the utility obtained by $[B_{i-1},B_i)$ to the closest multiple of $\lambda_v$. Formally, let $\hat{B}_0=0$, and for all $i=1,\cdots,p$, find $\hat{B}_{i}$ such that $$\int_{\hat{B}_{i-1}+x_{-j}^{-1}(c_i+0)}^{\hat{B}_{i}+x_{-j}^{-1}(c_i+0)}v(q)dq=\lambda_v\lfloor\frac1{\lambda_v}\int_{\bar{B}_{i-1}+x_{-j}^{-1}(c_i+0)}^{\bar{B}_{i}+x_{-j}^{-1}(c_i+0)}v(q)dq\rfloor.$$
It can be seen easily that $\hat{B}_i\leq \bar{B}_i$ holds for all $i=0,\cdots,p$.

Let $\hat{x}_j(h)$ be the interim allocation function corresponding with $\hat{B}_1,\cdots,\hat{B}_p$. Since $$\int_{\hat{B}_{i-1}+x_{-j}^{-1}(c_i+0)}^{\hat{B}_i+x_{-j}^{-1}(c_i+0)}v(q)dq>\int_{\bar{B}_{i-1}+x_{-j}^{-1}(c_i+0)}^{\bar{B}_{i}+x_{-j}^{-1}(c_i+0)}v(q)dq-\lambda_v,$$ we have $R^{Best}(\hat{x}_j(h))=\sum_{i=1}^p\int_{\hat{B}_{i-1}+x_{-j}^{-1}(c_i+0)}^{\hat{B}_{i}+x_{-j}^{-1}(c_i+0)}v(q)dq\geq R^*-p\lambda_v$. Recall that $\lambda_v=\frac{\epsilon'K^*}p$ and $K^*\leq R^*$, we have $R^{Best}(\hat{x}_j(h))\geq R^*-\epsilon'K^*\geq (1-\epsilon')R^*$.

Since $\hat{B}_i\leq\bar{B}_i$, we can see that $\int_0^1\hat{x}_j(h)dh\leq\int_0^1\bar{x}_j(h)dh\leq (1+\epsilon')^2t_j$. Considering the overestimation of the total budget requirement of $\hat{x}_j(h)$,   $\sum_{i=1}^p\lceil\frac{(\hat{B}_{i}-\hat{B}_{i-1})c_i}{\lambda_t}\rceil\lambda_t\leq\sum_{i=1}^p((\hat{B}_{i}-\hat{B}_{i-1})c_i+\lambda_t)\leq\int_0^1\hat{x}_j(h)dh+p\lambda_t\leq (1+\epsilon')^2t_j+\epsilon't_j\leq (1+\epsilon')^3t_j$.
Thus $\hat{x}_j(h)$ is in $S$, and consequently the utility of the best solution found by the DP procedure is at least $R^{Best}(\hat{x}_j(h))$.

By selecting the solution with highest utility among the solutions returned by all the calls to the DP procedure, we get a solution in $S$, still denoted by $\hat{x}_j(h)$, such that $R^{Best}(\hat{x}_j(h))\geq (1-\epsilon')R^*$. Let $\hat{B}_i$ denote $\hat{x}_j^{-1}(c_i)$, let $B_i=(1-\epsilon')^3\hat{B}_i$ be the final output of \Cref{alg:FPTAS}, corresponding with a $x_j(h)$, then $\int_0^1x_j(h)dh\leq (1-\epsilon')^3\int_0^1\hat{x}_j(h)dh\leq t_j$, so $x_j(h)\in\mathcal F_{t_j}$. On the other hand, as $x_j(h)$ $(1-\epsilon')^3$-dominates $\hat{x}_j(h)$, by \Cref{prop:relationship between w and b under C}, we have $R^{Best}(x_j(h))\geq (1-\epsilon')^3R^{Best}(\hat{x}_j(h))\geq(1-\epsilon')^4R^*> (1-4\epsilon')R^*=(1-\epsilon)R^*$.
\qed
\end{proof}
\section{Missing Proofs in Section \ref{section:safetylevel}}
\label{app:sec6}

\subsection{Proof of \Cref{proposition:safetylevel-trivial-upper-bound}}
\begin{proof}
If we regard all other designers as one designer, by \Cref{proposition:best-response-trivial-lowerbound}, for any $x_j(h)$, there exists a $x_{-j}(h)\in\mathcal F_{t_{-j}}$ such that $R^{Best}(x_{-j}(h)|x_j(h),v_j(q))\geq\frac{t_{-j}}{t_j+t_{-j}}\int_0^1v_{j}(q)dq$. It means that there exists a BNE of contestants, $(H_j(q),H_{-j}(q))$ $\in\mathcal H^*(x_j(h),x_{-j}(h))$, such that $\int_0^1v_j(q)dH_{-j}(q)\geq\frac{t_{-j}}{t_j+t_{-j}}\int_0^1v_j(q)dq$. Recall that $\int_0^1v_j(q)dH_{j}(q)+\int_0^1v_j(q)dH_{-j}(q)=\int_0^1v_j(q)dq$, so it follows that $\int_0^1v_j(q)dH_{j}(q)$ $\leq (1- \frac{t_{-j}}{t_j+t_{-j}})\int_0^1v_j(q)dq=\frac{t_j}{t_j+t_{-j}}\int_0^1v_j(q)dq$.
%So $\int_0^1 v_j(q)dH_j(q)=\int_0^1 v_j(q)dq-\int_0^1 v_j(q)dH_{-j}(q)\leq \frac{t_j}{t_j+t_{-j}}\int_0^1v_j(q)dq$. 
Therefore, the safety level of any $x_j(h)$ is at most $\frac{t_j}{t_j+t_{-j}}\int_0^1v_j(q)dq$, i.e., $\mathrm{SL}_{t_{-j},v_j(q)}(x_j(h))\leq \frac{t_j}{t_j+t_{-j}}\int_0^1v_j(q)dq$.
\qed
\end{proof}
\subsection{Proof of \Cref{lemma:example-simple-threshold-allocation}}

\begin{proof}
We consider an interim allocation function of the opponent: 
\begin{align*}
    x_{-j}(h)=\begin{cases}
    \frac{t_j}{r},&\text{if }h\leq \frac{t_{-j}}{t_j}r, \\
    0,&\text{if }h>\frac{t_{-j}}{t_j}r.
    \end{cases}
\end{align*}
Note that, at this time, we have 
\begin{align*}
    Q(x) = x_j^{-1}(x)+x_{-j}^{-1}(x) = \begin{cases}
    \frac{t_j+t_{-j}}{t_j}r,&\text{if }x\leq\frac{t_j}{r},\\
    0,&\text{if }x>\frac{t_j}{r}.
    \end{cases}
\end{align*}
% $Q(x)=x_j^{-1}(x)+x_{-j}^{-1}(x)=\begin{cases}\frac{t_j+t_{-j}}{t_j}r,&\text{if }x\leq\frac{t_j}{r}\\
% 0,&\text{if }x>\frac{t_j}{r}\end{cases}$.
Thus, we can obtain
\begin{align*}
    H_j^{Worst}(q)=\begin{cases}
        0,&\text{if }q<\frac{t_{-j}}{t_j}r;\\
        q-\frac{t_{-j}}{t_j}r,&\text{if }q\in[\frac{t_{-j}}{t_j}r,\frac{t_j+t_{-j}}{t_j}r];\\
        r,&\text{if }q>\frac{t_{-j}+t_j}{t_j}r.
    \end{cases}
\end{align*}
Therefore, the upper bound on the safety level of simple threshold allocation function can be shown as:
\begin{align*}
    &\mathrm{SL}_{t_{-j},v_j(q)}(x_j(h)) \leq \int_0^1v_j(q)dH_j^{Worst}(q)=\int_{\frac{t_{-j}}{t_j}r}^{\frac{t_j+t_{-j}}{t_j}r}v_j(q)dq\\
    \leq &rv_j(\frac{t_{-j}}{t_j}r)=\frac{t_j}{t_{-j}}\frac{t_{-j}r}{t_j}v_j(\frac{t_{-j}r}{t_j})\leq \frac{t_j}{t_{-j}}\max_{q\in[0,1]}qv_j(q)
\end{align*}
\qed
\end{proof}
\subsection{Extension of \Cref{example:simple-threshold-safetylevel}}
\label{example:extend-simple-threshold}
Let $\mathcal F_{t}^{Simple}$ denote the class of all simple threshold allocation functions under budget constraint $t$, i.e., $\mathcal F_{t}^{Simple}=\left\{x(h)=\begin{cases}\frac{t'}{r},&\text{if }h\in[0,r],\\0,&\text{if }h\in(r,1],\end{cases}:r\in(0,1],t'\in(0,t]\right\}$. Intuitively, every $x_j(h)\in F_{t}^{Simple}$ is a staircase function, which is a positive constant on the interval $[0,r]$. \Cref{example:simple-threshold-safetylevel} shows that, when a designer $j$'s strategy is restricted in $F_{t_j}^{Simple}$, her safety level can be arbitrarily small. We can extend this result to a broader class of interim allocation functions.

Fix any constant $D$, let $\mathcal F_{t}^{D-Simple}$ denote the class of interim allocation functions $x_j(h)$, such that there exists $0=A_0\leq A_1\leq A_2\leq \cdots\leq A_D\leq 1$ and $c_1\geq c_2\geq \cdots\geq c_D\geq0$, $\sum_{i=1}^D(A_{i}-A_{i-1})c_i\leq t$, and that $x_j(h)=\begin{cases}c_1,&\text{if }h\in[0,A_1],\\
c_i,&\text{if } h\in(A_{i-1},A_i],i\geq 2,\\
0, &\text{if } h\in(A_D,1]\end{cases}$. In other words, $\mathcal F_{t}^{D-Simple}$ consists of the staircase functions with (at most) $D$ positive stairs, under budget constraint $t$. Moreover, we define $\mathcal F_{t}^{C,D-dom}=\{x(h)\in\mathcal F_{t}:\exists \hat{x}(h)\in\mathcal F_{Ct}^{D-Simple},\forall h\in[0,1],x(h)\leq \hat{x}(h)\}$, be the class of all functions in $\mathcal F_{t}$ that is $1$-dominated by some function in $\mathcal F_{Ct}^{D-Simple}$, where $C>0$ is a constant.

Note that this is a much wider function class. For example, given any $D>0$ and $t>0$, for any $x(h)\in\mathcal F_{t}$ such that $x(0)\leq 2^{D-1}t$, we can construct $\hat{x}(h)$ by rounding up the value of $x(h)$ to the closest $2^it$, formally, $\hat{x}(h)=\min\{2^it:2^it\geq x(h),i=0,1,\cdots,D-1\}$. Note that $\hat{x}(h)\leq\max{ 2x(h),t}\leq 2x(h)+t$, so $\int_0^1\hat{x}(h)dh\leq 2\int_0^1x(h)dh+t\leq 3t$. Then it follows that $\hat{x}(h)\in\mathcal F_{3t}^{D-Simple}$, therefore $x(h)$ is in $\mathcal F_{t_j}^{3,D-dom}$.

We have the following proposition:
\begin{proposition}
Given $t_j,t_{-j}$, fix any constants $C$ and $D$, then the class $\mathcal F_{t_j}^{C,D-dom}$ does not guarantee a constant-competitive safety level, formally, for any $\epsilon>0$, there exists some value function $v_j(q)$, such that
$$\max_{x_j(h)\in \mathcal F_{t_j}^{C,D-dom}}\mathrm{SL}_{t_{-j},v_j(q)}(x_j(h))<\epsilon\frac{t_j}{t_j+t_{-j}}\int_0^1v(q)dq$$
\end{proposition}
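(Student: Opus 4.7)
The plan is to keep the same value function $v_j$ from \Cref{example:simple-threshold-safetylevel} (so $v_j(q) = M$ on $[0,1/M]$ and $v_j(q) = 1/q$ on $(1/M,1]$, giving $\int_0^1 v_j(q)dq = 1+\ln M$ and $\max_q q v_j(q) = 1$), and to design a more elaborate opponent strategy that simultaneously defeats every stair of the $C$-staircase majorant. Since the benchmark grows like $\ln M$ while I will bound the safety level by a quantity independent of $M$, taking $M$ large forces the competitive ratio below any prescribed $\epsilon$.

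Fix an arbitrary $x_j \in \mathcal F_{t_j}^{C,D-dom}$ and pick a majorant $\hat{x} \in \mathcal F_{Ct_j}^{D-Simple}$ with $x_j \leq \hat{x}$ pointwise. After merging adjacent stairs of equal height, write its positive stairs as heights $c_1 > c_2 > \cdots > c_{D'} > 0$ with $D' \leq D$ on widths $w_1, \ldots, w_{D'}$ satisfying $\sum_i c_i w_i \leq Ct_j$. Set $\alpha = \min\{1, t_{-j}/(Ct_j)\}$ and pick $\epsilon > 0$ small enough that $\epsilon < c_{i-1}-c_i$ for all $i \geq 2$ and $\alpha\sum_i c_i w_i + \epsilon\alpha D' \leq t_{-j}$. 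The opponent plays $x_{-j}^*$, the staircase with heights $c_i + \epsilon$ on widths $w_i' = \alpha w_i$ (followed by $0$). One checks $x_{-j}^* \in \mathcal F_{t_{-j}}$ from $\sum_i(c_i + \epsilon)w_i' \leq t_{-j}$ and $\sum_i w_i' \leq \alpha \leq 1$.

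Applying \Cref{proposition:best-response-worst-and-best-equilibrium} to $(\hat{x}, x_{-j}^*)$, the $2D'$ positive prize levels interleave strictly (opponent $c_1+\epsilon$, designer $c_1$, opponent $c_2+\epsilon$, designer $c_2$, \dots), and the worst-case equilibrium assigns the designer's $i$-th stair exactly the contestants whose quantile lies in an interval of length $w_i$ starting at some $S_i \geq w_i'$. Hence the value captured from the $i$-th stair is at most $w_i v_j(w_i') = (w_i/w_i')\cdot w_i' v_j(w_i') \leq \alpha^{-1}\max_q q v_j(q) = \alpha^{-1}$, and summing gives $R^{Worst}(\hat{x}|x_{-j}^*,v_j) \leq D'/\alpha \leq D/\alpha$. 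Because $\hat{x}$ pointwise dominates $x_j$, \Cref{prop:relationship between w and b under C} (with $C=1$) yields $R^{Worst}(x_j|x_{-j}^*,v_j) \leq R^{Worst}(\hat{x}|x_{-j}^*,v_j) \leq D/\alpha$, so $\mathrm{SL}_{t_{-j},v_j}(x_j) \leq D/\alpha$ uniformly over the class. The competitive ratio is therefore at most $D(t_j+t_{-j})/[\alpha t_j(1+\ln M)]$, which drops below $\epsilon$ once $M$ is taken large enough.

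The main obstacle is the equilibrium computation in the previous paragraph: one must identify $H_j^{Worst}$ on each prize-level segment via \Cref{proposition:best-response-worst-and-best-equilibrium}, correctly accounting for the jump discontinuities of $x_j^{-1}$ and $x_{-j}^{*,-1}$ (they are cleanly separated by $\epsilon$ here, so in the worst case the opponent takes the leftmost $w_i'$ chunk of each level and the designer takes the next $w_i$ chunk). Boundary cases---$\alpha = 1$, some $c_i = 0$, or $x_j$ having fewer positive stairs than $\hat{x}$---all collapse to the same bound once zero-height stairs are discarded and/or equal heights merged, so they do not affect the asymptotics.
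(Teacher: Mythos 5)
Your approach is essentially the paper's (the appendix proof following \Cref{example:extend-simple-threshold}): keep the $v_j$ of \Cref{example:simple-threshold-safetylevel}, pass to a $D$-stair majorant $\hat{x}$ of $x_j$, have the opponent play a horizontally compressed copy of $\hat{x}$, show via \Cref{proposition:best-response-worst-and-best-equilibrium} that each of the $\le D$ stairs of the worst-case equilibrium captures at most $\frac1\alpha\max_q q v_j(q)$ value, and let $M\to\infty$ to kill the ratio against the benchmark, which grows like $\ln M$. The one real difference is tie-handling: the paper keeps equal prize heights on both sides and reads $H_j^{Worst}$ off the closed-form worst-equilibrium formula, whereas you raise the opponent's heights by $\epsilon$ to break ties and argue about strict interleaving. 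That perturbation introduces a feasibility gap: whenever $\sum_i c_iw_i=Ct_j$ exactly and $\alpha=\min\{1,t_{-j}/(Ct_j)\}$ saturates, one has $\alpha\sum_i c_iw_i=t_{-j}$ with no slack, so no $\epsilon>0$ satisfies your constraint $\alpha\sum_i c_iw_i+\epsilon\alpha D'\le t_{-j}$. It is repairable in a standard way (also shrink widths to $(\alpha-\delta)w_i$ and send $\delta\to0$, using that the safety level is an infimum over $x_{-j}\in\mathcal F_{t_{-j}}$), but as stated the constructed $x_{-j}^*$ can violate its budget. The final bound $D/\alpha$ matches the paper's $CDt_j/t_{-j}$ when $Ct_j\ge t_{-j}$ and is slightly looser otherwise, which is immaterial to the conclusion.
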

\begin{proof}
Similar with \Cref{example:simple-threshold-safetylevel}, let $v_j(q)=\begin{cases}M,&\text{if }q\in[0,\frac1M],\\\frac1q,&\text{if }q\in(\frac1M,1]\end{cases}$, where $M>1$ is a large number to be determined. 

For any $x_j(h)\in \mathcal F_{t_j}^{C,D-dom}$, there is some $\hat{x}_j(h)\in F_{Ct_j}^{D-Simple}$ such that $\hat{x}_j(h)$ $1$-dominates $x_j(h)$.

Assume $\hat{x}_j(h)=\begin{cases}c_1,&\text{if }h\in[0,A_1],\\
c_i,&\text{if } h\in(A_{i-1},A_i],i\geq 2,\\
0, &\text{if } h\in(A_D,1]\end{cases}$, where $0=A_0\leq A_1\leq A_2\leq \cdots\leq A_D\leq 1$ and $c_1>c_2> \cdots>c_D> c_{D+1}=0$, $\sum_{i=1}^D(A_{i}-A_{i-1})c_i\leq Ct_j$.

We construct an interim allocation of the opponent by horizontal stretching method. For convenience, let $C'$ denote $\frac{t_{-j}}{Ct_j}$. Let $\hat{x}_{-j}(h)=\begin{cases}\hat{x}_j(h/C'),&\text{if }h\in[0,C'],\\0,&\text{if }h>C'\end{cases}$. Note that $\int_0^1\hat{x}_{-j}(h)dh\leq \frac1{C'}\int_0^1\hat{x}_j(h)dh\leq t_{-j}$, and $\hat{x}_{-j}(h)$ $C'$-dominates $\hat{x}_j(h)$.

Observe that $$\hat{x}_j^{-1}(x)=\begin{cases}0,&\text{if }x>c_1,\\
A_i,&\text{if }x\in(c_{i+1},c_i],i=1,2,\cdots,D,\\
1,&\text{if }x=0,
\end{cases}$$and $$\hat{x}_{-j}^{-1}(x)=\begin{cases}0,&\text{if }x>c_1,\\
C'A_i,&\text{if }x\in(c_{i+1},c_i],i=1,2,\cdots,D,\\
1,&\text{if }x=0.
\end{cases}$$
Let $\hat{H}_j^{Worst}(q)$ be the worst cumulative equilibrium behavior under $\hat{x}_j(q)$ and $\hat{x}_{-j}(h)$, as defined in \Cref{proposition:best-response-worst-and-best-equilibrium}, then with some calculations we have $$H_j^{Worst}(q)=\begin{cases}
A_{i-1},&\text{if }q\in[(1+C')A_{i-1},A_{i-1}+C'A_i],\\
q-C'A_i,&\text{if }q\in[A_{i-1}+C'A_i,(1+C')A_i],
\end{cases}$$
where $i=1,\cdots,D$.

For convenience assume $v_j(q)=0$ for any $q>1$, and we have \begin{align*}
&\int_0^1v_j(q)dH_{j}^{Worst}(q)\\
=&\sum_{i=1}^D\int_{A_{i-1}+C'A_i}^{(1+C')A_i}v_j(q)dq\\
\leq& \sum_{i=1}^D(A_{i}-A_{i-1})v_j(A_{i-1}+C'A_i)\\
\leq& \sum_{i=1}^DA_{i}v_j(A_{i-1}+C'A_i).
\end{align*}
Since $A_i\leq \frac1{C'}(A_{i-1}+C'A_i)$, we have $A_{i}v_j(A_{i-1}+C'A_i)\leq \frac1{C'}(A_{i-1}+C'A_i)v_j(A_{i-1}+C'A_i)\leq \frac1{C'}\max_{q\geq 0}qv_j(q)=\frac1{C'}$. Then we obtain that $\int_0^1v_j(q)dH_{j}^{Worst}(q)\leq \sum_{i=1}^DA_{i}v_j(A_{i-1}+C'A_i)\leq \frac{D}{C'}=\frac{CDt_j}{t_{-j}}$.

Since $\hat{x}_j(h)$ $1$-dominates $x_j(h)$, by \Cref{prop:relationship between w and b under C}, we have $$R^{Worst}(x_j(h)|\hat{x}_{-j}(h),v_j(q))\leq R^{Worst}(\hat{x}_j(h)|\hat{x}_{-j}(h),v_j(q))\leq\frac{CDt_j}{t_{-j}}.$$In addition, note that $\hat{x}_{-j}(h)\in\mathcal F_{t_{-j}}$, so by definition we have $$\mathrm{SL}_{t_{-j},v_j(q)}(x_j(h))\leq R^{Worst}(x_j(h)|\hat{x}_{-j}(h),v_j(q))\leq \frac{CDt_j}{t_{-j}}.$$
Therefore we obtain that $\max_{x_j(h)\in \mathcal F_{t_j}^{C,D-dom}}\mathrm{SL}_{t_{-j},v_j(q)}(x_j(h))\leq \frac{CDt_j}{t_{-j}}$.

Note that the discussion above is independent of the undetermined $M$ in the definition of $v_j(q)$. For any $\epsilon>0$, let $M=e^{\frac{CD(t_j+t_{-j})}{\epsilon t_{-j}}}$, then $\int_0^1v_j(q)dq=1+\ln M>\ln M=\frac{CD(t_j+t_{-j})}{\epsilon t_{-j}}$, and $\epsilon\frac{t_j}{t_j+t_{-j}}\int_0^1v_j(q)dq>\frac{CDt_j}{t_{-j}}\geq \max_{x_j(h)\in \mathcal F_{t_j}^{C,D-dom}}\mathrm{SL}_{t_{-j},v_j(q)}(x_j(h))$. This completes the proof.\qed
\end{proof}

\subsection{Proof of \Cref{theorem:16-competitive-safetylevel}}

\begin{proof}
%Let $\mathbb{N}$ denote $\{1,2,\cdots\}$
Let $M = v_j(0)$ and $K = \int_0^1 v_j(q)dq$. The upper bound of $\frac{t_j}{t_j+t_{-j}}K$ on the safety level is given by \Cref{proposition:safetylevel-trivial-upper-bound}. We construct a strategy $x_j(h)$ with constant-competitive safety level as follows.

First, we define a sequence $\{q_i\}$, where $q_i=\sup\{q\in[0,1]:v_j(q)\geq \frac{M}{2^{i-1}}\}$, for $i=1,2,\cdots$, and define $q_0=0$. Clearly, the sequence satisfies $q_0\leq q_1\leq q_2\leq\cdots$ and $\lim_{i\to+\infty}q_i\leq1$.
Then, we let $s_i=\int_{q_{i-1}}^{q_i}v_j(q)dq$, $c_i=\frac{CM(t_j+t_{-j})}{K2^{i-1}}$, and $a_i=\frac{t_js_i}{Kc_i}=\frac{t_js_i}{(t_j+t_{-j})C}/\frac{M}{2^{i-1}}$, where $C>0$ is an undetermined constant and will be determined later.

Let $A_i=\sum_{i'=1}^ia_{i'}$, we can construct $x_j(h)$ as:
\begin{align}\label{eqn: constrcution of safety level}
    x_j(h)=\begin{cases}
    c_i,&\text{if }\exists i\in\{1,2,\cdots\}\text{ such that }A_{i-1}\leq h<A_i; \\
    0,& \text{otherwise}.
\end{cases}
\end{align} 
It can be checked that $\int_0^1x_j(h)dh=\sum_{i=1}^{+\infty}a_ic_i=\sum_{i=1}^{+\infty}\frac{t_js_i}{K}=t_j$, so $x_j(h)\in\mathcal F_{t_j}$.

Since $s_i=\int_{q_{i-1}}^{q_i}v_j(q)dq \leq \frac{M}{2^{i-2}}(q_{i}-q_{i-1})$, we have \begin{align*}
    a_i=\frac{t_js_i}{(t_j+t_{-j})C}/\frac{M}{2^{i-1}}\leq \frac{2t_j}{C(t_j+t_{-j})}(q_i-q_{i-1}).
\end{align*}
When $C\geq 2$, $A_i=\sum_{i'=1}^ia_{i'}\leq \frac{2t_j}{C(t_j+t_{-j})} q_i\leq q_i$ holds.

Then, we prove the lower bound on the utility gained by $x_j(h)$, against any strategy of the opponent. For any $x_{-j}(h)\in\mathcal F_{t_{-j}}$, let $B_i$ denote $x_{-j}^{-1}(c_i)$, and define $B_0=0$. If $B_i>0$, we have $x_{-j}(h)\geq c_i$ holds for any $h\in[0,B_i)$. Define  $b_i=B_i-B_{i-1}$, then we have $$\sum_{i=1}^{+\infty}c_i b_i \leq\sum_{i=1}^{+\infty}\int_{B_{i-1}}^{B_i}x_{-j}(h)dh= \int_0^1 x_{-j}(h)dh \leq t_{-j}.$$

We say $i\in\{1,2,\cdots\}$ is \emph{good}, if $A_i+B_i\leq q_i$. We define the set of \emph{good} $i$'s as $\mathrm{GD}=\{i, A_i+B_i\leq q_i\}$. For any GCEB $\vec{H}(q)=(H_j(q),H_{-j}(q))\in\mathcal H^*(x_j(h),x_{-j}(h))$ and any $i\in\mathrm{GD}$, we have $Q(c_i)=x^{-1}_j(c_i)+{x}^{-1}_{-j}(c_i)=A_i+B_i\leq q_i$ and $H_j(Q(c_i))=x^{-1}(c_i)=A_i$.

Let $q'_i=Q(c_i)=A_i+B_i$, for $i=1,2, \cdots$ and $q'_0=0$. For a good $i$, we have $q'_i\leq q_i$, so $v(q)\geq \frac{M}{2^{i-1}}$ for any $q\in[0,q'_i)$. Thus, we have
\begin{align*}
\int_{q'_{i-1}}^{q'_i}v(q)dH_j(q)\geq \frac{M}{2^{i-1}}\int_{q'_{i-1}}^{q'_i}dH_j(q)
=\frac{M}{2^{i-1}}(A_i-A_{i-1})
=\frac{M}{2^{i-1}}a_i.
\end{align*}
Recall that $a_i=\frac{t_js_i}{(t_j+t_{-j})C}/\frac{M}{2^{i-1}}$, so $\int_{q'_{i-1}}^{q'_i}v(q)dH_j(q)\geq\frac{M}{2^{i-1}}a_i\geq\frac{t_js_i}{(t_j+t_{-j})C}$.
It follows that 
\begin{equation}\label{ieq: welfare of j}
\int_0^1v(q)dH_j(q)
\geq \sum_{i\in\mathrm{GD}}\int_{q'_{i-1}}^{q'_i}v(q)dH_j(q)
\geq \frac{t_j}{(t_j+t_{-j})C}\sum_{i\in\mathrm{GD}}s_i.
\end{equation}

Note that $K=\sum_{i=1}^{+\infty}s_i=\sum_{i\in\mathrm{GD}}s_i+\sum_{i\notin\mathrm{GD}}s_i$. Now we show that $\sum_{i\notin\mathrm{GD}}s_i$ is actually restricted due to the opponent's budget constraint.

For any $i\notin\mathrm{GD}$, we have $A_i+B_i>q_i$, which implies that $\sum_{i'=1}^{i}b_{i'}=B_i>q_i-A_i\geq (1-\frac{2t_j}{(t_j+t_{-j})C})q_i$.
Since $c_{i+1}=\frac12c_{i}$, we have
$$
    \sum_{i=1}^{+\infty}B_ic_i=\sum_{i=1}^{+\infty}\sum_{i'=1}^ib_{i'}c_i=\sum_{i=1}^{+\infty}b_i\sum_{i'=i}^{+\infty}c_{i'}
    =\sum_{i=1}^{+\infty}b_ic_i\sum_{i'=i}^{+\infty}2^{-(i'-i)}=2\sum_{i=1}^{+\infty}b_ic_i\leq 2t_{-j}.
$$
% $\sum_{i=1}^{+\infty}B_ic_i=\sum_{i=1}^{+\infty}\sum_{i'=1}^ib_{i'}c_i=\sum_{i=1}^{+\infty}b_i\sum_{i'=i}^{+\infty}c_{i'}=\sum_{i=1}^{+\infty}b_ic_i\sum_{i'=i}^{+\infty}2^{i-i'}=2\sum_{i=1}^{+\infty}b_ic_i\leq 2t_{-j}$.
On the other hand,
\begin{align*}
% &2t_{-j}\\
% \geq
\sum_{i=1}^{+\infty}B_ic_i \geq&\sum_{i\notin\mathrm{GD}}B_ic_i\\
>&\sum_{i\notin\mathrm{GD}}c_i\left(1-\frac{2t_j}{(t_j+t_{-j})C}\right)q_i\\
=&\left(1-\frac{2t_j}{(t_j+t_{-j})C}\right)\sum_{i\notin\mathrm{GD}}\frac{C(t_j+t_{-j})}{K}\frac{M}{2^{i-1}}q_i\\
=& \frac{C(t_j+t_{-j})-2t_j}{K}\sum_{i\notin\mathrm{GD}}\frac{M}{2^{i-1}}q_i.
\end{align*}
Combining the above two inequalities, we get $\sum_{i\notin\mathrm{GD}}\frac{M}{2^{i-1}}q_i\leq\frac{2t_{-j}K}{C(t_j+t_{-j})-2t_j}$. Recall that $s_i\leq \frac{M}{2^{i-2}(q_i-q_{i-1})}$ and we have
\begin{align*}
    \sum_{i\notin\mathrm{GD}}s_i\leq \sum_{i\notin\mathrm{GD}}\frac{M}{2^{i-2}}(q_i-q_{i-1}) 
    \leq \sum_{i\notin\mathrm{GD}}\frac{M}{2^{i-2}}q_i\leq \frac{4t_{-j}K}{C(t_j+t_{-j})-2t_j}.
\end{align*}
Thus, we obtain
\begin{align}\label{ieq: si of good}
    \sum_{i\in\mathrm{GD}}s_i\geq K-\sum_{i\notin\mathrm{GD}}s_i\geq \left(1-\frac{4t_{-j}}{C(t_j+t_{-j})-2t_j}\right)K.
\end{align}
By the inequalities (\ref{ieq: welfare of j}) and (\ref{ieq: si of good}), we have 
\begin{align*}
\int_0^1v(q)dH_j(q)
\geq& \frac{t_j}{(t_j+t_{-j})C}\sum_{i\in\mathrm{GD}}s_i\\
\geq& \left(1-\frac{4t_{-j}}{C(t_j+t_{-j})-2t_j}\right)\frac{t_j}{(t_j+t_{-j})C}K.
\end{align*}
Taking $C=\frac{\sqrt{4t_{-j}(2t_j+4t_{-j})}+2t_j+4t_{-j}}{t_j+t_{-j}}$, the above inequality changes to 
\begin{align}\label{ieq:safefy level t}
    \int_0^1v(q)dH_j(q)\geq\frac{t_j}{2t_j+8t_{-j}+2\sqrt{4t_{-j}(2t_j+4t_{-j})}}K.
\end{align}
If we set $p=\frac{t_j}{t_j+t_{-j}}$, the inequality (\ref{ieq:safefy level t}) is rewritten as 
\begin{align*}
    \int_0^1v(q)dH_j(q)\geq\frac{1}{8-6p+4\sqrt{(1-p)(4-2p)}}\frac{t_j}{t_j+t_{-j}}K. 
\end{align*}
Denote $sl(p)= 8-6p+4\sqrt{(1-p)(4-2p)}$. By now, we can see that $x_j(h)$ constructed by (\ref{eqn: constrcution of safety level}) is $sl(\frac{t_j}{t_j+t_{-j}})$-competitive, given the budgets $t_j$ and $t_{-j}$. Moreover, since $sl(p)$ is decreasing on $[0,1]$, it reach the maximum $sl(0)=16$ at $p=0$. In summary, %$\int_0^1v(q)dH(q)\geq\frac{1}{16}\frac{t_j}{t_j+t_{-j}}K$, i.e., 
$x_j(h)$ is a $16$-competitive strategy.
\qed
\end{proof}
\section{Missing Proofs in Section \ref{section:implement}}
\label{app:sec7}

Before proving \Cref{theorem:approximating-interim-allocation-with-rank-by-skill}, we first introduce two technical lemmas.
\subsection{Two Technical Lemmas}
Let $\xi_{n,k}(h)=\sum_{l=1}^k\binom{n-1}{l-1}h^{l-1}(1-h)^{n-l}$.
We have the following lemma:
\begin{lemma}\label{lemma:simple-contest-chernoff-bound}
For any $\delta>0$, for any $n,k$ such that $k>6\ln\frac1{\delta}$, let $\beta=\sqrt{\frac{6\ln \frac1{\delta}}{k}}$. Then for any $h\in [0,\frac{k}{(1+\beta)(n-1)}]$, $\xi_{n,k}(h)>1-\delta$.
\end{lemma}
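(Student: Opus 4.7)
The plan is to recognize $\xi_{n,k}(h)$ as a binomial tail probability and then apply a standard Chernoff bound. Specifically, if $Y\sim\mathrm{Bin}(n-1,h)$, then by pulling out the factor $\binom{n-1}{l-1}h^{l-1}(1-h)^{n-l}=\Pr[Y=l-1]$, we see that
\[
\xi_{n,k}(h)=\sum_{l=1}^k\Pr[Y=l-1]=\Pr[Y\le k-1],
\]
so the claim $\xi_{n,k}(h)>1-\delta$ is equivalent to $\Pr[Y\ge k]<\delta$.

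Next, since $\Pr[Y\ge k]$ is non-decreasing in the success probability $h$, it suffices to prove the bound at the endpoint $h=\frac{k}{(1+\beta)(n-1)}$, at which the mean $\mu:=(n-1)h$ equals $\frac{k}{1+\beta}$. The assumption $k>6\ln(1/\delta)$ is equivalent to $\beta<1$, so I can invoke the multiplicative Chernoff bound in its standard form: for $\beta\in(0,1]$,
\[
\Pr[Y\ge(1+\beta)\mu]\le\exp\!\left(-\frac{\beta^2\mu}{3}\right).
\]
Plugging in $(1+\beta)\mu=k$ and $\mu=\frac{k}{1+\beta}$, the right-hand side becomes
\[
\exp\!\left(-\frac{\beta^2 k}{3(1+\beta)}\right)=\exp\!\left(-\frac{2\ln(1/\delta)}{1+\beta}\right)=\delta^{\frac{2}{1+\beta}},
\]
where the middle equality uses the definition $\beta^2=\frac{6\ln(1/\delta)}{k}$.

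Finally, since $\beta<1$ we have $\frac{2}{1+\beta}>1$, and because $\delta\in(0,1)$ this yields $\delta^{2/(1+\beta)}<\delta$. Chaining the inequalities gives $\Pr[Y\ge k]<\delta$, which is the desired bound. There is no serious obstacle here: the only care needed is (i) verifying the binomial tail interpretation so that the direction of monotonicity in $h$ is correct, and (ii) checking that the condition $k>6\ln(1/\delta)$ lines up precisely with the $\beta\le 1$ regime in which the simple form of the Chernoff bound applies and the exponent $\frac{2}{1+\beta}$ exceeds $1$.
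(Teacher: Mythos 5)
Your proof is correct and takes essentially the same route as the paper: both interpret $\xi_{n,k}(h)$ as $\Pr[B(n-1,h)\le k-1]$, reduce by monotonicity to the endpoint $h=\frac{k}{(1+\beta)(n-1)}$, and apply a multiplicative Chernoff bound for $\Pr[Y\ge(1+\beta)\mu]$ with $\mu=\frac{k}{1+\beta}$. The only difference is cosmetic: you use the standard $e^{-\beta^2\mu/3}$ form (valid since $\beta<1$), obtaining $\delta^{2/(1+\beta)}<\delta$ directly, while the paper uses the Angluin--Valiant $e^{-\beta^2\mu/(\beta+2)}$ form and then bounds $1+3/\beta+2/\beta^2<6/\beta^2$, reaching the same conclusion.
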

\begin{proof}
Since $\xi_{n,k}(h)$ is non-increasing in $h$, without loss of generality we assume $h=\frac{k}{(1+\beta)(n-1)}$. When $k>6\ln\frac1{\delta}$, we have $\beta<1$.

Suppose a random variable $B$ which follows the binomial distribution $B(n-1,h)$.
Note that $\xi_{n,k}(h)=\Pr[B<k]=1-\Pr[B\geq k]$.

Since $\E[B]=(n-1)h=\frac{k}{1+\beta}$, by Chernoff bound (\cite{AV79}), we have
% \lny{due to Angluin and Valiant}
$$\Pr[B\geq k]=\Pr[B\geq(1+\beta)\E[B]]\leq e^{-\frac{\beta^2\E[B]}{\beta+2}}=e^{-\frac{k\beta^2}{(\beta+2)(\beta+1)}}.$$
Because $\beta<1$, we have $\frac{(\beta+2)(\beta+1)}{\beta^2}=1+3/\beta+2/\beta^2
< 6/\beta^2\leq\frac{\ln\frac1{\delta}}{k}$, therefore $\Pr[B\geq k]\leq e^{-\frac{k\beta^2}{(\beta+2)(\beta+1)}}<e^{-\ln\frac1{\delta}}=\delta$, and $\xi_{n,k}(h)=1-\Pr[B\geq k]\geq 1-\delta$.
\qed
\end{proof}

% \begin{lemma}
% \label{lemma:simple-contest-bound-2}
% For any $i\leq\sqrt(n)\land i\leq\frac12(n-1))$,\\
% $\xi_{n,i+1}(\frac{i}{n-1})\geq \frac1{2e}$.
% \end{lemma}
% \begin{proof}
% $\forall x\in(0,\frac12),\ln(1-x)\geq -x-x^2$.
% $\xi_{n,i+1}(\frac{i}{n-1})=\sum_{l=1}^{i+1}\binom{n-1}{l-1}(\frac{i}{n-1})^{l-1}(1-\frac{i}{n-1})^{n-l}$, let $a_l=\binom{n-1}{l-1}(\frac{i}{n-1})^{l-1}(1-\frac{i}{n-1})^{n-l}$, then $a_1=(1-\frac{i}{n-1})^{n-1}\geq e^{(n-1)(-\frac{i}{n-1}-(\frac{i}{n-1})^2)}\geq e^{-i-1}$, and
% $\forall l\geq 1$, $\frac{a_{l+1}}{a_l}=\frac{n-l}{l}\frac{i}{n-1-i}\geq\frac{i}{l}$. So $a_l\geq a_1\prod_{j=1}^{l-1}\frac{i}{j}\geq e^{-i-1}\frac{i^{l-1}}{(l-1)!}$.

% $\xi_{n,i+1}(\frac{i}{n-1})=\sum_{l=1}^{i+1}a_l\geq e^{-i-1}\sum_{l=1}^{i+1}\frac{i^{l-1}}{(l-1)!}=e^{-i-1}\sum_{l=0}^{i}\frac{i^l}{l!}=e^{-1}(1-e^{-i}\sum_{l=i+1}^{+\infty}\frac{i^l}{l!})$.\lny{$e^{-1}\frac{1}{2}$ by experience}

% $e^{-i}i^i/i!\approx 1/\sqrt{2\pi i},\sum_{l=i+1}^{2i}\frac{i^{l-i}i!}{l!}\leq$
% \end{proof}

% In practice, a designer $j$ would like to have a interim allocation $x_j(h)$ which can be implemented by rank-order allocation. The following theorem show that, given $v_j(q)$ and the budget constraints $t_j,t_{-j}$, when $n$ is sufficiently large, for any $x_{-j}(h)$, there is a asymptotically best response $x_j(h)$ which can be implemented by $n$-rank-order allocation.
\begin{lemma}\label{lemma:delta-C-dominate-bestresponse}
For $C\in(0,1]$, we say $x(h)$ $\delta,C$-dominates $\hat{x}(h)$, if
$$x(Ch)\geq \hat{x}(h),\forall h\in[\delta,1].$$
If $x_j(h)$ $\delta,C$-dominates $\hat{x}(h)$, i.e., $\forall h\in[\delta,1],x(Ch)\geq \hat{x}(h)$, for some $\delta>0,C\in(0,1]$, then for any $x_{-j}(h)$, for any $v_j(q)$, $R^{Best}(x_j(h)|x_{-j}(h),v_j(q))\geq CR^{Best}(\hat{x}_j(h)|x_{-j}(h),v_j(q))-\int_0^{C\delta} v_j(q)dq$.
\end{lemma}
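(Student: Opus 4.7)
The plan is to reduce the $\delta,C$-dominance situation to the ordinary $C$-dominance one (handled by \Cref{prop:relationship between w and b under C}) via an auxiliary interim allocation function, and then absorb the residual utility gap on the top $C\delta$ fraction of the quantile axis by a rearrangement inequality. Concretely, I would set $\bar{x}_j(h):=\hat{x}_j(h/C)$ for $h\in[0,C]$ and $\bar{x}_j(h):=0$ for $h\in(C,1]$ (a horizontal stretch of $\hat{x}_j$ by factor $C$), and then define $\tilde{x}_j(h):=\max\{x_j(h),\bar{x}_j(h)\}$. The first checks are that $\tilde{x}_j$ is non-negative and non-increasing, and that it $C$-dominates $\hat{x}_j$ on $[0,1]$: for $h\in[0,\delta]$ the bound $\tilde{x}_j(Ch)\geq\bar{x}_j(Ch)=\hat{x}_j(h)$ is immediate, and for $h\in[\delta,1]$ it follows from the $\delta,C$-dominance of $x_j$. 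Statement~1 of \Cref{prop:relationship between w and b under C} then yields $R^{Best}(\tilde{x}_j)\geq C\cdot R^{Best}(\hat{x}_j)$, so the lemma reduces to the gap bound $R^{Best}(\tilde{x}_j)-R^{Best}(x_j)\leq\int_0^{C\delta}v_j(q)\,dq$.

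The key structural observation for the gap bound is that $\tilde{x}_j(h)=x_j(h)$ on $[C\delta,1]$: on $[C\delta,C]$ the assumption (taking $h'=h/C\in[\delta,1]$) gives $x_j(h)\geq\hat{x}_j(h/C)=\bar{x}_j(h)$, and on $(C,1]$ the function $\bar{x}_j$ vanishes. Setting $X^*:=x_j(C\delta+0)$, this forces $\tilde{x}_j^{-1}(X)=x_j^{-1}(X)$ on $\{X\leq X^*\}$ and $\tilde{x}_j^{-1}(X)\leq C\delta$ on $\{X>X^*\}$; writing $\tilde{Q}(X):=\tilde{x}_j^{-1}(X)+x_{-j}^{-1}(X)$ and $q^*:=\tilde{Q}(X^*+0)$, \Cref{proposition:best-response-worst-and-best-equilibrium} then splits $[0,1]$ into two regimes. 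For $q>q^*$ one checks that $\tilde{Q}^{-1}(q)=Q^{-1}(q)\leq X^*$ and that the two min-formulas for $\tilde{H}_j^{Best}(q)$ and $H_j^{Best}(q)$ coincide term by term, so $\tilde{H}_j^{Best}(q)=H_j^{Best}(q)$. For $q\leq q^*$, I would show $\tilde{H}_j^{Best}(q)\leq C\delta$ by examining two sub-cases: if $\tilde{Q}^{-1}(q)>X^*$, then directly $\tilde{x}_j^{-1}(\tilde{Q}^{-1}(q))\leq C\delta$; if $\tilde{Q}^{-1}(q)=X^*$, the other branch of the min gives $q-x_{-j}^{-1}(X^*+0)\leq q^*-x_{-j}^{-1}(X^*+0)=\tilde{x}_j^{-1}(X^*+0)\leq C\delta$.

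Once the two regimes are established, the integrals of $v_j$ against $d\tilde{H}_j^{Best}$ and $dH_j^{Best}$ agree on $(q^*,1]$ (both cumulative behaviors are $1$-Lipschitz and hence continuous, so no jump term sneaks in at the boundary $q^*$), leaving $R^{Best}(\tilde{x}_j)-R^{Best}(x_j)\leq\int_0^{q^*}v_j\,d\tilde{H}_j^{Best}$. The closing ingredient is a rearrangement inequality: $\tilde{H}_j^{Best}$ restricted to $[0,q^*]$ is non-decreasing and $1$-Lipschitz with $\tilde{H}_j^{Best}(0)=0$ and $\tilde{H}_j^{Best}(q^*)\leq C\delta$, so the ``left-packed'' function $F^*(q):=\min\{q,\tilde{H}_j^{Best}(q^*)\}$ pointwise dominates $\tilde{H}_j^{Best}$ on $[0,q^*]$. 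Integration by parts, using that $v_j$ is non-increasing, then yields $\int_0^{q^*}v_j\,d\tilde{H}_j^{Best}\leq\int_0^{q^*}v_j\,dF^*=\int_0^{\tilde{H}_j^{Best}(q^*)}v_j(q)\,dq\leq\int_0^{C\delta}v_j(q)\,dq$, which combined with the first step gives the claim.

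The main obstacle I expect is the Case~B analysis of $\tilde{H}_j^{Best}(q^*)$: when $x_j$ has a flat piece at the critical level $X^*$ (so that $x_j^{-1}(X^*)$ can exceed $C\delta$), the ``interior'' branch of the min-formula is insufficient on its own, and one must exploit the other branch via the right-limit identity $q^*=\tilde{x}_j^{-1}(X^*+0)+x_{-j}^{-1}(X^*+0)$. Everything else --- the construction of $\tilde{x}_j$, the Case~A equivalence, and the rearrangement step --- then follows essentially mechanically from \Cref{prop:relationship between w and b under C} and \Cref{proposition:best-response-worst-and-best-equilibrium}.
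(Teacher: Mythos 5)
Your proposal is correct and follows essentially the same route as the paper: construct an auxiliary interim allocation function that coincides with $x_j$ on $[C\delta,1]$ and $C$-dominates $\hat{x}_j$, invoke \Cref{prop:relationship between w and b under C}, split the utility at $q^*=\tilde Q(X^*+0)$, and bound the low-quantile tail by the integral-comparison (rearrangement) lemma. The only difference is the auxiliary function: you take $\tilde x_j=\max\{x_j,\bar x_j\}$ with $\bar x_j$ a horizontal stretch of $\hat x_j$ (the minimal admissible raise), while the paper sets it to a large constant $M$ on the prefix and $x_j$ elsewhere---both work, and your version avoids a small inconsistency in the paper's choice of cutoff ($[0,\delta]$ versus the $X^*=x_j(C\delta)$ it later uses).
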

\begin{proof}
Let $M=\max(x_{-j}(0),\hat{x}_j(0))+1$.
Construct $\bar{x}(h)=\begin{cases}M,&\text{if }h\in[0,\delta];\\
x_j(h),&\text{if }h\in(\delta,1]\end{cases}$.

Since $\bar{x}(h)$ $C$-dominates $\hat{x}(h)$, by \Cref{prop:relationship between w and b under C}, $R^{Best}(\bar{x}_j(h))\geq CR^{Best}(\hat{x}_j(h))$.

Let $H_j^{Best}(q)$ be the best cumulative equilibrium for designer $j$ in $\mathcal H_j^*(x_j(h),x_{-j}(h))$, and $\hat{H}_j^{Best}(q)$ be the best cumulative equilibrium in $\mathcal H_j^*(\bar{x}_j(h),x_{-j}(h))$, as defined in \Cref{proposition:best-response-worst-and-best-equilibrium}.

Let $Q(x)=x_j^{-1}(x)+x_{-j}^{-1}(x)$, $\bar{Q}(x)=\bar{x}_j^{-1}(x)+x_{-j}^{-1}(x)$. Since $\bar{x}_j(h)\geq x_j(h)$, we have $\bar{x}_j^{-1}(x)\geq x_j^{-1}(x)$, and $\bar{Q}(x)\geq Q(x)$.

Let $X^*=x_j(C\delta)$. For any $X\leq X^*$, $\bar{x}_j^{-1}(X)=x_j^{-1}(X)\geq C\delta$.
For any $q\in[Q(X^*+0),Q(X^*)]$, $H_j^{Best}(q)=\min(q-x_{-j}^{-1}(X^*+0),x_j^{-1}(X^*))$. For any $q\in[\bar{Q}(X^*+0),\bar{Q}(X^*)]$, $\bar{H}_j^{Best}(q)=\min(q-x_{-j}^{-1}(X^*+0),\bar{x}_j^{-1}(X^*))$.

Let $q^*=x_{-j}^{-1}(X^*+0)+C\delta$, we have $q^*-x_{-j}^{-1}(X^*+0)=C\delta\leq \bar{x}_j^{-1}(X^*)=x_j^{-1}(X^*)$. Since $X^*=x_j(C\delta)=\bar{x}_j(C\delta)$, we have $x_j^{-1}(X^*+0)\leq C\delta$ and $\bar{x}_j^{-1}(X^*+0)=C\delta$. Then $Q(X^*+0)=x_j^{-1}(X^*+0)+x_{-j}^{-1}(X^*+0)\leq C\delta+x_{-j}^{-1}(X^*+0)=q^*$, and $\bar{Q}(X^*+0)=\bar{x}_j^{-1}(X^*+0)+x_{-j}^{-1}(X^*+0)=C\delta+x_{-j}^{-1}(X^*+0)=q^*$. Therefore $H_j^{Best}(q^*)=\min(q^*-x_{-j}^{-1}(X^*+0),x_j^{-1}(X^*))=C\delta$, and $\bar{H}_j^{Best}(q^*)=\min(q^*-x_{-j}^{-1}(X^*+0),\bar{x}_j^{-1}(X^*))=C\delta$.

For any $X'\leq X^*$, since $\bar{x}_j^{-1}(X')=x_j^{-1}(X')$, we have $Q(X')=\bar{Q}(X')$, thus for any $q'\geq q$, $Q^{-1}(q')=\bar{Q}^{-1}(q')$, so $H_j^{Best}(q')=\bar{H}_j^{Best}(q')$. So we obtain
\begin{align*}\int_0^1v_j(q)dH_j^{Best}(q)
\geq& \int_{q^*}^1v_j(q)dH_j^{Best}(q)\\
=&\int_{q^*}^1v_j(q)d\bar{H}_j^{Best}(q)\\
\geq&\int_{0}^1v_j(q)d\bar{H}_j^{Best}(q)-\int_{0}^{q^*}v_j(q)d\bar{H}_j^{Best}(q)
%\geq& CR^{Best}(\hat{x}_j(h);x_{-j}(h);v_j(q))-\int_{0}^{q^*}v_j(q)d\bar{H}_j^{Best}(q)
\end{align*}
Let $g(q)=\min\{q,C\delta\}$, for any $q\in[0,q^*]$, since $\bar{H}_j^{Best}(q)\leq q$ and $\bar{H}_j^{Best}(q)\leq \bar{H}_j^{Best}(q^*)=C\delta$, we have $g(q)\geq \bar{H}_j^{Best}(q)$. By \Cref{lemma:integral-compare},  $\int_0^{q^*}v_j(q)d\bar{H}_j^{Best}(q)\leq \int_0^{q^*}v_j(q)dg(q)=\int_0^{C\delta}v_j(q)dq$. Recall that $\int_{0}^1v_j(q)d\bar{H}_j^{Best}(q)\geq CR^{Best}(\hat{x}_j(h))$, we obtain $\int_0^1v_j(q)dH_j^{Best}(q)\geq CR^{Best}(\hat{x}_j(h))-\int_0^{C\delta}v_j(q)dq$.
\qed
\end{proof}

\subsection{Proof of \Cref{theorem:approximating-interim-allocation-with-rank-by-skill}}
\begin{proof}

First we note that for any $\vec{w}=(w_1,\cdots,w_n)$, for arbitrarily small $\epsilon>0$, we can construct $\vec{w}'=((1-\epsilon)w_1+\epsilon t_j,\cdots,(1-\epsilon)w_n+\epsilon t_j)$, then $x_{\vec{w}'}(h)$ $(1-\epsilon)$-strongly-dominates $x_{\vec{w}}(h)$, and then $R^{Worst}(x_{\vec{w}'}(h))\geq (1-\epsilon) R^{Best}(x_{\vec{w}}(h))$. Therefore we only need to construct $\vec{w}$ such that
$R^{Best}(x_{\vec{w}}(h)|x_{-j},v_j(q))\geq (1-r(n)) R^{Best}(x_j(h)|x_{-j},v_j(q))$

%By \Cref{proposition:best-response-trivial-lowerbound}, 
Let $C\in(0,1)$ be an undetermined parameter.
For sufficiently large $n$, let $\delta=\frac1n$, $k_1=\lfloor n^{\frac23}\rfloor$, $\beta=\sqrt{\frac{6\ln \frac1{\delta}}{k_1}}$. %[0,1/(n-1)),[1/(n-1),2/(n-1)),...[(k1-1)/(n-1),k1/(n-1))

%[(i-1)/(n-1),i/(n-1))
For any $i\geq k_1$, take $p(i)=\min(\lceil(1+\beta)i\rceil,n)$.

When $p(i)<n$, $\frac{i}{n-1}\leq \frac{p(i)}{(1+\beta)(n-1)}\leq \frac{p(i)}{(1+\sqrt{\frac{6\ln \frac1{\delta}}{p(i)}})(n-1)}$, so by \Cref{lemma:simple-contest-chernoff-bound}, we have $\xi_{n,p(i)}(\frac{i}{n-1})\geq 1-\delta$, and $\int_0^1\xi_{n,p(i)}(h)dh=\frac{p(i)}{n}\leq \frac1n+\frac{i(1+\beta)}{n}\leq \frac{(i+1)(1+\beta)}{n}$.

When $p(i)=n$, $\xi_{n,p(i)}(\frac{i}{n-1})=1$, and $i>\frac{n-1}{(1+\beta)}$, so $\int_0^1\xi_{n,p(i)}(h)dh=1\leq (1+\beta)\frac{i}{n-1}$.

In either case, $\xi_{n,p(i)}(\frac{i}{n-1})\geq 1-\delta$ and $\int_0^1\xi_{n,p(i)}(h)dh\leq (1+\beta)\frac{i+1}{n}$.

Given $x_j(h)$, let $\bar{x}(h)=\begin{cases}
x_j(h/C),&\text{if }h\leq C,\\
0,&\text{if }h>C,
\end{cases}$, then $\bar{x}(h)$ $C$-dominates $x_j(h)$. Note that $\bar{x}(1)=0$.

For $k=1,2,\cdots,n$, let
$$d_k=\sum_{i\in\{k_1,k_1+1,\cdots,n-1\}:p(i)=k}\frac{1}{1-\delta}\left(\bar{x}(\frac{i-1}{n-1})-\bar{x}(\frac{i}{n-1})\right),$$
let $w_k=\sum_{j=k}^{n}d_k$, $\vec{w}=(w_1,w_2,\cdots,w_n)$.
Then $x_{\vec{w}}(h)=\sum_{k=1}^nw_k\binom{n-1}{k-1}h^{k-1}(1-h)^{n-k}=\sum_{k=1}^nd_k\xi_{n,k}(h)$.

By previous discussion we know that for any $i\in\{k_1,k_1+1,\cdots,n-1\}$, $\forall h\in[\frac{i-1}{n-1},\frac{i}{n-1}]$, $x_{\vec{w}}(h)\geq \sum_{k=p(i)}^{n}d_k\xi_{n,k}(h)\geq \sum_{i'=i}^{n-1}(\bar{x}(\frac{i'-1}{n-1})-\bar{x}(\frac{i'}{n-1}))=\bar{x}(\frac{i-1}{n-1})\geq\bar{x}(h)$, so $\forall h\in[\frac{k_1-1}{n-1},1]$, $x_{\vec{w}}(h)\geq \bar{x}(h)$. In other words, $\forall h\in[\frac{k_1-1}{C(n-1)},1]$, $x_{\vec{w}}(Ch)\geq \bar{x}(Ch)\geq x_j(h)$, so $x_{\vec{w}}(h)$ $\frac{k_1-1}{C(n-1)},C$-dominates $x(h)$.

Next we calculate the budget requirement of $x_{\vec{w}}(h)$.
\begin{align*}
&\int_0^1x_{\vec{w}}(h)dh\\
=&\sum_{k=1}^nd_k\int_0^1\xi_{n,k}(h)dh\\
=&\frac{1}{1-\delta}\sum_{i=k_1}^{n-1}\left(\bar{x}(\frac{i-1}{n-1})-\bar{x}(\frac{i}{n-1})\right)\int_0^1\xi_{n,p(i)}(h)dh\\
\leq &\frac{1}{1-\delta}\sum_{i=k_1}^{n-1}\left(\bar{x}(\frac{i-1}{n-1})-\bar{x}(\frac{i}{n-1})\right)(1+\beta)\frac{i+1}{n}\\
=& \frac{1+\beta}{1-\delta}\sum_{i=k_1}^{n-1}\left(\bar{x}(\frac{i-1}{n-1})-\bar{x}(\frac{i}{n-1})\right)\frac{i+1}{n}\\
\leq&\frac{1+\beta}{1-\delta}\frac{k_1+1}{k_1-1}\sum_{i=k_1}^{n-1}\left(\bar{x}(\frac{i-1}{n-1})-\bar{x}(\frac{i}{n-1})\right)\frac{i-1}{n}\\
=&\frac{1+\beta}{1-\delta}\frac{k_1+1}{k_1-1}
\left(\bar{x}(\frac{k_1-1}{n-1})\frac{k_1-1}n+\sum_{i=k_1}^{n-2}\bar{x}(\frac{i}{n-1})\frac{1}{n}\right)\\
\leq&\frac{1+\beta}{1-\delta}\frac{k_1+1}{k_1-1}\frac{n-1}{n}
\int_0^1\bar{x}(h)dh\\
=&\frac{1+\beta}{1-\delta}\frac{k_1+1}{k_1-1}\frac{n-1}{n}C\int_0^1x_j(h)dh
\end{align*}

Take $C=\frac{1-\delta}{1+\beta}\frac{k_1-1}{k_1+1}\frac{n}{n-1}$, then $\int_0^1x_{\vec{w}}(h)dh\leq \int_0^1x_j(h)dh$.

Since $C=\frac{1-\frac1n}{1+\sqrt{\frac{6\ln n}{k_1}}}\frac{k_1-1}{k_1+1}\frac{n}{n-1}=\frac1{1+\sqrt{\frac{6\ln n}{k_1}}}\frac{k_1-1}{k_1+1}$, and $x_{\vec{w}}(h)$ $\frac{k_1-1}{C(n-1)},C$-dominates $x(h)$, by \Cref{lemma:delta-C-dominate-bestresponse}, $R^{Best}(x_{\vec{w}}(h);x_{-j}(h);v_j(q))\geq CR^{Best}(x_j(h);x_{-j}(h);v_j(q))-\int_0^{C\frac{k_1-1}{C(n-1)}} v_j(q)dq$.
By assumption $R^{Best}(x_j(h);x_{-j}(h);v_j(q))\geq \frac1{D}K^*$, so we have $\int_0^{C\frac{k_1-1}{C(n-1)}} v_j(q)dq=\int_0^{\frac{k_1-1}{n-1}} v_j(q)dq\leq M\frac{k_1-1}{n-1}\leq \frac{DM}{K^*}\frac{k_1-1}{n-1}R^{Best}(x_j(h);x_{-j}(h);v_j(q))$. Thus we have $\frac{R^{Best}(x_{\vec{w}}(h);x_{-j}(h);v_j(q))}{R^{Best}(x_j(h);x_{-j}(h);v_j(q))}\geq C-\frac{DM}{K^*}\frac{k_1-1}{n-1}
=\frac1{1+\sqrt{\frac{6\ln n}{k_1}}}$ $\frac{k_1-1}{k_1+1}-\frac{DM}{K^*}\frac{k_1-1}{n-1}
=(1-O(\sqrt{\frac{\ln n}{n^{\frac23}}}))(1-O(\frac1{n^{\frac23}}))-\frac{DM}{K^*}O(n^{-\frac13})=1-O((\frac{DM}{K^*}+\sqrt{\ln n})n^{-\frac13})$
\qed
\end{proof}

\end{document}